\long\def\@makefntext#1{\parindent 1em\noindent
        \hb@xt@1.8em{%
            \hss\@textsuperscript{\normalfont\@thefnmark}}#1}%
\def\@makefnmark{\hbox{$^{\arabic{footnote}}\m@th$}}
\def\@thefnmark{\arabic{footnote}}
\let\csname equation*\endcsname\relax
\let\csname endequation*\endcsname\relax
\newtheorem{mydef}{Definition}
\newtheorem{theo}[mydef]{Theorem}
\newtheorem{lem}[mydef]{Lemma}
\newtheorem{corol}[mydef]{Corollary}
\definecolor{orcidlogocol}{HTML}{A6CE39}
\tikzset{
  orcidlogo/.pic={
    \fill[orcidlogocol] svg{M256,128c0,70.7-57.3,128-128,128C57.3,256,0,198.7,0,128C0,57.3,57.3,0,128,0C198.7,0,256,57.3,256,128z};
    \fill[white] svg{M86.3,186.2H70.9V79.1h15.4v48.4V186.2z}
                 svg{M108.9,79.1h41.6c39.6,0,57,28.3,57,53.6c0,27.5-21.5,53.6-56.8,53.6h-41.8V79.1z M124.3,172.4h24.5c34.9,0,42.9-26.5,42.9-39.7c0-21.5-13.7-39.7-43.7-39.7h-23.7V172.4z}
                 svg{M88.7,56.8c0,5.5-4.5,10.1-10.1,10.1c-5.6,0-10.1-4.6-10.1-10.1c0-5.6,4.5-10.1,10.1-10.1C84.2,46.7,88.7,51.3,88.7,56.8z};
  }
}
\newcommand\orcidicon[1]{\href{https://orcid.org/#1}{\mbox{\scalerel*{
\begin{tikzpicture}[yscale=-1,transform shape]
\pic{orcidlogo};
\end{tikzpicture}
}{|}}}}
\DeclareMathOperator{\ch}{ch}
\DeclareMathOperator{\openone}{1\!\!\!\!1}
\begin{document}
\title{Weakly coupled local particle detectors cannot harvest entanglement}
\author{Maximilian H. Ruep \orcidicon{0000-0001-6866-4506}}
\address{Department of Mathematics, University of York, Heslington, York YO10 5DD, United Kingdom
}
\ead{maximilian.ruep@york.ac.uk}
\vspace{10pt}
\begin{indented}
\item \today
\end{indented}

\begin{abstract}
Many states of linear real scalar quantum fields (in particular Reeh-Schlieder states) on flat as well as curved spacetime are entangled on spacelike separated local algebras of observables. It has been argued that this entanglement can be ``harvested'' by a pair of so-called particle detectors, for example \emph{singularly} or \emph{non-locally} coupled quantum mechanical harmonic oscillator Unruh detectors. In an attempt to avoid such imperfect coupling, we analyse a model-independent local and covariant entanglement harvesting protocol based on the local probes of a recently proposed measurement theory of quantum fields. We then introduce the notion of a local particle detector concretely given by a local mode of a linear real scalar probe field on possibly curved spacetime and possibly under the influence of external fields. In a non-perturbative analysis we find that local particle detectors cannot harvest entanglement below a critical coupling strength when the corresponding probe fields are initially prepared in quasi-free Reeh-Schlieder states and are coupled to a system field prepared in a quasi-free state. This is a consequence of the fact that Reeh-Schlieder states restrict to truly mixed states on any local mode.
\end{abstract}

\vspace{2pc}
\noindent{\it Keywords}: {quantum field theory on curved spacetime}, local particle detector{s}, local mode{s}, entanglement harvesting, Reeh-Schlieder states, non-perturbative analysis

\section{Introduction}

Entanglement is an intrinsic feature of relativistic quantum theory. It is very well known that the class of Reeh-Schlieder states (see Sec.~\ref{sec_subsec_Reeh-Schlieder_mixed} for a precise definition) of algebraic quantum field theories on possibly curved spacetime are entangled on {any} two spacelike separated local observable algebras{~\cite{VERCH_2005,halvorson_clifton200generic}}. This applies in particular to the vacuum of Klein-Gordon fields in Minkowski spacetime~\cite{VERCH_2005, SUMMERS1985257,summers1987, Summers1987I,Summers1987II, summersAIHPA_1988}, and similarly to any other states of bounded energy~\cite{haag2012local}. It has been argued that the entanglement of a quantum field (system) can be accessed via an \emph{entanglement harvesting} protocol, in which two agents couple two ``physical structures'' (probes), initially in an uncorrelated product state, to a quantum field in two spacelike separated regions and end up in an entangled final state{~\cite{VALENTINI1991, Reznik2003}}. {(See also~\cite{Benatti2004} for a different scenario, in which the internal degrees of freedom of a single probe get entangled.)}

Entanglement harvesting has been demonstrated theoretically {on} flat as well as curved spacetime where the ``physical structures'' are explicitly given by either a pair of two-level systems/qubits or a pair of quantum mechanical harmonic oscillators (Unruh detectors)\footnote{In~\cite{Unruh1976}, Unruh introduced (in addition to a relativistic model) a non-relativistic particle in a box as a particle detector. DeWitt took up this idea and discussed general point-like detector models with discrete internal energy states with a monopole type coupling in~\cite{DeWitt1979}. Unruh and Wald discussed a concrete realisation by a two-level system in~\cite{UnruhWald1984}.} coupled to a Klein-Gordon field, see for instance~\cite{Brown2013, Pozas-Kerstjens_2015, Henderson_2018}. Each of these non-relativistic quantum mechanical particle detectors is typically initially prepared in its \emph{pure} ground state. The final state of a pair of them is then analysed in perturbation theory and, at leading order, is found to be entangled \emph{independent of the coupling strength}. However, it is very well known that the intrinsic non-relativistic nature of the above detector models results in a \emph{singular} coupling \emph{on} or a \emph{non-local} coupling \emph{around} \emph{a} worldline: the underlying (classical) equation of motion of the coupled detector-quantum field structure is either singular or non-local~\cite{fewster-non-local}. Hence we refer to them as \emph{non-local particle detectors}. Despite the fact that this non-locality in the latter case can be controlled~\cite{Mart_n_Mart_nez_2015, deramon2021relativistic}, it is clear that non-local particle detectors can only function as either purely mathematical/technical tools or as non-relativistic approximations with limited area of applicability, unless one is willing to give up on the principle of locality (or ready to deal with singular equations). We also emphasise that the Unruh effect as well as Hawking radiation can both be derived without utilising non-local particle detectors, see in particular~Sec.~II.~in~\cite{FredenhagenHaag1987generally} and~\cite{ FredenhagenHaag1990hawking}.

The purpose of our paper is to circumvent such imperfections by applying the idea of entanglement harvesting to \emph{local probes} in the sense of~\cite{fewster2018quantum}. We show that this allows for a model-independent and local analysis of the corresponding protocol on flat as well as curved spacetime. The system as well as the two probes of the two agents ($\mathsf{A}, \mathsf{B}$) are all modelled by local quantum field theories on a globally hyperbolic spacetime $M$. The interaction between probe $\mathsf{J} \in \{\mathsf{A},\mathsf{B}\}$ and the system is assumed to be local and engineered in such a way that it is only active in a compact \emph{coupling zone} $K_\mathsf{J} \subseteq M$. It is unreasonable to assume that the agents have access to \emph{all} degrees of freedom of their respective local probe ``after'' the interaction with the system (in general this would either require infinite spatial extension or infinite amount of time). Hence we associate to agent $\mathsf{J}$ a ``processing region'' $N_\mathsf{J} \subseteq M$, characterising the spatial and temporal extension of agent $\mathsf{J}$'s interaction with the probe. This processing could for instance be a measurement of a probe observable localisable in $N_\mathsf{J}$ but may also be a more general interaction. In this model-independent formulation, entanglement harvesting corresponds to the entanglement of the final state of the two probes restricted to the combination of $\mathsf{A}$'s local observable algebra of $N_\mathsf{A}$ and $\mathsf{B}$'s local observable algebra of $N_\mathsf{B}$.

The use of non-local particle detectors follows in principle the same sentiment but differs in a crucial way from the \emph{local} protocol: the internal dynamics of, e.g., the Unruh detector is characterised by the flow of time only, i.e., it can be seen as a $0+1$-dimensional theory on $\mathbb{R}$ as opposed to that of a local probe, which lives (just as the system) on $M$. In this context the processing region can be seen as an interval of (or even a point in) time $\mathbb{R}$ \emph{after} the interaction. In particular, independent of the choice of processing region, the agent has always access to all probe degrees of freedom (in the case of the Unruh detector only one single mode). This is an artefact of the simplified internal dynamics of this detector model, which is completely decoupled from the spacetime picture.

We further discuss an explicit probe model given by a linear real scalar field (possibly under the influence of external fields). We introduce the notion of a \emph{local} particle detector by restricting our attention to certain local observables, i.e., to a single mode of the whole field localisable in the chosen processing region of spacetime. By choosing a local mode we single out one degree of freedom, a harmonic oscillator immersed in the field, that we deem accessible to an experimenter. In general, due to the locality of the field, the accessible local observables strongly depend on the choice of processing region. Heuristically, the spatial extent of different spacetime localisation regions of a local observable expands, which poses a challenge for experimenters: minimising dispersive effects throughout preparation of the probe, interaction with the system and final processing requires careful experimental design. By choosing a general linear real scalar field, we are equipped with the possibility to include external fields (for instance, a spacetime dependent mass), which can be used to model an experimental setup such as a cavity or a trap. We briefly sketch this idea, however, a detailed analysis is beyond the scope of our paper.

Having defined a local particle detector, we can now ask the question how to characterise reasonable initial preparation states. We show that whenever the probe field is initially prepared in a Reeh-Schlieder state, then the state of any local mode is truly \emph{mixed} (in particular not its ground state). For instance ground and KMS (thermal) states of Klein-Gordon fields on stationary spacetimes have the Reeh-Schlieder property~\cite{Strohmaier2000} and {moreover can be taken to be} quasi-free (Gaussian)\footnote{In fact, for Klein-Gordon fields on stationary spacetimes it holds that $C^2$-regular ground and extremal $C^1$-regular KMS states with vanishing one-point function are quasi-free (with vanishing one-point function), see~\cite{sanders2013thermal} for the results and the relevant definitions.}. We then consider the entanglement harvesting protocol for a linear real scalar system field prepared in a quasi-free state and two bilinearly coupled linear real scalar probe fields each prepared in a quasi-free Reeh-Schlieder state. The final state of the two local particle detectors is then quasi-free as well and its entanglement can be fully analysed~\cite{Simon_2000}. We find that in the described scenario local particle detectors cannot harvest entanglement at arbitrarily small coupling: for fixed local particle detectors and fixed preparation states of probes and system there exists a critical coupling strength below which entanglement harvesting is impossible\footnote{This is not to be confused with the ``entanglement death-zone'' of for instance~\cite{Cong_2019}, where no entanglement can be harvested by non-local particle detectors near moving mirrors. The coupling strength threshold for entanglement harvesting  we report on here is in the coupling parameter space.}. This is the main result of our paper.

For the convenience of the reader we give an outline of the following text.

In Sec.~\ref{Sec_local_probes} we first give an introduction to the FV framework (named after the authors of~\cite{fewster2018quantum}), its local probes and the notion of entanglement in the algebraic approach to physics. 

In Sec.~\ref{Sec_Model_indep_E_H} we show that local probes are an effective tool to formulate the entanglement harvesting protocol in a local and covariant manner. We consider two agents $\mathsf{A}, \mathsf{B}$ and rigorously establish necessary conditions on the relative causal relationship between the compact coupling zones $K_\mathsf{A}, K_\mathsf{B}$ and the processing regions $N_\mathsf{A}, N_\mathsf{B}$ in order to harvest any correlation. In particular, in the case of spacelike separated coupling zones we show that in order to harvest classical correlation [entanglement], the initially prepared system state must be classically correlated [entangled] on the combination of system observables localisable in (a connected region containing) $K_\mathsf{A}$  with the system observables localisable in (a connected region containing) $K_\mathsf{B}$. These reasonable results show that entanglement harvesting is no ``spooky action at a distance''.

In Sec.~\ref{Sec_local_particle_detectors} we consider the covariantly quantised linear real scalar field on possibly curved spacetime and possibly under the influence of external fields as a local probe and show how the restriction to one \emph{local} mode (in the processing region) can be viewed as a \emph{local} particle detector. We crucially demonstrate that every Reeh-Schlieder state restricts to a truly mixed state on any local mode. In particular, our interpretation is that local particle detectors cannot be (physically reasonably) prepared in their ground state resulting in the presence of underlying noise in any measured signal.

Sec.~\ref{Sec_E_H_local} consists of the analysis of the entanglement harvesting protocol for local particle detectors, whose associated probe field is bilinearly coupled to a linear real scalar quantum (system) field. In the case of quasi-free preparation states of system and probes the final state of the two local particle detectors is quasi-free and can be fully analysed~\cite{Vidal_2002, Simon_2000}. It turns out that the purity of the initial probe states when restricted to the local modes plays an important role. In the case in which the initial states of the local particle detectors are \emph{not} pure, a non-perturbative analysis shows that for fixed modes and fixed initial probe and system states there exists a coupling strength threshold for entanglement harvesting. Below a certain critical coupling strength no entanglement can be harvested. This is in particular the case for {initial} quasi-free Reeh-Schlieder probe states. We also indicate how a further perturbative analysis can be performed, which might give insight into the magnitude of the critical coupling before we give an outlook and conclude in Sec.~\ref{Sec_outlook}.

\section{Local probes}
\label{Sec_local_probes}

This section serves as a minimal introduction to the previously mentioned FV framework of local probes. The framework was introduced in~\cite{fewster2018quantum}, see also~\cite{fewster2019generally} for a summary and~\cite{bostelmann2020impossible} for a heuristic overview and a concise presentation. 

The reason behind using local probes is the same operationally motivated idea as for non-local particle detectors: any measurement of, or even any interaction with a physical system of interest (e.g.\ a quantum field) is performed via coupling a (measurement) device ``for a certain period of time'' to the system. Afterwards the device itself may be subject to further investigation or processing (such as reading off a pointer value, etc.) and is then discarded (traced out)\footnote{In the entanglement harvesting protocol the focus lies on the state of the two probes after the interaction, so we will naturally discard the system rather than the probes.}. The crucial point is that in the FV framework, the measurement device is given by a \emph{bona fide} local probe theory, which is coupled to the system of interest (e.g. the local quantum field) in a \emph{compact} coupling zone of spacetime in a \emph{local} manner. This idea has been formalised and cast in the general language of algebraic quantum field theory (AQFT) on possibly curved spacetime (or likewise in the presence of external fields) in~\cite{fewster2018quantum}. The emerging framework has been successfully employed in~\cite{bostelmann2020impossible}, where it was shown to implement multiple successive measurements of (or more generally operations on) quantum fields in a fully covariant and causal way, thereby avoiding superluminal signalling issues as raised in~\cite{sorkin1993impossible}.

For the reader's convenience we now quickly recall some crucial notions of Lorentzian geometry as well as AQFT with a focus on bipartite systems and entanglement of states on them. 

\subsection{AQFT {on} curved spacetime}

\subsubsection{Lorentzian geometry} 

A globally hyperbolic spacetime $M$ is a time-oriented Lorentzian spacetime of dimension greater or equal two that contains a Cauchy surface. We fix a globally hyperbolic $M$; for $N \subseteq M$ we denote by $J^+(N)$ and $J^-(N)$ its causal future and past in $M$ respectively and by $\ch(N):= J^+(N) \cap J^-(N)$ its causal hull. $N$ is called causally convex (in $M$) if $N=\ch(N)$. Non-empty open causally convex subsets of $M$ are called \emph{regions} and are globally hyperbolic spacetimes in their own right. The causal complement of a subset $K$ is $K^\perp := M \setminus (J^+(K) \cup J^-(K))$. For a compact subset $K$, the sets $M \setminus J^\mp(K)$ and $K^\perp$ are either empty or regions, see, for example, the Appendix~A of~\cite{fewster2012dynamical} for details and proofs.

\subsubsection{Algebraic quantum field theory}

In the algebraic model-independent approach to physics, the collection of observables of a physical system (classical or quantum) has the structure of a unital $*$-algebra $\mathcal{A}$. Concrete realisations are, e.g., complex-valued functions on phase space in classical mechanics, complex-valued functionals of field-configurations in classical field theory or operators on a Hilbert space in quantum mechanics. In AQFT, this idea is paired with the notion of \emph{locality}. (Hence AQFT is also known as \emph{local quantum physics} (LQP)~\cite{haag2012local}.) 

For a globally hyperbolic spacetime $M$ an algebraic quantum field theory (AQFT)\footnote{This convention also allows classical structures, i.e. Abelian algebras to constitute an AQFT.}, or simply a \emph{theory}, on $M$ consists of a $*$-algebra $\mathcal{A}$ with unit $\openone$ and a collection of unital $*$-subalgebras $\mathcal{A}(N) \subseteq \mathcal{A}$ indexed by regions $N\subseteq M$ with $\mathcal{A}(M) = \mathcal{A}$\footnote{Note that other authors (e.g.~\cite{Baer2007wave}) use $M$ together with \emph{precompact} regions as index set. Our choice, i.e., the choice of~\cite{fewster2018quantum}, is best-suited for the purpose of this paper, as it allows us to immediately consider the local algebras of non-precompact regions such as $M^\pm$, see Eq.~\eqref{eq_arrows}.}. For instance, as we will discuss in Sec.~\ref{Sec_local_particle_detectors}, the elements of the {polynomial field-$*$-algebra} of a region $N$ of the theory of the linear real scalar quantum field are algebraic combinations of smeared-out fields formally written as `$\int_N f(x) \phi(x) \; \mathrm{d}x$' for the quantum field $\phi$ and a test function $f$ {with compact support in} $N$. (Despite the fact that $\mathcal{A}(N)$ is sometimes called ``local algebra of observables of $N$'', obviously only the self-adjoint elements are considered to be observables.) The example of the smeared-ou{t} fields motivates the following model-independent axioms of a theory: 

\paragraph*{Isotony:} For regions $N_1 \subseteq N_2$: $\mathcal{A}(N_1) \subseteq \mathcal{A}(N_2)$.

\paragraph*{Einstein causality:} For spacelike separated regions $N_1$ and $N_2$: the elements of $\mathcal{A}(N_1)$ commute with the elements of $\mathcal{A}(N_2)$. 

\paragraph*{Time-slice property:} For regions $N_1 \subseteq N_2$, so that $N_1$ contains a Cauchy surface for $N_2$: $\mathcal{A}(N_1) = \mathcal{A}(N_2)$.\\

The time-slice property expresses the existence of an underlying local dynamical law. It is motivated by the idea that a quantum field should be determined by its data ``on'' (or rather \emph{around}) a Cauchy surface, hence any observable that is localisable in the domain of dependence (Cauchy development) of said Cauchy surface can be expressed in terms of observables ``on'' (or rather \emph{around}) it. Note that we are in the Heisenberg picture.

For the sake of completeness we mention that in the FV framework, every AQFT is also assumed to have a \emph{Haag property}{\footnote{Inspired by Haag duality~\cite{haag2012local} in the von Neumann setting: whenever an element $A \in \mathcal{A}$ commutes with all elements localisable in the causal complement of a compact set $K \subseteq M$, then $A \in \mathcal{A}(L)$ for every \emph{connected} region $L$ that contains $K$, see also footnote [13] in~\cite{bostelmann2020impossible}.\label{footnote_Haag}}}, which heuristically guarantees that the theory captures all relevant degrees of freedom, see Sec.~2 in~\cite{fewster2018quantum}. It is a mild additional assumption{\footnote{However, this as well as further desired properties might fail in free electromagnetism~\cite{DappiaggiLang2012,SandersDappiaggiHack2014}.}} that is for instance fulfilled by the linear real scalar quantum field as shown in Appendix~C in~\cite{fewster2018quantum}.

A \emph{state} on a theory (or simply on a unital $*$-algebra $\mathcal{A}$) is a linear map $\omega:\mathcal{A} \to \mathbb{C}$ which is normalised, i.e., $\omega(\openone)=1$ and positive, i.e., $\forall A \in\mathcal{A}:\; \omega(A^\dagger A) \geq 0$. We interpret it to assign expectation values to observables, see also~\cite{Drago_2020}. If the algebra elements are represented as operators on a Hilbert space, then a state $\omega$ could be of the from $\omega(A) = \Tr \qty(\rho_\omega A)$ for a ``density matrix'' $\rho_\omega$.

A state $\omega$ is called \emph{pure}, if for every two states $\omega_1, \omega_2$ and for every $\lambda \in (0,1)$ we have that $\omega = \lambda \omega_1 + (1- \lambda) \omega_2 \implies \omega_1 = \omega_2$. A state is called \emph{mixed} if it is not pure.

\subsubsection{Free combination of theories and entanglement}

Let us now discuss how we can combine two individually well-defined physical systems and how entanglement of those systems may be described. We take two unital $*$-algebras $\mathcal{A}_1$ and $\mathcal{A}_2$, for instance each the collection of local observables in region $N$ of two different theories on $M$. Their combination is given by the algebraic tensor product $\mathcal{A}_1 \otimes \mathcal{A}_2$, which is again a unital $*$-algebra. Additionally we may combine individual states to product states according to the following lemma.

\begin{lem}
Let $\sigma_j$ be states on $\mathcal{A}_j$, then functionals of the form $\sigma_1 \otimes \sigma_2$ are states on $\mathcal{A}_1 \otimes \mathcal{A}_2$ called \emph{product states}. 
\end{lem}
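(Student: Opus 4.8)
The plan is to verify the two defining properties of a state — normalisation and positivity — for the functional $\sigma_1 \otimes \sigma_2$ on the algebraic tensor product $\mathcal{A}_1 \otimes \mathcal{A}_2$. Here $\sigma_1 \otimes \sigma_2$ denotes the linear functional determined on simple tensors by $(\sigma_1 \otimes \sigma_2)(A_1 \otimes A_2) := \sigma_1(A_1)\,\sigma_2(A_2)$; I would first remark that this indeed extends to a well-defined linear functional on all of $\mathcal{A}_1 \otimes \mathcal{A}_2$ because $(A_1, A_2) \mapsto \sigma_1(A_1)\sigma_2(A_2)$ is bilinear and hence factors through the tensor product by its universal property. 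Normalisation is then immediate, since $\openone \otimes \openone$ is the unit of $\mathcal{A}_1 \otimes \mathcal{A}_2$ and $(\sigma_1 \otimes \sigma_2)(\openone \otimes \openone) = \sigma_1(\openone)\,\sigma_2(\openone) = 1$.

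The substantive part is positivity, for which I would use the GNS construction. For each $j \in \{1,2\}$ let $(\mathcal{H}_j, \pi_j, \Omega_j)$ be a GNS triple for the state $\sigma_j$, so that $\pi_j$ is a unital $*$-representation of $\mathcal{A}_j$ by operators on the inner-product space $\mathcal{H}_j$ (its completion to a Hilbert space is not needed here) and $\sigma_j(A_j) = \langle \Omega_j, \pi_j(A_j)\Omega_j \rangle$. One checks that $(\pi_1 \otimes \pi_2)(A_1 \otimes A_2) := \pi_1(A_1) \otimes \pi_2(A_2)$ extends to a unital $*$-representation of $\mathcal{A}_1 \otimes \mathcal{A}_2$ on the algebraic tensor product $\mathcal{H}_1 \otimes \mathcal{H}_2$ with its canonical inner product, and that the vector $\Omega_1 \otimes \Omega_2$ reproduces $\sigma_1 \otimes \sigma_2$, i.e.\ $(\sigma_1 \otimes \sigma_2)(B) = \langle \Omega_1 \otimes \Omega_2, (\pi_1 \otimes \pi_2)(B)\,\Omega_1 \otimes \Omega_2 \rangle$ for all $B$ (both identities need only be verified on simple tensors and then follow by linearity). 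Positivity is then a one-liner:
\[
(\sigma_1 \otimes \sigma_2)(A^\dagger A) = \bigl\lVert (\pi_1 \otimes \pi_2)(A)\,\Omega_1 \otimes \Omega_2 \bigr\rVert^2 \geq 0 \qquad \text{for all } A \in \mathcal{A}_1 \otimes \mathcal{A}_2 .
\]

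A GNS-free alternative, which I might prefer to spell out instead, is the following direct computation. Writing a general element as $A = \sum_{k=1}^n a_k \otimes b_k$ one gets $A^\dagger A = \sum_{k,l} a_k^\dagger a_l \otimes b_k^\dagger b_l$ and hence $(\sigma_1 \otimes \sigma_2)(A^\dagger A) = \sum_{k,l} M_{kl} N_{kl}$ with $M_{kl} := \sigma_1(a_k^\dagger a_l)$ and $N_{kl} := \sigma_2(b_k^\dagger b_l)$. The $n \times n$ matrices $M$ and $N$ are Hermitian and positive semi-definite, being Gram matrices of the positive semi-definite sesquilinear forms $(x,y) \mapsto \sigma_j(x^\dagger y)$; by the Schur product theorem their entrywise product $M \circ N$ is again positive semi-definite, so sandwiching it with the all-ones vector shows $\sum_{k,l} M_{kl} N_{kl} \geq 0$.

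The main obstacle — indeed the only non-routine point — is precisely this positivity step: a general element of $\mathcal{A}_1 \otimes \mathcal{A}_2$ is a sum of simple tensors rather than a single one, so $A^\dagger A$ produces genuine cross terms and positivity of $\sigma_1$ and $\sigma_2$ individually is not by itself enough; the GNS construction (or the Schur product theorem) is exactly what is needed to control them. Everything else — well-definedness of the linear extension, normalisation, and the fact that $\mathcal{A}_1 \otimes \mathcal{A}_2$ is again a unital $*$-algebra — is routine and was in part already noted in the text preceding the statement.
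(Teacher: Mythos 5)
Your proof is correct and follows essentially the same route as the paper, whose proof simply notes that normalisation is obvious and refers for positivity to the literature or to Schur's product theorem --- your Gram-matrix/Schur argument is precisely the latter, spelled out, and your GNS alternative is the standard representation-theoretic version of the same fact.
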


\begin{proof}
Normalisation is obvious. For positivity see~T.7 in Appendix T of~\cite{Wege-Olsen1993} or use Schur's product theorem.
\end{proof}

It is easy to see that product states show no correlation at all between the two systems $\mathcal{A}_1$ and $\mathcal{A}_2$. One may also consider finite statistical mixtures (finite convex combinations) of product states, i.e., $\sum_j \lambda_j \sigma_{1,j} \otimes \sigma_{2,j}$ for $\lambda_j >0$ with $\sum_j \lambda_j=1$, $j$ in some finite index set and even (appropriately normalised) pointwise limits thereof. This raises the question whether every state on $\mathcal{A}_1 \otimes \mathcal{A}_2$ arises in this way. In the case of {unital} $C^*$-algebras the answer is \emph{yes, if {and only if}} at least one of the two systems is classical, i.e., given by an Abelian {unital} $C^*$-algebra{\footnote{See Theorem~7 in~\cite{bacciagaluppi1993} (based on~\cite{raggio1988, baez1987}) and also Proposition 6 in~\cite{Baer2009} for a direct proof of one implication. See~\cite{landau1987} and the comment below Theorem 5.6 in~\cite{summers_independence} for a slight generalisation (in the case of von Neumann algebras).}.} This motivates the following definition.

\begin{mydef}
We call a state $\sigma$ on $\mathcal{A}_1 \otimes \mathcal{A}_2$ \emph{classically correlated}, if it can be written as the pointwise limit of convex combinations of states of the form $\sigma_1 \otimes \sigma_2$ for states $\sigma_j$ on $\mathcal{A}_j$. We call a state \emph{entangled}, if it is not classically correlated.
\end{mydef}

The above comment makes it clear that the presence of entangled states is phenomenon {of purely quantum composite systems}.

Let us remark that the natural combination of subsystems of a single theory, such as for instance $\mathcal{A}(N_1) \vee \mathcal{A}(N_2) \subseteq \mathcal{A}(M)$, the {unital} $*$-subalgebra generated by the two commuting algebras of local observables of two spacelike separated regions $N_1$ and $N_2$, might not always simply be isomorphic to a tensor product $\mathcal{A}(N_1) \otimes \mathcal{A}(N_2)$, see Sec.~VI in~\cite{summers_independence}. In order to not unnecessarily restrict the algebraic structure, we follow~\cite{VERCH_2005} and introduce the following notion.

\begin{mydef}
Given $\mathcal{A}, \mathcal{B} \subseteq \mathcal{R}$ two commuting, unital $*$-subalgebras of some unital $*$-algebra $\mathcal{R}$, we call a state $\omega$ on $\mathcal{A} \vee \mathcal{B} \subseteq \mathcal{R}$ \emph{product state}, if
\begin{equation}
    \forall A \in \mathcal{A}, \; \forall B \in \mathcal{B}: \; \omega(AB) = \omega(A) \omega(B).
\end{equation}
We call $\omega$ \emph{classically correlated}, if it is a pointwise limit of convex combinations of product states and \emph{entangled} if it is not classically correlated.
\label{def_generalised_bipartite_product_state}
\end{mydef}

We see that this encompasses the tensor product form above if $\mathcal{A}:= \mathcal{A}_1 \otimes {\mathbb{C}\openone}$, $\mathcal{B}:={\mathbb{C}\openone} \otimes \mathcal{A}_2$ and $\mathcal{R}= \mathcal{A}_1 \otimes \mathcal{A}_2$.

There seems to be no {simple} necessary and sufficient {criterion} that singles out entangled states in this very general setting, however, there are sufficient conditions, for instance the {(failure of the)} following \emph{qualitative} {one} for general {unital} $*$-algebras following the presentation in~\cite{VERCH_2005}.

\begin{mydef}[Verch-Werner ppt property, Definition~3.1 in~\cite{VERCH_2005}]
For $\mathcal{A}, \mathcal{B} \subseteq \mathcal{R}$ two commuting, unital $*$-subalgebras of some unital $*$-algebra $\mathcal{R}$, we say that a state $\omega$ on $\mathcal{A} \vee \mathcal{B} \subseteq \mathcal{R}$ has the \emph{Verch-Werner positive partial transpose (ppt)} property, if and only if for every $N \in \mathbb{N}$ and for all $x_1,..., x_N \in \mathcal{A}_1$ and $y_1,..., y_N \in \mathcal{A}_2$:
\begin{equation}
    \sum\limits_{j,k=1}^N \omega(x_k x_j^\dagger y_j^\dagger y_k) \geq 0.
    \label{eq_Verch-Werner_ppt}
\end{equation}
\end{mydef}
This is a generalisation of the positive partial transpose property of Peres~\cite{Peres_1996} and agrees with that notion at least for operators on finite-dimensional Hilbert spaces, see Proposition~3.2 in~\cite{VERCH_2005}.

Remark: For $N \in \{0, 1\}$ the condition is empty. 

\begin{lem}[after Lemma~3.3 in~\cite{VERCH_2005}]
Every classically correlated state $\sigma$ has the ppt property.
\end{lem}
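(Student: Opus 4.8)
The plan is to reduce the statement to the case of product states and then invoke Schur's product theorem. Fix $N\in\mathbb{N}$ together with $x_1,\dots,x_N\in\mathcal{A}$ and $y_1,\dots,y_N\in\mathcal{B}$. Then $\omega\mapsto\sum_{j,k=1}^N\omega(x_kx_j^\dagger y_j^\dagger y_k)$ is a fixed linear functional of the state $\omega$, and demanding it to be $\geq 0$ cuts out a set of states that is convex and closed in the pointwise topology. Intersecting over all $N$ and all choices of the $x$'s and $y$'s, the collection of states with the Verch--Werner ppt property is again convex and closed under pointwise limits. Since by Definition~\ref{def_generalised_bipartite_product_state} every classically correlated state is a pointwise limit of convex combinations of product states, it therefore suffices to show that every product state has the ppt property.

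So let $\omega$ be a product state and set $\sigma_1:=\omega|_{\mathcal{A}}$ and $\sigma_2:=\omega|_{\mathcal{B}}$, so that $\omega(AB)=\sigma_1(A)\sigma_2(B)$ for all $A\in\mathcal{A}$, $B\in\mathcal{B}$. Given $x_1,\dots,x_N\in\mathcal{A}$ and $y_1,\dots,y_N\in\mathcal{B}$ we note that $x_kx_j^\dagger\in\mathcal{A}$ and $y_j^\dagger y_k\in\mathcal{B}$ occur in the correct order, so that
\[
  \sum_{j,k=1}^N\omega(x_kx_j^\dagger y_j^\dagger y_k)=\sum_{j,k=1}^N\sigma_1(x_kx_j^\dagger)\,\sigma_2(y_j^\dagger y_k)=\sum_{j,k=1}^N P_{jk}\,Q_{jk},
\]
where $P_{jk}:=\sigma_1(x_kx_j^\dagger)$ and $Q_{jk}:=\sigma_2(y_j^\dagger y_k)$ define $N\times N$ matrices.

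Next I would check that $P$ and $Q$ are positive semi-definite. For $c\in\mathbb{C}^N$, positivity of $\sigma_2$ gives $\sum_{j,k}\overline{c_j}c_k\,Q_{jk}=\sigma_2(d^\dagger d)\geq 0$ with $d:=\sum_k c_k y_k$ (the usual Gram-matrix computation), and positivity of $\sigma_1$ gives $\sum_{j,k}\overline{c_j}c_k\,P_{jk}=\sigma_1(bb^\dagger)\geq 0$ with $b:=\sum_k c_k x_k$; equivalently, $P$ is the transpose of the Gram matrix of the family $(x_j^\dagger)_j$ with respect to $\sigma_1$, which already hints at the ``partial transpose'' in the statement. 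By Schur's product theorem the Hadamard product $P\odot Q$, with entries $P_{jk}Q_{jk}$, is then positive semi-definite, so that
\[
  \sum_{j,k=1}^N P_{jk}\,Q_{jk}=\langle e,(P\odot Q)e\rangle\geq 0,\qquad e:=(1,\dots,1)^{\mathrm T}\in\mathbb{C}^N,
\]
which is precisely inequality~\eqref{eq_Verch-Werner_ppt}. Hence every product state, and by the reduction above every classically correlated state, has the ppt property.

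I do not anticipate a genuine obstacle here; the whole content is the Schur product theorem, and the only point requiring care is the bookkeeping of indices and complex conjugates so that $P$ and $Q$ come out as honest positive semi-definite matrices (and not, say, their transposes). If one prefers to bypass Schur's theorem, the same conclusion follows by passing to GNS triples $(\mathcal{H}_i,\pi_i,\Omega_i)$ of $\sigma_1,\sigma_2$ and recognising the double sum as $\big\|\sum_{j}\pi_1(x_j)^\dagger\Omega_1\otimes\overline{\pi_2(y_j)\Omega_2}\big\|^2$ in $\mathcal{H}_1\otimes\overline{\mathcal{H}_2}$, where $\overline{\mathcal{H}_2}$ is the conjugate Hilbert space of $\mathcal{H}_2$; this exhibits both the nonnegativity and --- through the appearance of $\overline{\mathcal{H}_2}$ --- the transpose-like character of the condition.
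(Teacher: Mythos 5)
Your proposal is correct and complete: the reduction to product states works because each Verch--Werner inequality is linear in the state and hence stable under convex combinations and pointwise limits, and for a product state the Gram-matrix positivity of $P_{jk}=\sigma_1(x_k x_j^\dagger)$ and $Q_{jk}=\sigma_2(y_j^\dagger y_k)$ together with Schur's product theorem (or equivalently the nonnegativity of $\Tr(PQ^T)$ for positive semi-definite $P,Q$) gives Eq.~\eqref{eq_Verch-Werner_ppt}. The paper itself supplies no proof, deferring to Lemma~3.3 of~\cite{VERCH_2005}, whose argument is this same reduction to product states plus matrix positivity, so your write-up fills in exactly the cited argument.
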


It follows that every state that does not have the ppt property is entangled. The reversed implication does not hold in general. There are entangled states between finite-dimensional systems~\cite{Horodecki_1997, Horodecki1998bound}, as well as on the combination of two modes with two modes~\cite{Werner_2001} that \emph{have} the ppt property, hence a failing of the ppt property is only sufficient for entanglement. However, a state on the combination of one mode with one other mode is classically correlated \emph{if and only} if it has the ppt property~\cite{Simon_2000}, which we will utilise later.

\subsection{FV framework}

Let us now discuss how the coupling of local probes to a system of interest in a compact coupling zone is treated in the FV framework in a model-independent way.

The local probe is given in terms of a probe theory $\mathcal{P}$ and the system in terms of a system theory $\mathcal{S}$ on some globally hyperbolic spacetime $M$. The fully interacting combination of $\mathcal{S}$ and $\mathcal{P}$, i.e., the structure in which the probe is coupled to the system in a compact coupling zone $K$ shall itself be a theory{\footnote{The requirement that the interacting combination is itself a theory, in particular fulfilling Einstein causality, guarantees the locality of the coupling.}} and is denoted by $\mathcal{C}$. The fact that $\mathcal{C}$ is a coupled variant of the free combination $\mathcal{S} \otimes \mathcal{P}$ with interaction only switched on in $K$ is expressed by the existence of bijective, structure and localisation preserving identification maps $\mathcal{S}(N) \otimes \mathcal{P}(N) \to \mathcal{C}(N)$ \emph{outside} of the causal hull of $K$, i.e., for regions $N \subseteq M \setminus \ch(K)$, see Sec.~3.1 in~\cite{fewster2018quantum} for the details. For the covariantly defined \emph{in-region} $M^-:=M\setminus J^+(K)$ and \emph{out-region} $M^+=M \setminus J^-(K)$, this gives us the following maps:
\begin{equation}
    \begin{aligned}
    &\mathcal{S} \otimes \mathcal{P} \to \big(\mathcal{S} \otimes \mathcal{P}\big)(M^+) \to \mathcal{C}(M^+) \to \mathcal{C} \to \mathcal{C}(M^-) \to \big(\mathcal{S} \otimes \mathcal{P}\big)(M^-) \to \mathcal{S} \otimes \mathcal{P},
    \end{aligned}
    \label{eq_arrows}
\end{equation}
each of which is an isomorphism. The overall composition defines the \emph{scattering map} $\Theta: \mathcal{S} \otimes \mathcal{P} \to \mathcal{S} \otimes \mathcal{P}$, which is an automorphism. In the usual perturbative approach $\Theta$ is implemented as the adjoint action of the unitary scattering operator $\sf S$, i.e., $\Theta(A)={\sf S}^\dagger A{\sf S}$. An important property of $\Theta$ is given by the following lemma. 

\begin{lem}[Proposition 3.1 in~\cite{fewster2018quantum}] \hspace{1mm}
For every region $N \subseteq K^\perp: \Theta$ acts trivially on \linebreak $\big(\mathcal{S} \otimes \mathcal{P}\big)(N)$.
\label{lem_theta_spacelike_loc_change}
\end{lem}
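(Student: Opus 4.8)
The plan is to prove the lemma by collapsing the defining chain~\eqref{eq_arrows} of $\Theta$ to a composite of two identification maps and then observing that these two maps coincide on observables localisable in $N$.

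First I would record the single piece of geometry needed: $K^\perp = M \setminus (J^+(K) \cup J^-(K)) = (M\setminus J^-(K)) \cap (M\setminus J^+(K)) = M^+ \cap M^-$, so that a region $N \subseteq K^\perp$ lies in \emph{both} the out-region $M^+$ and the in-region $M^-$; since moreover $\ch(K) \subseteq J^+(K)$ and $\ch(K)\subseteq J^-(K)$, we get $N \subseteq M^\pm \subseteq M\setminus\ch(K)$, which is precisely the range of regions on which the identification isomorphisms $\tau_L\colon(\mathcal{S}\otimes\mathcal{P})(L)\to\mathcal{C}(L)$ of Sec.~3.1 in~\cite{fewster2018quantum} are defined. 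By isotony of the theories $\mathcal{S}\otimes\mathcal{P}$ and $\mathcal{C}$ it then follows that $(\mathcal{S}\otimes\mathcal{P})(N)$ is a $*$-subalgebra of each of $(\mathcal{S}\otimes\mathcal{P})(M^\pm)$ and $\mathcal{C}(N)$ a $*$-subalgebra of each of $\mathcal{C}(M^\pm)$.

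Next I would use the time-slice property to identify each of the four ``boundary'' algebras appearing in~\eqref{eq_arrows} with the corresponding full algebra, i.e.\ $(\mathcal{S}\otimes\mathcal{P})(M^\pm)$ with $\mathcal{S}\otimes\mathcal{P}$ and $\mathcal{C}(M^\pm)$ with $\mathcal{C}$: the four corresponding arrows in~\eqref{eq_arrows} are set-theoretic inclusions or their inverses, hence act as the identity on any element lying in the smaller of the two algebras they connect. Under these identifications the six-arrow composition collapses to $\Theta = \tau_{M^-}^{-1}\circ\tau_{M^+}$, where $\tau_{M^\pm}\colon\mathcal{S}\otimes\mathcal{P}\to\mathcal{C}$ now denote the resulting isomorphisms. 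Finally, for $A\in(\mathcal{S}\otimes\mathcal{P})(N)$ with $N\subseteq K^\perp$, the hypothesis that the family $\{\tau_L\}$ is localisation preserving gives $\tau_{M^+}(A) = \tau_N(A) = \tau_{M^-}(A)$ (each term is defined because $N\subseteq M^+\cap M^-\subseteq M\setminus\ch(K)$), whence $\Theta(A) = \tau_{M^-}^{-1}(\tau_{M^+}(A)) = \tau_N^{-1}(\tau_N(A)) = A$; since $A$ was arbitrary, $\Theta$ acts trivially on $(\mathcal{S}\otimes\mathcal{P})(N)$.

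I do not expect a genuine obstacle: the content is essentially bookkeeping. The two points demanding care are (i) that the isomorphisms built from the time-slice property in~\eqref{eq_arrows} are literally (inverses of) inclusions of local algebras, so that they genuinely fix sub-localised elements rather than merely relabel them, and (ii) that it is the \emph{same} map $\tau_N$ to which both $\tau_{M^+}$ and $\tau_{M^-}$ restrict on $(\mathcal{S}\otimes\mathcal{P})(N)$ — this is exactly the step at which $N \subseteq K^\perp$ (i.e.\ $N\subseteq M^+\cap M^-$) is used, and it is the crux of the argument. I would also note at the outset that $M^\pm$ and $K^\perp$ are nonempty regions, which is where compactness of $K$ enters (cf.~Appendix~A of~\cite{fewster2012dynamical}), so that every object in~\eqref{eq_arrows} is well defined.
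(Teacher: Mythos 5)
Your proof is correct and is essentially the standard argument behind the cited result: the paper itself does not reprove this lemma but imports it as Proposition~3.1 of~\cite{fewster2018quantum}, whose proof likewise rests on $K^\perp = M^+\cap M^-$ together with the localisation-preservation (naturality) of the identification maps outside $\ch(K)$, so that the retarded and advanced identifications in~\eqref{eq_arrows} agree on $\big(\mathcal{S}\otimes\mathcal{P}\big)(N)$ and $\Theta$ fixes it elementwise. Nothing further is needed.
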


Now suppose that the system is initially prepared in state $\omega$ and the probe in state $\sigma$. The effect of the interaction can be represented as an update of the tensor product of the initial states on the non-interacting combination according to 
\begin{equation}
    \omega \otimes \sigma \mapsto \Theta^*(\omega \otimes \sigma)=(\omega \otimes \sigma)\circ \Theta.
\end{equation}
The implementation of multiple probes in the FV framework was extensively discussed in~\cite{bostelmann2020impossible}. Based on that let us now assume that the probe theory $\mathcal{P}$ is the combination of two probes $\mathcal{P}_\mathsf{A}$ and $\mathcal{P}_\mathsf{B}$, so $\mathcal{P}_{\{\mathsf{A},\mathsf{B}\}} :=\mathcal{P}=\mathcal{P}_\mathsf{A} \otimes \mathcal{P}_\mathsf{B}$, each separately coupled to the system in coupling zones $K_\mathsf{A}$ and $K_\mathsf{B}$ with scattering maps $\Theta_\mathsf{A}: \mathcal{S} \otimes \mathcal{P}_\mathsf{A} \to \mathcal{S} \otimes \mathcal{P}_\mathsf{A}$ and $\Theta_\mathsf{B}: \mathcal{S} \otimes \mathcal{P}_\mathsf{B} \to \mathcal{S} \otimes \mathcal{P}_\mathsf{B}$ and initial state $\sigma_{\{\mathsf{A},\mathsf{B}\}} :=\sigma=\sigma_\mathsf{A} \otimes \sigma_\mathsf{B}$. We assume that \emph{causal factorisation}~\cite{fewster2018quantum, bostelmann2020impossible} holds, i.e.,
\begin{equation}
    \begin{aligned}
     \Theta = \begin{cases}
     \hat{\Theta}_\mathsf{A} \circ \hat{\Theta}_\mathsf{B} \qquad \text{for } K_\mathsf{A} \cap J^+(K_\mathsf{B}) = \emptyset,\\
     \hat{\Theta}_\mathsf{B} \circ \hat{\Theta}_\mathsf{A} \qquad \text{for } K_\mathsf{B} \cap J^+(K_\mathsf{A}) = \emptyset,
     \end{cases}
    \end{aligned}
    \label{eq_causal_fac}
\end{equation}
where $\hat{\Theta}_\mathsf{A} = \Theta_\mathsf{A} \otimes_3 \openone$ and $\hat{\Theta}_\mathsf{B} = \Theta_\mathsf{B} \otimes_2 \openone$. The subscript {of the tensor product} denotes the slot in which the second factor is inserted. We note that in the case of spacelike separated $K_\mathsf{A}, K_\mathsf{B}$, {it follows that} $\Theta= \hat{\Theta}_\mathsf{A} \circ \hat{\Theta}_\mathsf{B} = \hat{\Theta}_\mathsf{B} \circ \hat{\Theta}_\mathsf{A}$.

A feature of the FV framework is that any local probe observable $C \in \mathcal{P}$ induces a system observable $\varepsilon(C) \in \mathcal{S}$ such that
\begin{equation}
    \omega(\varepsilon(C)) = (\omega \otimes \sigma)(\Theta(\openone \otimes C)).
    \label{eq_induced_obs}
\end{equation}
One can even make a statement about the localisation of the induced observable, i.e., for every connected region $L$ that contains the coupling zone $K$, it follows that $\varepsilon \qty[\mathcal{P}(M)] \subseteq \mathcal{S}(L)$. In other words, every induced observable can be localised in any connected region around the coupling zone, see Theorem 3.3 in~\cite{fewster2018quantum} for details{\footnote{The connectedness condition on $L$ (here and below) stems from the formulation of the Haag property, see footnote~\ref{footnote_Haag}.}}.

Moreover, in~\cite{bostelmann2020impossible} (see Eq.~(42) therein) it was shown that in the case of a bipartite probe $\mathcal{P}_{\{ \mathsf{A},\mathsf{B}\}}$ with spacelike separated coupling zones $K_\mathsf{A}, K_\mathsf{B}$ we have for every system state $\omega$
\begin{equation}
    \begin{aligned}
     \forall A \in \mathcal{P}_\mathsf{A}\;  \forall B \in \mathcal{P}_\mathsf{B}: \omega \qty(\varepsilon_{\{\mathsf{A}, \mathsf{B}\}}(A \otimes B)) &=  \omega\qty(\varepsilon_{\mathsf{A}}(A) \varepsilon_{\mathsf{B}}(B))= \omega\qty(\varepsilon_{\mathsf{B}}(B) \varepsilon_{\mathsf{A}}(A)).
    \end{aligned}
    \label{eq_induced_obs_spacelike}
\end{equation}

\section{Model-independent entanglement harvesting with local probes}
\label{Sec_Model_indep_E_H}

Let us consider a theory of interest, the \emph{system} $\mathcal{S}$, and two agents, Alice ($\mathsf{A}$) and Bob ($\mathsf{B}$), {with associated} probe theories $\mathcal{P}_\mathsf{A}$ and $\mathcal{P}_\mathsf{B}$ respectively. The agents couple their probes to the system in connected, compact coupling zones $K_\mathsf{A}$ and $K_\mathsf{B}$ respectively, which are assumed to be causally orderable, i.e., $ K_\mathsf{A} \cap J^+(K_\mathsf{B}) = \emptyset$ or $ K_\mathsf{B} \cap J^+(K_\mathsf{A}) = \emptyset$. The coupling gives rise to a scattering map $\Theta_\mathsf{A}: \mathcal{S} \otimes \mathcal{P}_\mathsf{A} \to \mathcal{S} \otimes \mathcal{P}_\mathsf{A}$ for $\mathsf{A}$, similar for $\mathsf{B}$. As before, we assume that causal factorisation holds, i.e., that there is a theory that describes the coupling of both of the two probes to the system with overall scattering map $\Theta$ that fulfills Eq.~\eqref{eq_causal_fac}. 

Let us suppose that the initial state of the combination of the two probes $\mathcal{P}_\mathsf{A} \otimes \mathcal{P}_\mathsf{B}$ is uncorrelated before the interaction with the system and hence given by $\sigma:=\sigma_\mathsf{A} \otimes \sigma_\mathsf{B}$ for $\sigma_\mathsf{I}$ a state on $\mathcal{P}_\mathsf{I}$. Let $\omega$ be the initial state on $\mathcal{S}$. The effect of the interaction of the probes with the system can be represented on the non-interacting combination as an update of the initial product state $\sigma$ to
\begin{equation}
    \begin{aligned}
     \sigma'(C):=(\omega \otimes \sigma)(\Theta(1\!\!\!\!1 \otimes C)).
    \end{aligned}
    \label{eq_updated_state}
\end{equation}
To emphasise that the expression in Eq.~\eqref{eq_updated_state} is exactly what one would expect, let us assume that the algebra elements act on a Hilbert space, that $\omega,  \sigma$ are ``density matrices'' $\rho_\omega,  \rho_\sigma$ and $\Theta(A) = \mathsf{S}^\dagger A \mathsf{S}$. Then $\sigma'(C) = \Tr_{\mathcal{S} + \mathcal{P}}\qty(\rho_\omega \otimes \rho_\sigma \qty(\mathsf{S}^\dagger \openone \otimes C \mathsf{S})) = \Tr_\mathcal{P} \qty(\rho_{\sigma}' C)$ where $\rho_{\sigma}'= \Tr_\mathcal{S}\qty( \mathsf{S} \qty(\rho_\omega \otimes \rho_\sigma)\mathsf{S}^\dagger)$.

Recalling the definition of induced observables in Eq.~\eqref{eq_induced_obs}, it follows immediately that
\begin{equation}
    \begin{aligned}
    \sigma'(C) = \omega(\varepsilon_{\{\mathsf{A},\mathsf{B}\}}(C)).
    \label{eq_updated_state_varepsilon}
    \end{aligned}
\end{equation}
Entanglement harvesting describes the process of ``harvesting'' entanglement from the system state $\omega$ and transferring it to the state $\sigma'$ on the combination of the two probes. The two agents Alice and Bob may access and process the information in the updated state $\sigma'$ in local regions of control $N_\mathsf{A}$ and $N_\mathsf{B}$ respectively. We want to think of those regions as ``processing regions'', e.g., Alice couples her probe theory to the system in the compact coupling zone $K_\mathsf{A}$ and then analyses the updated state on her probe in some spacetime region $N_\mathsf{A}$. It is clear that we want to choose $N_\mathsf{A}$ outside the causal past of $K_\mathsf{A}$, so $N_\mathsf{A} \subseteq M^+_\mathsf{A} := M \setminus J^-(K_\mathsf{A})$, similar for Bob, where $N_\mathsf{B} \subseteq M^+_\mathsf{B} := M \setminus J^-(K_\mathsf{B})$. The ``analysis'' that we have in mind is aimed at detecting possible correlation between the local algebras $\mathcal{P}_\mathsf{A}(N_\mathsf{A})$ and $\mathcal{P}_\mathsf{B}(N_\mathsf{B})$ respectively in the state $\sigma'$. The initial state $\sigma= \sigma_\mathsf{A} \otimes \sigma_\mathsf{B}$ is a product state on $\mathcal{P}_\mathsf{A} \otimes \mathcal{P}_\mathsf{B}$ and hence, after restriction, also a product state on $\mathcal{P}_\mathsf{A}(N_\mathsf{A}) \otimes \mathcal{P}_\mathsf{B}(N_\mathsf{B})$. This suggests that any correlation between these two parties in the state $\sigma'$ must have come from the interaction with the system, i.e. must have been ``harvested'' from the system.

Before we continue, let us repeat what we have already mentioned in the introduction: the entanglement harvesting protocol with non-local quantum mechanical particle detectors follows a very similar approach. The main difference is that due to the simple internal dynamics of a non-local particle detector, the analysis that takes place after the interaction (i.e., in a processing region $N_\mathsf{A}$) can comprise all of the probe's degrees of freedom, whereas for a truly local probe, agent $\mathsf{A}$ may only access those degrees of freedom that can be localised in $N_\mathsf{A}$.

It is not surprising that the correlations that Alice and Bob observe depend on the location of the processing regions $N_\mathsf{A}, N_\mathsf{B}$, as shown in the following theorem, whose proof can be found in \ref{sec_appendix_proof_theo_regions_of_sep}.

\begin{theo}
Without loss of generality assume that $K_\mathsf{A} \cap J^+(K_\mathsf{B}) = \emptyset$, then $\sigma'$ is a product state on $\mathcal{P}_\mathsf{A}(N_\mathsf{A}) \otimes \mathcal{P}_\mathsf{B}(N_\mathsf{B})$, for regions $N_\mathsf{A}, N_\mathsf{B}$ whenever either of the following holds
    \begin{enumerate}
        \item $\ch(N_\mathsf{A} \cup N_\mathsf{B}) \subseteq (K_\mathsf{A} \cup K_\mathsf{B})^\perp$,
        \item\label{caseII_theo_regions_of_separability} or $N_\mathsf{B} \subseteq K_\mathsf{B}^\perp$ and $N_\mathsf{A}$ is arbitrary,
        \item\label{caseIII_theo_regions_of_separability} or $N_\mathsf{A} \subseteq K_\mathsf{A}^\perp$, and $N_\mathsf{B} \subseteq K_\mathsf{A}^\perp \cap M_\mathsf{B}^+$ and $N_\mathsf{B}$ is precompact\footnote{A set $N \subseteq M$ is called \emph{precompact}, if its closure $\overline{N}$ is compact in $M$. In particular, any subset of a compact set is precompact.}. 
    \end{enumerate}
\label{theo_regions_of_separability}
\end{theo}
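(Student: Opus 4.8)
The plan is to show in each of the three cases that for arbitrary $A \in \mathcal{P}_\mathsf{A}(N_\mathsf{A})$ and $B \in \mathcal{P}_\mathsf{B}(N_\mathsf{B})$ one has $\sigma'(A \otimes B) = \sigma'(A \otimes \openone)\,\sigma'(\openone \otimes B)$, i.e.\ $\sigma'$ factorises in the sense of Definition~\ref{def_generalised_bipartite_product_state} (here actually on a genuine tensor product). The main tools are: the representation $\sigma'(C) = (\omega \otimes \sigma)(\Theta(\openone \otimes C))$ from Eq.~\eqref{eq_updated_state}; the triviality of the scattering map on spacelike-separated local algebras, Lemma~\ref{lem_theta_spacelike_loc_change}; causal factorisation $\Theta = \hat\Theta_\mathsf{A} \circ \hat\Theta_\mathsf{B}$ from Eq.~\eqref{eq_causal_fac} (valid under the assumption $K_\mathsf{A} \cap J^+(K_\mathsf{B}) = \emptyset$); and, when the induced-observable reformulation is cleaner, Eqs.~\eqref{eq_updated_state_varepsilon} and~\eqref{eq_induced_obs_spacelike} together with the localisation statement $\varepsilon_\mathsf{J}[\mathcal{P}_\mathsf{J}] \subseteq \mathcal{S}(L)$ for connected $L \supseteq K_\mathsf{J}$.

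For case (i), I would argue that $\ch(N_\mathsf{A} \cup N_\mathsf{B}) \subseteq (K_\mathsf{A} \cup K_\mathsf{B})^\perp$ forces $N_\mathsf{A} \cup N_\mathsf{B}$ to be spacelike separated from both coupling zones, so $\mathcal{P}_\mathsf{A}(N_\mathsf{A}) \otimes \mathcal{P}_\mathsf{B}(N_\mathsf{B}) \subseteq (\mathcal{S} \otimes \mathcal{P})(\ch(N_\mathsf{A} \cup N_\mathsf{B}))$ sits inside a region in $K^\perp$ (with $K = K_\mathsf{A} \cup K_\mathsf{B}$ the combined coupling zone, or applying the single-probe lemma twice to $K_\mathsf{A}$ and $K_\mathsf{B}$ separately). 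Then $\Theta$ acts trivially on $\openone \otimes (A \otimes B)$, so $\sigma'(A\otimes B) = (\omega \otimes \sigma)(\openone \otimes A \otimes B) = \sigma(A \otimes B) = \sigma_\mathsf{A}(A)\,\sigma_\mathsf{B}(B)$, and the same triviality gives $\sigma'(A \otimes \openone) = \sigma_\mathsf{A}(A)$, $\sigma'(\openone \otimes B) = \sigma_\mathsf{B}(B)$, hence factorisation. For case (ii), using causal factorisation write $\sigma'(A \otimes B) = (\omega \otimes \sigma_\mathsf{A} \otimes \sigma_\mathsf{B})\big(\hat\Theta_\mathsf{A}(\hat\Theta_\mathsf{B}(\openone \otimes A \otimes B))\big)$; since $N_\mathsf{B} \subseteq K_\mathsf{B}^\perp$, Lemma~\ref{lem_theta_spacelike_loc_change} applied to $\Theta_\mathsf{B}$ (acting on $\mathcal{S} \otimes \mathcal{P}_\mathsf{B}$) leaves $\openone \otimes B$ invariant, so $\hat\Theta_\mathsf{B}(\openone \otimes A \otimes B) = \hat\Theta_\mathsf{B}(\openone \otimes A \otimes \openone)\cdot(\openone \otimes \openone \otimes B)$, and the $\mathsf{B}$-probe tensor factor decouples entirely: tracing against $\sigma_\mathsf{B}$ yields $\sigma_\mathsf{B}(B)$ times the expression $(\omega \otimes \sigma_\mathsf{A})(\Theta_\mathsf{A}(\openone \otimes A))$, which is exactly $\sigma'(A \otimes \openone)$; and $\sigma'(\openone \otimes B) = \sigma_\mathsf{B}(B)$ likewise, giving the product form. (One should double-check that $\hat\Theta_\mathsf{A}$ does not mix in the $\mathsf{B}$-slot — it doesn't, since $\hat\Theta_\mathsf{A} = \Theta_\mathsf{A} \otimes_3 \openone$ acts as the identity on $\mathcal{P}_\mathsf{B}$.)

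Case (iii) is the subtle one and I expect it to be the main obstacle. Here $N_\mathsf{A} \subseteq K_\mathsf{A}^\perp$ and $N_\mathsf{B} \subseteq K_\mathsf{A}^\perp \cap M_\mathsf{B}^+$ with $N_\mathsf{B}$ precompact, so neither processing region need be spacelike to $K_\mathsf{B}$ — Bob's region can be in the causal future of $K_\mathsf{B}$, so $B$ genuinely feels the $\mathsf{B}$-interaction. The idea is to pass to induced observables: by Eq.~\eqref{eq_updated_state_varepsilon}, $\sigma'(A \otimes B) = \omega(\varepsilon_{\{\mathsf{A},\mathsf{B}\}}(A \otimes B))$, and since $K_\mathsf{A}, K_\mathsf{B}$ here are spacelike separated (both $N$'s lie in $K_\mathsf{A}^\perp$ and $K_\mathsf{A}\cap J^+(K_\mathsf{B})=\emptyset$ plus the geometry should force $K_\mathsf{A} \perp K_\mathsf{B}$ — this needs care and may require using the precompactness of $N_\mathsf{B}$ and the Haag property to pin down localisations), Eq.~\eqref{eq_induced_obs_spacelike} gives $\omega(\varepsilon_{\{\mathsf{A},\mathsf{B}\}}(A\otimes B)) = \omega(\varepsilon_\mathsf{A}(A)\,\varepsilon_\mathsf{B}(B))$. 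Now $\varepsilon_\mathsf{A}(A)$: because $N_\mathsf{A} \subseteq K_\mathsf{A}^\perp$, the $\mathsf{A}$-interaction is trivial on $A$, so $\varepsilon_\mathsf{A}(A) = \sigma_\mathsf{A}(A)\openone$ is a multiple of the identity (this is where $N_\mathsf{A} \subseteq K_\mathsf{A}^\perp$ is used — it makes Alice's induced observable trivial). Hence $\omega(\varepsilon_\mathsf{A}(A)\varepsilon_\mathsf{B}(B)) = \sigma_\mathsf{A}(A)\,\omega(\varepsilon_\mathsf{B}(B)) = \sigma_\mathsf{A}(A)\,\sigma'(\openone \otimes B)$, and separately $\sigma'(A \otimes \openone) = \omega(\varepsilon_\mathsf{A}(A)) = \sigma_\mathsf{A}(A)$, so $\sigma'(A\otimes B) = \sigma'(A\otimes\openone)\,\sigma'(\openone\otimes B)$, as required. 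The delicate points to nail down are: (a) that under the hypotheses of (iii) the coupling zones are genuinely spacelike separated so that Eq.~\eqref{eq_induced_obs_spacelike} applies; (b) that $N_\mathsf{B} \subseteq K_\mathsf{A}^\perp$ and precompact indeed lets one localise $\varepsilon_\mathsf{B}(B)$ away from $K_\mathsf{A}$ (via Theorem~3.3 of~\cite{fewster2018quantum} and the Haag property, which is why connectedness/precompactness is invoked); and (c) the claim $\varepsilon_\mathsf{A}(A) \in \mathbb{C}\openone$ when $N_\mathsf{A} \subseteq K_\mathsf{A}^\perp$, which should follow from Lemma~\ref{lem_theta_spacelike_loc_change} applied to $\Theta_\mathsf{A}$ via $\omega(\varepsilon_\mathsf{A}(A)) = (\omega \otimes \sigma_\mathsf{A})(\Theta_\mathsf{A}(\openone \otimes A)) = (\omega\otimes\sigma_\mathsf{A})(\openone \otimes A) = \sigma_\mathsf{A}(A)$ holding for \emph{all} system states $\omega$, which pins $\varepsilon_\mathsf{A}(A)$ down to $\sigma_\mathsf{A}(A)\openone$.
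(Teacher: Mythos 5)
Your treatments of cases (i) and (ii) are correct and essentially identical to the paper's: (i) is the triviality of $\Theta$ on $\big(\mathcal{S}\otimes\mathcal{P}\big)\qty((K_\mathsf{A}\cup K_\mathsf{B})^\perp)$, and (ii) is the paper's second appendix lemma (there the factor $(\omega\otimes\sigma_\mathsf{A})(\Theta_\mathsf{A}(\openone\otimes A))$ is packaged as the updated state $\sigma_{\mathsf{A},\omega,\Theta_\mathsf{A}}(A)$, but the computation is the same). Your observation that $\varepsilon_\mathsf{A}(A)=\sigma_\mathsf{A}(A)\openone$ whenever $N_\mathsf{A}\subseteq K_\mathsf{A}^\perp$ is also correct.

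Case (iii), however, has a genuine gap, and it sits exactly where you suspected. Your argument hinges on the claim that the hypotheses of (iii) force $K_\mathsf{A}$ and $K_\mathsf{B}$ to be spacelike separated, so that Eq.~\eqref{eq_induced_obs_spacelike} applies. This is false: the standing assumption is only $K_\mathsf{A}\cap J^+(K_\mathsf{B})=\emptyset$, which is perfectly compatible with $K_\mathsf{B}\subseteq J^+(K_\mathsf{A})$, and nothing in ``$N_\mathsf{A}\subseteq K_\mathsf{A}^\perp$, $N_\mathsf{B}\subseteq K_\mathsf{A}^\perp\cap M_\mathsf{B}^+$ precompact'' rules this out --- the processing regions being spacelike to $K_\mathsf{A}$ says nothing about $K_\mathsf{B}$. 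Indeed the paper stresses right after the theorem that (iii) reduces to (ii) \emph{only if} the coupling zones are spacelike separated; the whole point of (iii) is to cover the causally ordered case, which is precisely the case your argument cannot reach. The paper's proof instead writes $\Theta(\openone\otimes A\otimes B)=(\openone\otimes A\otimes\openone)\cdot\hat{\Theta}_\mathsf{A}\bigl(\hat{\Theta}_\mathsf{B}(\openone\otimes\openone\otimes B)\bigr)$ (using $N_\mathsf{A}\subseteq K_\mathsf{A}^\perp$ for the first factor, as you do) and then invokes Theorem~2 of~\cite{bostelmann2020impossible}, whose hypotheses are exactly ``$N_\mathsf{B}\subseteq K_\mathsf{A}^\perp\cap M_\mathsf{B}^+$ with $N_\mathsf{B}$ precompact'', to conclude $\hat{\Theta}_\mathsf{A}\bigl(\hat{\Theta}_\mathsf{B}(\openone\otimes\openone\otimes B)\bigr)=\qty(\Theta_\mathsf{B}(\openone\otimes B))\otimes_2\openone$; evaluation against $\omega\otimes\sigma_\mathsf{A}\otimes\sigma_\mathsf{B}$ then yields $\sigma_\mathsf{A}(A)\,(\omega\otimes\sigma_\mathsf{B})(\Theta_\mathsf{B}(\openone\otimes B))$. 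This causality input --- that Alice's coupling cannot disturb the $\Theta_\mathsf{B}$-processed observable $B$ --- is the missing idea. Your fallback of localising $\varepsilon_\mathsf{B}(B)$ via Theorem~3.3 of~\cite{fewster2018quantum} also fails here, since that theorem only localises induced observables in connected regions \emph{containing} $K_\mathsf{B}$, and no such region is spacelike to $K_\mathsf{A}$ when $K_\mathsf{B}\subseteq J^+(K_\mathsf{A})$.
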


We emphasise that~\eqref{caseIII_theo_regions_of_separability} is merely a special {form} of~\eqref{caseII_theo_regions_of_separability} \emph{only} if the coupling zones are spacelike separated (i.e., {if} \emph{both} $K_\mathsf{A} \cap J^+(K_\mathsf{B}) = \emptyset$ and $K_\mathsf{B} \cap J^+(K_\mathsf{A}) = \emptyset$). {In this case,} we \emph{must} take \emph{each} $N_\mathsf{I}$ to lie in $J^+(K_\mathsf{I})$ (or at least have non-trivial intersection with it) in order to harvest any correlation. {This is in agreement with causality~\cite{Buchholz1994,bostelmann2020impossible}.}

{If $K_\mathsf{A}$ and $K_\mathsf{B}$ are spacelike separated,} it is {also} intuitively plausible that whether the updated state $\sigma'$ can exhibit any correlation between regions $N_\mathsf{A}, N_\mathsf{B}$ depends on the initial system state $\omega$. Indeed, for an arbitrary observable $C \in \mathcal{P}_\mathsf{A} \otimes \mathcal{P}_\mathsf{B}$ and spacelike separated $K_\mathsf{A}, K_\mathsf{B}$ we see, using Eq{s}.{~\eqref{eq_updated_state_varepsilon} and}~\eqref{eq_induced_obs_spacelike}, that
\begin{equation}
    \begin{aligned}
    \sigma'(C) &=\sum_j \omega(\varepsilon_{\{\mathsf{A} ,\mathsf{B}\}}(A_j \otimes B_j))=\sum_j \omega(\varepsilon_{\mathsf{A} }(A_j)\varepsilon_{\mathsf{B}}(B_j)),
    \end{aligned}
\end{equation}
where we wrote $C=\sum_j A_j \otimes B_j$, for $j$ running through a finite index set. Let $L_\mathsf{A}$ be a connected region containing $K_\mathsf{A}$ and $L_\mathsf{B}$ be a connected region containing $K_\mathsf{B}$ such that $L_\mathsf{A}$ and $L_\mathsf{B}$ are spacelike separated. Let us assume that $\omega$ restricted to $\mathcal{S}(L_\mathsf{A}) \vee \mathcal{S}(L_\mathsf{B}) \subseteq \mathcal{S}$ is a product state according to Definition~\ref{def_generalised_bipartite_product_state}. Then, since $\varepsilon_{\mathsf{A} }(A_j) \in \mathcal{S}(L_\mathsf{A})$ and $\varepsilon_{\mathsf{B} }(B_j) \in \mathcal{S}(L_\mathsf{B})$, it follows that 
\begin{equation}
    \begin{aligned}
    \sigma'(C) &=\sum_j \omega(\varepsilon_{\mathsf{A} }(A_j)) \; \omega(\varepsilon_{\mathsf{B} }(B_j))= \sum_j \sigma_\mathsf{A}'(A_j) \sigma_\mathsf{B}'(B_j) = (\sigma_\mathsf{A}'\otimes \sigma_\mathsf{B}')(C),
    \end{aligned}
\end{equation}
where we used that for $\sf J \in \{A,B\}$ $\sigma_\mathsf{J}'(Z) := \omega(\varepsilon_{\mathsf{J} }(Z))$ is obviously a state. We can proceed similarly if $\omega$ is a convex combination of product states and hence proved the following theorem.

\begin{theo}
Assume that $K_\mathsf{A} \subseteq L_\mathsf{A}$ and $K_\mathsf{B} \subseteq L_\mathsf{B}$ for spacelike separated connected regions $L_\mathsf{A}, L_\mathsf{B}$. Then the following holds: If $\omega$ is a product state, then $\sigma'$ is a product state as well. If $\omega$ is classically correlated on $\mathcal{S}(L_\mathsf{A}) \vee \mathcal{S}(L_\mathsf{B})$, then $\sigma'$ is classically correlated on $\mathcal{P}_\mathsf{A} \otimes \mathcal{P}_\mathsf{B}$.
\end{theo}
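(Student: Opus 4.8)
The plan is to reduce everything to two structural facts already in hand: the factorisation of the induced-observable map for spacelike separated coupling zones (Eq.~\eqref{eq_induced_obs_spacelike}) together with Eq.~\eqref{eq_updated_state_varepsilon}, and the localisation statement $\varepsilon_\mathsf{J}[\mathcal{P}_\mathsf{J}] \subseteq \mathcal{S}(L_\mathsf{J})$ valid for any connected region $L_\mathsf{J} \supseteq K_\mathsf{J}$ (Theorem~3.3 in~\cite{fewster2018quantum}). First I would record the hypotheses: since $L_\mathsf{A}$ and $L_\mathsf{B}$ are spacelike separated, Einstein causality makes $\mathcal{S}(L_\mathsf{A})$ and $\mathcal{S}(L_\mathsf{B})$ commute, so that Definition~\ref{def_generalised_bipartite_product_state} applies to states on $\mathcal{S}(L_\mathsf{A}) \vee \mathcal{S}(L_\mathsf{B})$; and $K_\mathsf{A}, K_\mathsf{B}$ are in particular spacelike separated, so causal factorisation gives $\Theta = \hat\Theta_\mathsf{A}\circ\hat\Theta_\mathsf{B} = \hat\Theta_\mathsf{B}\circ\hat\Theta_\mathsf{A}$ and Eqs.~\eqref{eq_updated_state_varepsilon} and~\eqref{eq_induced_obs_spacelike} are available.

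For the product-state case I would expand a general $C \in \mathcal{P}_\mathsf{A}\otimes\mathcal{P}_\mathsf{B}$ as a finite sum $C = \sum_j A_j\otimes B_j$ and compute, using Eq.~\eqref{eq_updated_state_varepsilon}, linearity of $\varepsilon_{\{\mathsf{A},\mathsf{B}\}}$, Eq.~\eqref{eq_induced_obs_spacelike}, the localisations $\varepsilon_\mathsf{A}(A_j) \in \mathcal{S}(L_\mathsf{A})$ and $\varepsilon_\mathsf{B}(B_j) \in \mathcal{S}(L_\mathsf{B})$, and the product property of $\omega$ on $\mathcal{S}(L_\mathsf{A})\vee\mathcal{S}(L_\mathsf{B})$, that $\sigma'(C) = \sum_j \omega(\varepsilon_\mathsf{A}(A_j))\,\omega(\varepsilon_\mathsf{B}(B_j))$. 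Setting $\sigma_\mathsf{J}'(Z) := \omega(\varepsilon_\mathsf{J}(Z)) = (\omega\otimes\sigma_\mathsf{J})(\Theta_\mathsf{J}(\openone\otimes Z))$, one checks it is a state (normalisation from $\Theta_\mathsf{J}(\openone)=\openone$; positivity because $\Theta_\mathsf{J}$ is a $*$-automorphism and $\omega\otimes\sigma_\mathsf{J}$ a state), whence $\sigma' = \sigma_\mathsf{A}'\otimes\sigma_\mathsf{B}'$ is a product state on $\mathcal{P}_\mathsf{A}\otimes\mathcal{P}_\mathsf{B}$.

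For classically correlated $\omega$ I would first treat a finite convex combination $\omega = \sum_i \lambda_i\,\omega_{1,i}\otimes\omega_{2,i}$ of product states on $\mathcal{S}(L_\mathsf{A})\vee\mathcal{S}(L_\mathsf{B})$: because $C \mapsto \omega(\varepsilon_{\{\mathsf{A},\mathsf{B}\}}(C))$ is linear in $\omega$, applying the previous paragraph termwise yields $\sigma' = \sum_i \lambda_i\,\sigma_{\mathsf{A},i}'\otimes\sigma_{\mathsf{B},i}'$, again a convex combination of product states. A general classically correlated $\omega$ is a pointwise limit of such finite convex combinations; since for each fixed $C$ the element $\varepsilon_{\{\mathsf{A},\mathsf{B}\}}(C)$ is a single fixed element of $\mathcal{S}$, evaluating the limiting net of system states on it shows that $\sigma'$ is the corresponding pointwise limit of convex combinations of product states on $\mathcal{P}_\mathsf{A}\otimes\mathcal{P}_\mathsf{B}$, i.e.\ classically correlated.

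I expect the only genuinely delicate point to be the bookkeeping around the localisation of the induced observables: one must invoke the Haag property (hence the connectedness of $L_\mathsf{A}, L_\mathsf{B}$) to conclude that $\varepsilon_\mathsf{J}(A_j)$ truly lies in $\mathcal{S}(L_\mathsf{J})$ rather than merely commuting with the observables localisable in its causal complement, and one must make sure the causal hypotheses on $K_\mathsf{A}, K_\mathsf{B}$ are strong enough (spacelike separation, not merely causal orderability) for Eq.~\eqref{eq_induced_obs_spacelike} to apply. The passage to pointwise limits is then routine, since all evaluations take place on the single fixed element $\varepsilon_{\{\mathsf{A},\mathsf{B}\}}(C)$.
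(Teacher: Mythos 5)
Your proposal is correct and follows essentially the same route as the paper: expand $C=\sum_j A_j\otimes B_j$, use Eqs.~\eqref{eq_updated_state_varepsilon} and~\eqref{eq_induced_obs_spacelike} together with the localisation $\varepsilon_\mathsf{J}[\mathcal{P}_\mathsf{J}]\subseteq\mathcal{S}(L_\mathsf{J})$ to factorise $\sigma'(C)$ through the product property of $\omega$, and then extend to convex combinations and pointwise limits. Your explicit verification that $\sigma_\mathsf{J}'$ is a state and your handling of the limiting net are slightly more detailed than the paper's ``we can proceed similarly,'' but the argument is the same.
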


In the investigation above we explicitly saw that the separability of $\sigma'$ depends on the separability of $\omega$ on the {unital} $*$-algebra spanned by the union of the images of $\varepsilon_\mathsf{J}$ \emph{only}. In general, this might be difficult to handle because the images of $\varepsilon_\mathsf{J}$ might not be {unital} $*$-algebras. However, after adding this additional assumption (which holds in the explicit model constructed in~\cite{fewster2018quantum}, see the end of Sec.~5.1 therein), we get the slightly stronger following statement.

\begin{theo}
{Assume that $K_\mathsf{A} \subseteq L_\mathsf{A}$ and $K_\mathsf{B} \subseteq L_\mathsf{B}$ for spacelike separated connected regions $L_\mathsf{A}, L_\mathsf{B}$. Additionally}, let us assume that (at least for fixed processing regions $N_\mathsf{A}, N_\mathsf{B}$ and) for $\sf J \in \{ A,B\}$ we have that $\varepsilon_\mathsf{J}\qty[\mathcal{P}_\mathsf{J}(N_\mathsf{J})]$ is a {unital} $*$-algebra. Then:
\begin{enumerate}
    \item If $\omega$ is a product state [classically correlated] on $\varepsilon_\mathsf{A}\qty[\mathcal{P}_\mathsf{A}(N_\mathsf{A})] \vee \varepsilon_\mathsf{B}\qty[\mathcal{P}_\mathsf{B}(N_\mathsf{B})]$, then $\sigma'$ is a product state [classically correlated] on $\mathcal{P}_\mathsf{A}(N_\mathsf{A})\otimes \mathcal{P}_\mathsf{B}(N_\mathsf{B})$ as well.
    \item If $\sigma'$ is a product state on $\mathcal{P}_\mathsf{A}(N_\mathsf{A})\otimes \mathcal{P}_\mathsf{B}(N_\mathsf{B})$, then $\omega$ is a product state on $\varepsilon_\mathsf{A}\qty[\mathcal{P}_\mathsf{A}(N_\mathsf{A})] \vee \varepsilon_\mathsf{B}\qty[\mathcal{P}_\mathsf{B}(N_\mathsf{B})]$.
\end{enumerate}
\end{theo}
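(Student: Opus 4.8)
The plan is to transport everything through the induced-observable maps and reduce both claims to the bipartite structure of $\omega$ on the subalgebras $\mathcal{A} := \varepsilon_\mathsf{A}[\mathcal{P}_\mathsf{A}(N_\mathsf{A})]$ and $\mathcal{B} := \varepsilon_\mathsf{B}[\mathcal{P}_\mathsf{B}(N_\mathsf{B})]$ of $\mathcal{S}$. First I would record the standing facts. Since $K_\mathsf{A} \subseteq L_\mathsf{A}$ and $K_\mathsf{B} \subseteq L_\mathsf{B}$ with $L_\mathsf{A}, L_\mathsf{B}$ spacelike separated and connected, the coupling zones $K_\mathsf{A}, K_\mathsf{B}$ are themselves spacelike separated, so the factorisation identity~\eqref{eq_induced_obs_spacelike} is available; moreover Theorem~3.3 of~\cite{fewster2018quantum} gives $\mathcal{A} \subseteq \varepsilon_\mathsf{A}[\mathcal{P}_\mathsf{A}] \subseteq \mathcal{S}(L_\mathsf{A})$ and likewise $\mathcal{B} \subseteq \mathcal{S}(L_\mathsf{B})$. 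Together with the standing assumption that $\mathcal{A}$ and $\mathcal{B}$ are unital $*$-algebras and with Einstein causality, $\mathcal{A}$ and $\mathcal{B}$ are commuting unital $*$-subalgebras of $\mathcal{S}$, so Definition~\ref{def_generalised_bipartite_product_state} applies to $\mathcal{A} \vee \mathcal{B}$. I would also use, exactly as in the paragraph preceding the theorem, that each $\varepsilon_\mathsf{J}$ is linear and unital and that $Z \mapsto \eta(\varepsilon_\mathsf{J}(Z))$ is a state on $\mathcal{P}_\mathsf{J}(N_\mathsf{J})$ whenever $\eta$ is a state. The two identities I will repeatedly invoke are $\sigma'(C) = \omega(\varepsilon_{\{\mathsf{A},\mathsf{B}\}}(C))$ from~\eqref{eq_updated_state_varepsilon} and $\omega(\varepsilon_{\{\mathsf{A},\mathsf{B}\}}(A \otimes B)) = \omega(\varepsilon_\mathsf{A}(A)\,\varepsilon_\mathsf{B}(B))$ from~\eqref{eq_induced_obs_spacelike}.

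For item~(i), take $A \in \mathcal{P}_\mathsf{A}(N_\mathsf{A})$ and $B \in \mathcal{P}_\mathsf{B}(N_\mathsf{B})$; chaining the two identities gives $\sigma'(A \otimes B) = \omega(\varepsilon_\mathsf{A}(A)\,\varepsilon_\mathsf{B}(B))$, with $\varepsilon_\mathsf{A}(A) \in \mathcal{A}$ and $\varepsilon_\mathsf{B}(B) \in \mathcal{B}$. If $\omega$ is classically correlated on $\mathcal{A} \vee \mathcal{B}$, write the restriction of $\omega$ to $\mathcal{A} \vee \mathcal{B}$ as a pointwise limit of convex combinations $\sum_j \lambda_j^{(n)} \nu_j^{(n)}$ of product states $\nu_j^{(n)}$ in the sense of Definition~\ref{def_generalised_bipartite_product_state}, so that $\nu_j^{(n)}(\varepsilon_\mathsf{A}(A)\varepsilon_\mathsf{B}(B)) = \nu_j^{(n)}(\varepsilon_\mathsf{A}(A))\,\nu_j^{(n)}(\varepsilon_\mathsf{B}(B))$. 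Just as in the proof of the preceding theorem, the pulled-back functionals $A' \mapsto \nu_j^{(n)}(\varepsilon_\mathsf{A}(A'))$ and $B' \mapsto \nu_j^{(n)}(\varepsilon_\mathsf{B}(B'))$ are states on $\mathcal{P}_\mathsf{A}(N_\mathsf{A})$ and $\mathcal{P}_\mathsf{B}(N_\mathsf{B})$, so $\sigma'(A \otimes B)$ equals a pointwise limit of convex combinations of the products of these states evaluated on $A$ and $B$; extending by finite linearity over arbitrary $C = \sum_k A_k \otimes B_k$ shows that $\sigma'$ restricted to $\mathcal{P}_\mathsf{A}(N_\mathsf{A}) \otimes \mathcal{P}_\mathsf{B}(N_\mathsf{B})$ is classically correlated. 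The product-state case is the degenerate one of a single term with no limit, yielding $\sigma' = \sigma'_\mathsf{A} \otimes \sigma'_\mathsf{B}$ with $\sigma'_\mathsf{J}(Z) := \omega(\varepsilon_\mathsf{J}(Z))$.

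For item~(ii), suppose $\sigma' = \sigma'_\mathsf{A} \otimes \sigma'_\mathsf{B}$ on $\mathcal{P}_\mathsf{A}(N_\mathsf{A}) \otimes \mathcal{P}_\mathsf{B}(N_\mathsf{B})$, and let $\alpha \in \mathcal{A}$, $\beta \in \mathcal{B}$; choose preimages $\alpha = \varepsilon_\mathsf{A}(A)$ and $\beta = \varepsilon_\mathsf{B}(B)$. Then $\omega(\alpha\beta) = \omega(\varepsilon_\mathsf{A}(A)\varepsilon_\mathsf{B}(B)) = \sigma'(A \otimes B) = \sigma'_\mathsf{A}(A)\,\sigma'_\mathsf{B}(B)$, while, using unitality of $\varepsilon_\mathsf{B}$ and of $\varepsilon_\mathsf{A}$, $\omega(\alpha) = \omega(\varepsilon_\mathsf{A}(A)\varepsilon_\mathsf{B}(\openone)) = \sigma'(A \otimes \openone) = \sigma'_\mathsf{A}(A)$ and likewise $\omega(\beta) = \sigma'_\mathsf{B}(B)$; in particular these numbers depend only on $\alpha$ and $\beta$, not on the chosen preimages, because $\omega(\alpha) = \omega(\varepsilon_\mathsf{A}(A))$. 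Hence $\omega(\alpha\beta) = \omega(\alpha)\,\omega(\beta)$ for all $\alpha \in \mathcal{A}$ and $\beta \in \mathcal{B}$, i.e.\ $\omega$ is a product state on $\mathcal{A} \vee \mathcal{B}$.

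Almost all of this is bookkeeping layered on top of the two preceding theorems and sharpened by the algebra assumption, but two points deserve care. First, the statement is genuinely asymmetric: in~(i) one \emph{pulls back} the bipartite components of $\omega$ along the maps $\varepsilon_\mathsf{J}$, which always produces states, whereas in~(ii) one would have to \emph{push forward} the components of $\sigma'$ along $\varepsilon_\mathsf{J}$, and without a section of $\varepsilon_\mathsf{J}$ there is no canonical way to do so --- this is exactly why~(ii) is only asserted for genuine product states and not for classical correlation. Second, one must be certain that replacing $\varepsilon_{\{\mathsf{A},\mathsf{B}\}}(A \otimes B)$ by $\varepsilon_\mathsf{A}(A)\varepsilon_\mathsf{B}(B)$ inside $\omega$ is legitimate; this is precisely~\eqref{eq_induced_obs_spacelike}, which applies because spacelike separation of $L_\mathsf{A}$ and $L_\mathsf{B}$ forces spacelike separation of $K_\mathsf{A}$ and $K_\mathsf{B}$, so no additional work is needed there. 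The only further routine point is that the pointwise limits in~(i) survive the passage from simple tensors to arbitrary finite sums $\sum_k A_k \otimes B_k$, which is immediate since a finite sum of convergent nets converges to the sum of the limits.
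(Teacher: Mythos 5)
Your proposal is correct and follows essentially the same route as the paper: part~(ii) is the paper's own chain of equalities $\omega(\tilde{A}\tilde{B}) = \omega(\varepsilon_\mathsf{A}(A)\varepsilon_\mathsf{B}(B)) = \sigma'(A\otimes B) = \sigma'(A\otimes\openone)\,\sigma'(\openone\otimes B) = \omega(\tilde{A})\,\omega(\tilde{B})$ via Eq.~\eqref{eq_induced_obs_spacelike} and unitality, and part~(i) is the pull-back of the bipartite components of $\omega$ along the $\varepsilon_\mathsf{J}$, which the paper dispatches with ``immediately follows from the earlier discussion''. If anything, you spell out more than the paper does (the preimage-independence remark in~(ii) and the convex-combination/pointwise-limit bookkeeping in~(i)), but the underlying argument is identical.
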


\begin{proof}
The first statement immediately follows from the earlier discussion. For the second statement assume that $\sigma'=\sigma_\mathsf{A}' \otimes \sigma_\mathsf{B}'$ and take $\tilde{A} \in \varepsilon_\mathsf{A}\qty[\mathcal{P}_\mathsf{A}(N_\mathsf{A})]$ and  $\tilde{B} \in \varepsilon_\mathsf{B}\qty[\mathcal{P}_\mathsf{A}(N_\mathsf{A})]$, then
\begin{equation}
    \begin{aligned}
    \omega(\tilde{A} \tilde{B}) &= \omega(\varepsilon_{\mathsf{A}}(A) \varepsilon_{\mathsf{B}}(B)) = \sigma'(A \otimes B)= \sigma'(A \otimes \openone) \sigma'(\openone \otimes B)\\
    &= \omega(\varepsilon_{\mathsf{A}}(A))\omega(\varepsilon_{\mathsf{B}}(B))=\omega(\tilde{A}) \omega(\tilde{B}),
    \end{aligned}
\end{equation}
where we used Eq.~\eqref{eq_induced_obs_spacelike}.
\end{proof}
These two results show that any potentially harvested correlation of the state $\sigma'$ must come from the correlation of the spacelike separated algebras localisable in regions around the coupling zones in the initial system state $\omega$. However, they just give a \emph{necessary} condition for entanglement harvesting and do not guarantee that $\sigma'$ is entangled when $\omega$ is. A general \emph{sufficient} condition ({for example on} the coupling) is surely desirable, however, we are not aware of such a statement.

\section{Local particle detectors}
\label{Sec_local_particle_detectors}

After the general summary of model-independent local probes in Sec.~\ref{Sec_local_probes} let us now discuss a class of explicit probe models introduced in~\cite{fewster2018quantum}. Here, the local probe is given by a linear real scalar field (possibly under influence of external fields). We show how it is possible to restrict our attention to a single local mode of the scalar field, which basically forms a reduction from infinitely many to just one single degree of freedom. We argue that such a local mode is a realistic model for a particle detector. 

We start by recalling the algebraic, covariant quantisation of the linear real scalar field on {a} (possibly curved) globally hyperbolic spacetime and then discuss the restriction of the field to a local mode.

\subsection{Linear real scalar quantum field}

A \emph{classical} linear real scalar field on a globally hyperbolic spacetime $M$ is defined by a normally hyperbolic equation of motion (eom) $P \varphi= 0$ (for instance $P= \Box + m^2$ for the Klein-Gordon field\footnote{Our signature convention is mostly minuses, i.e., $(+,-,\dots,-)$.})~\cite{Baer2007wave}. There are similar yet \emph{in}-equivalent ways of assigning an algebraic quantum field theory (according to Sec.~\ref{Sec_local_probes}) to this equation of motion. We define the {polynomial field-$*$-algebra} $\mathcal{F}$, which can be written down by products and sums of the identity $\openone$ and formal symbols (``smeared fields'') $\varphi(f)$, for $f \in C_c^\infty(M;\mathbb{C})$ a smooth, compactly supported complex-valued function. (The reader might find it helpful to \emph{think} of $\varphi(f)$ as $\int \varphi(x) f(x) \mathrm{d}x$.) They fulfill the following properties\footnote{See for instance Appendix B in~\cite{fewster2019algebraic} for a recipe how the algebra $\mathcal{F}$ is constructed from its generators and relations.}
\begin{enumerate}
    \item $f \mapsto \varphi(f)$ is $\mathbb{C}$-linear,
    \item $\varphi(f)^\dagger = \varphi(\overline{f})$,
    \item $\varphi(Pf) =0$,
    \item $\qty[\varphi(f_1), \varphi(f_2)]= \mathrm{i} E(f_1,f_2) \openone$,
\end{enumerate}
where $\overline{f}$ denotes the complex conjugation of $f$, $E(f_1,f_2):= \int_M f_1 (Ef_2) \mathrm{d}V_M$ for the causal propagator $E$ (the difference of the advanced and retarded Green operators\footnote{It is common to write the action of $E$ on a test function $f \in C_c^\infty(M;\mathbb{C})$ in terms of a (possibly distributional) ``integral kernel'' $E(x,y)$, i.e., $(Ef)(x) = \int E(x,y) f(y) \; \mathrm{d}V_M(y)$.}) associated to the normally hyperbolic equations of motion~\cite{Baer2007wave}. The local algebra of a region $N$, $\mathcal{F}(N)$, is then defined to be the unital $*$-algebra generated by $\{\varphi(f) | f \in C_c^\infty(N; \mathbb{C})\}$ and $\openone$, i.e., $\mathbb{C}$-linear combinations and finite products and adjoints.

It was shown in~\cite{fewster2018quantum} that $\mathcal{F}$ fulfils all the axioms of an AQFT, in particular the Haag property (see Appendix C therein) as well as the time-slice property, which immediately follows from the properties of the underlying normally hyperbolic equations of motion.

We emphasise, that $\mathcal{F}$ contains only finite linear combinations and products of smeared fields, so for instance expressions such as $\varphi(f) \varphi(f)$ (naively $\int \varphi(x) \varphi(y) f(x) f(y)\mathrm{d}x \mathrm{d}y$), but no Wick-ordered expressions such as $:\varphi^2(f):$ (naively $\int :\varphi^2(x): f(x) \mathrm{d}x$). However, the {polynomial field-$*$-algebra} is the starting point of defining Wick-ordered expressions and of constructing interacting quantum field theories perturbatively. We also emphasise that this construction is a fully \emph{covariant} quantisation of the linear real scalar field; no Cauchy surface was chosen, no $s+1$-split of the manifold $M$ was made\footnote{The relationship between (spacetime) smeared fields $\varphi$ and (space) smeared equal-time canonically conjugate variables $\hat{\Phi}, \hat{\Pi}$ on some Cauchy surface $\Sigma$ is given by the following expression (see~Theorem 2 in~\cite{Dimock1980} and also Eq.~(10) in~\cite{HollandsWald2015quantum}): $\varphi(f) = \int_\Sigma \big( (\nabla_n E f\restriction_\Sigma (\vec{x})) \;  \hat{\Phi}(\vec{x}) -  (E f\restriction_\Sigma(\vec{x})) \;  \hat{\Pi}(\vec{x})\big)\;  \mathrm{d}V_\Sigma(\vec{x})$.}. 

Another way of defining an AQFT for the linear real scalar field is in terms of the CCR-$C^*$- algebra spanned by Weyl-generators $W(f)$ for $f \in C_c^\infty(M;\mathbb{R})$. (The reader might find it helpful to \emph{think} of $W(f)$ as exponentiated smeared fields $W(f) = e^{\mathrm{i} \varphi(f)}$.) They fulfill
\begin{enumerate}
    \item $W(f)^\dagger = W(-f)$,
    \item $W(Pf) = \openone$,
    \item $W(f) W(g) = e^{-\frac{\mathrm{i}}{2} E(f,g)} W(f+g)$,
\end{enumerate}
so in particular $W(0) = \openone$ and $W(f)^{\dagger}= W(f)^{-1}$. The abstract {unital} $*$-algebra spanned by the $W(f)$'s can be equipped with a unique $C^*$-norm and the resulting {unital} $C^*$-algebra is called the (\emph{canonical commutation relations}) CCR-$C^*$-algebra, see Sec.~1.6 in~\cite{Baer2009}. It gives rise to an AQFT $\mathcal{A}$, where $\mathcal{A}(N)$ is the $C^*$-closure of all finite linear combinations of $W(f)$ for $f \in C_c^\infty(N;\mathbb{R})$\footnote{The validity of the Haag property is not obvious and will be discussed in future work.}.

The advantage of introducing the CCR-$C^*$-algebra is that one can now apply the well-developed $C^*$-representation theory, as we will do in Sec.~\ref{sec_subsec_Reeh-Schlieder_mixed}. Moreover, it forms the starting point for introducing von Neumann algebras, which allow for an even more powerful structural analysis.

The field $*$-theory $\mathcal{F}$ and the CCR-$C^*$-theory $\mathcal{A}$ really are different. The former does not contain Weyl generators, the latter does not contain (smeared) fields. Despite the difference between the two theories, there is a distinguished class of states on each of them called \emph{quasi-free} or also \emph{Gaussian} states \emph{with possibly non-vanishing one-point function} that are in a natural correspondence. However, this is not necessarily true for all states. 

Quasi-free states on $\mathcal{A}$ with possibly non-vanishing one-point function have the form
\begin{equation}
    \begin{aligned}
    \hat{\omega}(W(f))=e^{\mathrm{i} \chi(f)- \frac{1}{4} \beta(f,f)},
    \end{aligned}
\end{equation}
for $f \in C_c^\infty(M;\mathbb{R})$, $\chi: C_c^\infty(M;\mathbb{R) \to \mathbb{R}}$ real-linear and $\beta: C_c^\infty(M; \mathbb{R}) \times C_c^\infty(M; \mathbb{R}) \to \mathbb{R}$ a symmetric,  $\mathbb{R}$-bilinear form that fulfills the following positivity condition (independent of $\chi$)
\begin{equation}
    \begin{aligned}
    \forall f,g \in C_c^\infty(M; \mathbb{R}): |E(f,g)|^2 \leq \beta(f,f) \beta(g,g),
    \end{aligned}
    \label{eq_positivity condition}
\end{equation}
see~\cite{fewster2019algebraic, petz1990invitation} for details\footnote{In particular we also need to demand that $\beta$ gives rise to a positive semidefinite symmetric bilinear form $\tilde{\beta}: C_c^\infty(M;\mathbb{R})/P C_c^\infty(M;\mathbb{R}) \times C_c^\infty(M;\mathbb{R})/PC_c^\infty(M;\mathbb{R}) \to \mathbb{R}$ via $\tilde{\beta}(Ef,Eg) = \beta(f,g)$ for Eq.~\eqref{eq_two-point_func} to be consistent. Similarly for $\tilde{\chi}(Ef) = \chi(f)$.}. The corresponding quasi-free state $\omega$ on the {unital} $*$-algebra $\mathcal{F}$ can be defined by identities of the form
\begin{equation}
    \begin{aligned}
    \sum\limits_{n=0}^\infty x^n \frac{\mathrm{i}^n}{n!} \omega(\varphi(f)^n) = e^{\mathrm{i} x \chi(f)- \frac{1}{4} x^2 \beta(f,f)}
    \end{aligned}
    \label{eq_n-pt-f_from_exp}
\end{equation}
between formal $\mathbb{C}$-valued power-series in the formal parameter $x$. It is then easy to see that $\forall f,g \in C_c^\infty(M; \mathbb{R}):$
\begin{equation}
    \begin{aligned}
    \omega(\varphi(f)) &= \chi(f),\\
    \omega(\varphi(f) \varphi(g)) &= \frac{1}{2}\beta(f,g) + \chi(f) \chi(g) + \frac{1}{2} \mathrm{i} E(f,g),\\
    \omega\qty(\qty{\varphi(f),\varphi(g)}) &= \beta(f,g) + 2 \chi(f) \chi(g),
    \end{aligned}
    \label{eq_two-point_func}
\end{equation}
where $\qty{\varphi(f),\varphi(g)} = \varphi(f)\varphi(g) + \varphi(g)\varphi(f)$ is the anti-commutator. One can also define the so-called truncated two-point function via
\begin{equation}
    \begin{aligned}
    \omega(\varphi(f) \varphi(g)) -\chi(f) \chi(g)  &= \frac{1}{2}\beta(f,g) + \frac{1}{2} \mathrm{i} E(f,g),
    \end{aligned}
\end{equation}
so $\frac{1}{2} \beta$ is the symmetric part of the truncated two-point function. It is the nature of a quasi-free state that all truncated $n$-point functions for $n >2$ vanish, which means that all higher $n$-point functions can be calculated from the one- and two-point function.

\subsection{Local modes}

What is usually meant by the ``infinite degrees of freedom'' of a quantum field is that the symplectic space $\qty(C_c^\infty(M;\mathbb{R})/P C_c^\infty(M;\mathbb{R}),E(\cdot,\cdot))$ (that is underlying a CCR-$C^*$-quantisation for instance according to~\cite{Baer2007wave}) is infinite-dimensional. In order to reduce the complexity and model a more realistic situation, in which only finitely many degrees of freedom are accessible by one observer, we restrict our attention to only one mode of the field that can be localised in a \emph{finite}\footnote{{In fact we will consider/allow all regions $N$, whose causal complement contains a region, see Lemma~\ref{lem_restriction_mixed}. If the Cauchy surfaces of $M$ are not compact, then this comprises all precompact regions. (For precompact $N$, $\overline{N}^\perp \subseteq N^\perp$ is either empty or a region. If it is empty, then $M= J^+(\overline{N}) \cup J^-(\overline{N})$, hence is spacially compact and hence every Cauchy surface $\Sigma=\qty(J^+(\overline{N}) \cup J^-(\overline{N})) \cap \Sigma$ is compact, see Lemma~1.5 in~\cite{baer2015green}.)}} region. Our motivation is that a realistic observer should only have access to such \emph{local} degrees of freedom. The standard annihilation and creation operators (for sharp momentum in Minkowski spacetime for instance) cannot be used for this, because the associated mode is not localisable in a finite region of spacetime and hence physically speaking not accessible. Instead we use a \emph{local} mode, whose construction shall be explained now.

Let us look at the quantum field $\varphi$ and let us pick $f_1, f_2 \in C_c^\infty(N;\mathbb{R})$, two \emph{real}-valued smooth functions with compact support contained in a region $N$ such that {$N^\perp$ contains a region and} $E(f_1,f_2) \neq 0$. Then without loss of generality we can assume that $E(f_1,f_2)=1$. Using suggestive notation, we define $Q:= \varphi(f_1)$ and $P:= \varphi(f_2)$ and hence have 
\begin{equation}
    \begin{aligned}
    \qty[Q,P]=\qty[\varphi(f_1), \varphi(f_2)]=\mathrm{i} \openone.
    \end{aligned}
\end{equation}
The unital $*$-algebra spanned by $\varphi(f_1), \varphi(f_2)$ and $\openone$ then describes the \emph{local} mode of the probe field $\varphi$ localisable in $N$ defined by $f_1,f_2 \in C_c^\infty(N;\mathbb{R})$. The term ``local mode'' will always refer to such a subalgebra of the {polynomial} field-$*$-algebra. Likewise, the CCR-$C^*$-algebra of this local mode is the {unital} $C^*$-algebra spanned by $W(f_1)$ and $W(f_2)$.

Similar to how the mode of a non-relativistic harmonic oscillator is used as a \emph{non-local} particle detector, we can use this local mode of the scalar field as a \emph{local} particle detector. To that end let us set $f:= \frac{1}{\sqrt{2}}(f_1 + \mathrm{i} f_2) \in C_c^\infty(N;\mathbb{C})$ and
\begin{equation}
    \begin{aligned}
    a&:= \frac{1}{\sqrt{2}}\qty(Q + \mathrm{i}P) = \varphi\qty(f), \qquad a^\dagger&:= \frac{1}{\sqrt{2}}\qty(Q - \mathrm{i}P) = \varphi\qty(\overline{f}),
    \end{aligned}
\end{equation}
which are the (local) annihilation and creation operators associated to {the parametrisation of} this single local mode under consideration {in terms of $f_1, f_2$}.

The restriction of a quasi-free state $\omega$ of the whole scalar field to the single mode of interest is a quasi-free state as well with so-called $2\times2$ \emph{covariance} matrix 
\begin{equation}
    \begin{aligned}
    A_{jk}:=\omega(\qty{\varphi(f_j), \varphi(f_k)}) - 2 \omega(\varphi(f_j)) \omega(\varphi(f_k)) = \beta(f_j,f_k).
    \end{aligned}
\end{equation}
The positivity condition in Eq.~\eqref{eq_positivity condition} restricted to the one mode takes the from
\begin{equation}
    \begin{aligned}
    A + \mathrm{i} s \geq 0,
    \end{aligned}
    \label{eq_positivity_condition_mode}
\end{equation}
where $s_{jk}:=E(f_j,f_k)=\mqty(0 & 1 \\ -1& 0)_{jk}$. This is also known as the \emph{uncertainty principle}. We have in particular that $\det(A) \geq 1$.

Before we continue let us recall the following general result, whose proof is in~\ref{sec_appendix_pure_quasi-free_det}.

\begin{lem}
Let $\omega$ be a quasi-free state on one mode with covariance matrix $A$. Then $\det(A)=1$ if and only if $\hat{\omega}$ is a pure state on the corresponding CCR-$C^*$-algebra.
\label{lem_pure_det}
\end{lem}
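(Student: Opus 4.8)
The plan is to work with the Weyl (CCR) $C^*$-algebra of the single mode, which is generated by $W(f_1), W(f_2)$ with the relation $W(f_j)W(f_k) = e^{-\frac{\mathrm i}{2} s_{jk}} W(f_j + f_k)$, and to identify it with the CCR-algebra over the two-dimensional symplectic space $(\mathbb R^2, s)$. On this algebra a quasi-free state $\hat\omega$ with covariance matrix $A$ (and some one-point function $\chi$, which is irrelevant to purity since it is implemented by an automorphism) is given by $\hat\omega(W(\xi)) = e^{\mathrm i \chi(\xi) - \frac14 \xi^{\mathsf T} A \xi}$ for $\xi \in \mathbb R^2$. The goal is to show this state is pure if and only if $\det A = 1$.

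First I would recall (or cite, e.g.\ from~\cite{petz1990invitation} or standard references on quasi-free states) the characterisation of purity for quasi-free states of the CCR-algebra: a quasi-free state with covariance $A$ on a finite-dimensional symplectic space $(V,s)$ is pure if and only if $A$ is a \emph{minimal} solution of the uncertainty inequality $A + \mathrm i s \ge 0$, equivalently if and only if the complex structure defined by $A$ and $s$ squares to $-\openone$, i.e.\ $(s^{-1}A)^2 = -\openone$. I would then reduce matters to a direct linear-algebra computation in two dimensions: with $s = \left(\begin{smallmatrix} 0 & 1 \\ -1 & 0\end{smallmatrix}\right)$ we have $s^{-1} = -s = \left(\begin{smallmatrix} 0 & -1 \\ 1 & 0\end{smallmatrix}\right)$, and for the symmetric positive matrix $A$ one checks that $(s^{-1}A)^2 = -\det(A)\,\openone$ (this is special to $2\times 2$: $sAs^{\mathsf T} = \det(A)\, s^{-1}$, or one just multiplies the matrices out). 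Hence $(s^{-1}A)^2 = -\openone$ exactly when $\det A = 1$, giving one direction once the abstract purity criterion is in hand, and the converse because $A + \mathrm i s \ge 0$ already forces $\det A \ge 1$, with the minimal case saturating it.

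For self-containedness I would actually prefer to prove the purity criterion directly in this two-dimensional setting rather than quote it. One clean route: a quasi-free state is pure iff it is a \emph{ground state} for some one-parameter symplectic group, or, more concretely, iff the associated one-particle structure $(K, \mathcal H)$ has $\overline{K V} = \mathcal H$; for the two-dimensional symplectic space this happens precisely when the ``annihilation part'' of the state kills a pure Fock vacuum, which translates into $\det A = 1$. Alternatively, and perhaps most elementary, one can diagonalise: by a symplectic (hence state-preserving up to the structure) transformation bring $A$ to the form $\nu\,\openone$ with $\nu = \sqrt{\det A} \ge 1$ — this is the normal-mode/Williamson form in one mode — so the state becomes the standard thermal state of a single oscillator at ``temperature'' encoded by $\nu$, which is a Gibbs state $\propto e^{-\lambda\, a^\dagger a}$ and is pure iff it is the vacuum iff $\nu = 1$ iff $\det A = 1$.

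The main obstacle is packaging the abstract purity statement correctly: purity is a statement about \emph{all} states on the $C^*$-algebra, not just quasi-free ones, so I need either to invoke the known fact that a quasi-free state on the Weyl algebra is pure as a state on the full CCR-$C^*$-algebra iff its one-particle structure is irreducible/its symplectic complement is trivial (a genuine operator-algebraic input), or to use the GNS representation: the GNS representation of a quasi-free state with $\det A = 1$ is the Fock representation with a cyclic vacuum that is the unique (up to phase) state vector annihilated by the annihilation operators, hence irreducible, hence the state is pure; while for $\det A > 1$ one exhibits the state as a nontrivial convex combination (e.g.\ integrate the vacuum over a Gaussian family of coherent displacements, or use the thermal/Williamson picture above) to show it is mixed. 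I would carry out exactly the Williamson-normal-form reduction to a single thermal oscillator state, since there purity-versus-mixedness is completely transparent, and the symplectic transformation implementing the reduction is a $*$-isomorphism of the Weyl algebra, so purity is preserved under it.
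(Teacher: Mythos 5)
Your argument is correct, but it runs along a different track than the paper's. The paper follows~\cite{petz1990invitation} directly: it writes $s=DA$, polar-decomposes $D=J|D|$ with $J^2=-\openone$, quotes the fact that the quasi-free state is pure iff $|D|=\openone$, and then finishes with two lines of linear algebra ($\det(A)=\det(|D|)^{-1}$ for one direction; $\||D|\|\leq 1$ plus ``norm = largest eigenvalue, determinant = product of eigenvalues'' for the converse). Your first route --- purity iff $(s^{-1}A)^2=-\openone$, combined with the $2\times 2$ identity $(s^{-1}A)^2=-\det(A)\,\openone$ and $\det(A+\mathrm{i}s)=\det A-1\geq 0$ --- is essentially the same criterion in different clothing, and is fine (though note the parenthetical $sAs^{\mathsf T}=\det(A)\,s^{-1}$ is a slip: for symmetric $A$ one has $sAs^{\mathsf T}=\det(A)\,A^{-1}$; the main identity you state is the correct one). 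Your preferred self-contained route is genuinely different from the paper: Williamson normal form plus the Bogoliubov $*$-automorphism reduces to covariance $\nu\openone$ with $\nu=\sqrt{\det A}$, purity at $\nu=1$ follows from irreducibility of the GNS/Fock representation, and mixedness for $\nu>1$ from an explicit convex decomposition into coherent states --- which is exactly the Glauber--Sudarshan device the paper itself uses in its Appendix~C for the two-mode separability lemma, so your one-mode argument dovetails nicely with that. What each approach buys: the paper's proof is shorter and defers the operator-algebraic content to a single cited fact from Petz; yours costs the Williamson/Bogoliubov and Fock-irreducibility machinery but removes the black box, handles the one-point function explicitly (correctly dismissed via a displacement automorphism), and makes transparent \emph{why} $\det A>1$ means mixed (a genuine statistical mixture of coherent states).
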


Associated to the local annihilation and creation operators is the number operator $a^\dagger a$ and for the quasi-free state $\omega$ we have
\begin{equation}
    \begin{aligned}
    \omega(a^\dagger a) = \frac{1}{2} \qty(\underbrace{ \frac{1}{2} \Tr \qty(A) - 1}_{\geq 0} + \underbrace{\omega(\varphi(f_1))^2 + \omega(\varphi(f_2))^2}_{\geq 0}).
    \end{aligned}
\end{equation}
$\omega$ restricts to the ground state of the local {number operator} if and only if both non-negative summands vanish. Together with the fact that $1 \leq \det(A)$ and that $A$ is real symmetric, we see that the first summand vanishes if and only if $A = \openone$. In particular it follows from the previous lemma that $\omega$ cannot restrict to the ground state {of any number operator\footnote{Note that the operators $a, a^\dagger, a^\dagger a$ depend on the parametrisation of a mode and are not unique.} of a local mode} if $\hat{\omega}$ is truly mixed.

\subsection{Reeh-Schlieder states restricted to single local modes}
\label{sec_subsec_Reeh-Schlieder_mixed}

In the present section we investigate whether physically reasonable states $\omega$ of the quantum field $\varphi$ restrict to the ground state, or more generally, whether the associated states $\hat{\omega}$ restrict to pure states on a local mode. To that end we will first discuss some abstract technical results and then apply those to the CCR-$C^*$-algebra $\mathcal{A}$ and the CCR-$C^*$-algebra of a local mode.

As mentioned earlier, many physically reasonable states show entanglement over spacelike separation. This entanglement is often a consequence of the \emph{Reeh-Schlieder} property of those states.

\begin{mydef}[Reeh-Schlieder property I]
Let $\mathcal{A}$ be a {unital} $C^*$-algebra of operators on a Hilbert space $\mathcal{H}$ and let $\psi \in \mathcal{H}$ be a unit vector. Denote by $\omega_\psi$ the state $\omega_\psi(\cdot) := \braket{\psi| \cdot \psi}$, then we say that $\omega_\psi$ has the Reeh-Schlieder property with respect to $\mathcal{A}$ if $\psi$ is a cyclic vector for $\mathcal{A}$, i.e.,
\begin{equation}
\forall \xi \in \mathcal{H} \; \exists \qty(A_n)_{n\in \mathbb{N}} \subseteq \mathcal{A}: \lim\limits_{n \to \infty} A_n \psi = \xi.
\end{equation}
\end{mydef}

For an abstract {unital} $C^*$-subalgebra $\mathcal{A} \subseteq \mathcal{R}$ of a {unital} $C^*$-algebra $\mathcal{R}$ with state $\omega$ on $\mathcal{R}$ we may look at the GNS representation of {$\mathcal{R}$} (see Definition~13 in~\cite{Baer2009}) $\pi_\omega: \mathcal{R} \to BL(\mathcal{H}_\omega)$, which (is possibly not injective and) maps $\mathcal{A}$ to a {unital} $C^*$-subalgebra $\pi_\omega\qty[\mathcal{A}]$ of the bounded linear operators $BL(\mathcal{H}_\omega)$ on a ($\omega$-dependent) complex Hilbert space $(\mathcal{H}_\omega, \braket{\cdot|\cdot})$. Moreover there exists a unit vector $\Omega_\omega \in \mathcal{H}_\omega$ such that $\forall A \in \mathcal{A}$ we have that $\omega(A) = \braket{\Omega_\omega | \pi_\omega(A) \Omega_\omega}$.

\begin{mydef}[Reeh-Schlieder property II]
For an abstract {unital} $C^*$-subalgebra $\mathcal{A} \subseteq \mathcal{R}$ of a {unital} $C^*$-algebra $\mathcal{R}$ with state $\omega$ on $\mathcal{R}$ we say that $\omega$ has the Reeh-Schlieder property with respect to $\mathcal{A}$ if $\Omega_\omega \in \mathcal{H}_\omega$ is a cyclic vector for $\pi_\omega\qty[\mathcal{A}]$.
\end{mydef}

The following result about entanglement as a consequence of the Reeh-Schlieder property is taken from~\cite{VERCH_2005}.

\begin{lem}[Theorem~6.2 in~\cite{VERCH_2005}]
Let $\mathcal{A}, \mathcal{B} \subseteq \mathcal{R}$ be two commuting, non-Abelian{, unital} $C^*$-subalgebras of some {unital} $C^*$-algebra $\mathcal{R}$ realised as operators on some Hilbert space $\mathcal{H}$ and let $\psi \in \mathcal{H}$ be a unit vector. Then 
\begin{equation}
\begin{aligned}
    &\omega_\psi \text{ has the Reeh-Schlieder property with respect to } \mathcal{A},\\
    &\implies \omega_\psi \text{ does not have the Verch-Werner ppt property} \implies  \omega_\psi \text{ is entangled.} 
    \end{aligned}
\end{equation}
\end{lem}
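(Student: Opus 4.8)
The second implication is already available: by the preceding lemma every classically correlated state has the Verch–Werner ppt property, so a state that violates ppt cannot be classically correlated and is therefore entangled. The whole task is thus the first implication, and the plan is to exhibit a finite family $x_1,\dots,x_N\in\mathcal A$, $y_1,\dots,y_N\in\mathcal B$ with $\sum_{j,k}\omega_\psi(x_kx_j^\dagger y_j^\dagger y_k)<0$. First I would unfold this sum on $\mathcal H$. Using that $\mathcal A$ and $\mathcal B$ commute one finds $\omega_\psi(x_kx_j^\dagger y_j^\dagger y_k)=\langle x_k^\dagger y_j\psi,\,x_j^\dagger y_k\psi\rangle$, and since $\mathcal B\subseteq\mathcal A'$ one may pull the $\mathcal B$-factors through, $x_i^\dagger y_l\psi=y_l\,x_i^\dagger\psi$. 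Writing $\phi_i:=x_i^\dagger\psi$ and $b_l:=y_l$ turns the ppt sum into $\sum_{j,k}\langle b_j\phi_k,\,b_k\phi_j\rangle$. The Reeh–Schlieder hypothesis enters exactly here: cyclicity of $\psi$ for $\mathcal A$ makes $\mathcal A\psi$ dense, so the $\phi_i$ may be taken to approximate arbitrary vectors of $\mathcal H$. Since the (finite) sum depends continuously on the $\phi_i$ through the bounded operators $b_l$, it suffices to produce $\phi_1,\phi_2\in\mathcal H$ and $b_1,b_2\in\mathcal B$ with $\sum_{j,k}\langle b_j\phi_k,b_k\phi_j\rangle<0$ and then to pull them back into $\mathcal A\psi$.

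For the witness I would use non-Abelianness of $\mathcal B$ to choose self-adjoint $b,c\in\mathcal B$ with $[b,c]\neq0$ (if every pair of self-adjoint elements commuted, $\mathcal B$ would be Abelian). Since $c$ does not commute with $b$, it cannot leave the spectral decomposition of $b$ invariant, so there are disjoint closed intervals $I,I'\subseteq\sigma(b)$ with $P_IcP_{I'}\neq0$, where $P_I,P_{I'}$ are spectral projections of $b$. Pick continuous functions $f,f'$ with disjoint supports, $0\le f,f'\le1$, $f\equiv1$ on $I$ and $f'\equiv1$ on $I'$; then $p:=f(b)$ and $p':=f'(b)$ lie in $\mathcal B$ and satisfy $ff'=0$, hence $pp'=p'p=0$. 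Set $b_1:=p$, $b_2:=pcp'$, choose $\phi_1$ in the range of $P_{I'}$ with $P_Ic\phi_1\neq0$, and put $\phi_2:=-b_2\phi_1$. A short computation gives $b_1\phi_1=f(b)P_{I'}\phi_1=0$ and $b_2\phi_2=pc(p'\phi_2)=0$ (because $p'p=0$), so both diagonal terms vanish, while $b_2\phi_1=pc\phi_1\neq0$ and the cross term collapses to $2\,\mathrm{Re}\langle b_1\phi_2,b_2\phi_1\rangle=-2\langle p\,b_2\phi_1,\,b_2\phi_1\rangle<0$, the last inequality because $p=f(b)\ge0$ acts as the identity on the range of $P_I$, on which $b_2\phi_1$ has a nonzero component. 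Thus $\sum_{j,k}\langle b_j\phi_k,b_k\phi_j\rangle<0$, and pulling $\phi_1,\phi_2$ back through cyclicity yields $x_1,x_2\in\mathcal A$ for which the ppt sum is still negative.

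The algebraic unfolding and the continuity/pull-back step are routine. The main obstacle is the witness construction in a general $C^*$-algebra, where I cannot assume that $\mathcal B$ contains any projections at all: the real work is the functional-calculus argument that promotes a single non-commuting pair $[b,c]\neq0$ into the configuration $p,p',pcp'$ with $pp'=p'p=0$ and $pcp'\neq0$. The key lemma there—that $[b,c]\neq0$ forces disjoint spectral intervals $I,I'$ of $b$ with $P_IcP_{I'}\neq0$ (otherwise $c$ would be block-diagonal with respect to every partition of $\sigma(b)$ and hence commute with $b$)—is precisely the content worked out in the proof of Theorem~6.2 in~\cite{VERCH_2005}, which I would follow for the remaining details. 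I note in passing that the argument uses only cyclicity of $\psi$ for $\mathcal A$ together with non-Abelianness of $\mathcal B$.
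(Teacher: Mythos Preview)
The paper does not supply its own proof of this lemma; it simply quotes the result from~\cite{VERCH_2005} (Theorem~6.2 there) and moves on. So there is nothing in the paper to compare your argument against at the level of detail you have given.

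That said, your sketch is exactly the Verch--Werner strategy: rewrite the ppt sum as $\sum_{j,k}\langle b_j\phi_k,b_k\phi_j\rangle$ with $\phi_i=x_i^\dagger\psi$, use cyclicity of $\psi$ for $\mathcal A$ to replace the $\phi_i$ by arbitrary vectors (up to a controlled error), and then construct a two-term witness inside the non-Abelian $\mathcal B$ via the continuous functional calculus on a self-adjoint $b$ that fails to commute with some $c$. Your diagonal/off-diagonal bookkeeping for $b_1=p$, $b_2=pcp'$, $\phi_1\in\mathrm{ran}\,P_{I'}$, $\phi_2=-b_2\phi_1$ is correct; the cross term evaluates to $-2\langle c\phi_1,f(b)^3 c\phi_1\rangle<0$. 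The one point to be slightly more explicit about is that the spectral projections $P_I,P_{I'}$ live only in the bicommutant and are used solely to pick the vector $\phi_1\in\mathcal H$, while the operators $p,p',pcp'$ that enter the ppt sum genuinely lie in the $C^*$-algebra $\mathcal B$ by continuous functional calculus---you have this right, but it is the place where the argument could go wrong if one were careless. Since you already defer to~\cite{VERCH_2005} for the spectral lemma, your write-up is in line with how the paper itself handles the result.
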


The Reeh-Schlieder property has also implications on the mixedness of restricted states.

\begin{lem}
Let $\mathcal{A}, \mathcal{B} \subseteq \mathcal{R}$ be as in the previous lemma. If $\psi$ is a cyclic unit vector for $\mathcal{A}$, then $\omega_\psi \restriction \mathcal{B}$ is a mixed state.
\label{lem_general_restriction_mixed}
\end{lem}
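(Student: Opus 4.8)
The plan is to use the standard fact that a vector state is pure on the whole algebra of bounded operators but can be mixed on a subalgebra, and the cyclicity of $\psi$ for $\mathcal{A}$ forces the commutant $\mathcal{B}$ (which lies inside $\mathcal{A}'$) to ``see'' a non-trivial decomposition. Concretely, I would argue by contradiction: suppose $\omega_\psi \restriction \mathcal{B}$ is pure. Since $\mathcal{B}$ is non-Abelian and acts on $\mathcal{H}$, I want to produce a non-trivial convex decomposition of $\omega_\psi \restriction \mathcal{B}$ using a positive operator affiliated to $\mathcal{A}$ (hence commuting with $\mathcal{B}$), and then show that cyclicity of $\psi$ for $\mathcal{A}$ makes this decomposition genuinely non-trivial, contradicting purity.

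First I would set up the decomposition. Pick any $A \in \mathcal{A}$ with $0 \le A \le \openone$; then $T := A$ and $\openone - A$ are positive operators in $\mathcal{A}$, both commuting with every $B \in \mathcal{B}$. Assume (after rescaling / choosing $A$ appropriately) that $\braket{\psi | A \psi} =: \lambda \in (0,1)$, which is possible precisely because $\mathcal{A}$ is non-Abelian and $\psi$ is cyclic for it — if $\braket{\psi|A\psi} \in \{0,1\}$ for every such $A$, then $A\psi \in \{0,\psi\}$ for all positive contractions $A \in \mathcal{A}$, which together with cyclicity forces $\mathcal{H}$ to be one-dimensional and $\mathcal{A}$ Abelian, a contradiction. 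Define $\omega_1(B) := \tfrac{1}{\lambda}\braket{\psi | A^{1/2} B A^{1/2} \psi} = \tfrac{1}{\lambda}\braket{A^{1/2}\psi | B\, A^{1/2}\psi}$ and $\omega_2(B) := \tfrac{1}{1-\lambda}\braket{(\openone-A)^{1/2}\psi | B\,(\openone-A)^{1/2}\psi}$ for $B \in \mathcal{B}$; these are states on $\mathcal{B}$ (vector states after normalisation, using that $A^{1/2}$ commutes with $\mathcal{B}$), and since $A + (\openone - A) = \openone$ one gets $\omega_\psi\restriction\mathcal{B} = \lambda\,\omega_1 + (1-\lambda)\,\omega_2$.

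Next I would show this decomposition is non-trivial, i.e. $\omega_1 \ne \omega_2$, which contradicts purity of $\omega_\psi\restriction\mathcal{B}$. This is where cyclicity of $\psi$ for $\mathcal{A}$ is essential. Suppose $\omega_1 = \omega_2 = \omega_\psi\restriction\mathcal{B}$. Then for all $B \in \mathcal{B}$ one has $\braket{A^{1/2}\psi | B A^{1/2}\psi} = \lambda \braket{\psi | B\psi}$, and similarly with $(\openone-A)^{1/2}$; a polarisation-type argument, combined with the fact that these equalities hold for a $*$-algebra's worth of $B$ and that $\psi$ is cyclic for $\mathcal{A}$ — so vectors of the form $A'\psi$, $A' \in \mathcal{A}$, are dense — should let me conclude $A^{1/2}\psi = \sqrt{\lambda}\,\psi$ (up to phase), hence $A\psi = \lambda\psi$. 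But I can choose $A \in \mathcal{A}$ so that $A\psi$ is not proportional to $\psi$: indeed if $A\psi \parallel \psi$ for every positive contraction $A \in \mathcal{A}$, then $\mathbb{C}\psi$ is invariant under all of $\mathcal{A}$, and by cyclicity $\mathcal{H} = \overline{\mathcal{A}\psi} = \mathbb{C}\psi$, again forcing $\mathcal{A}$ Abelian. So for a suitable choice of $A$ we get $\omega_1 \ne \omega_2$, and $\omega_\psi\restriction\mathcal{B}$ is mixed.

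The main obstacle I expect is the non-triviality step: pinning down precisely why $\omega_1 = \omega_2$ would force $A\psi \propto \psi$, since one must convert the statement ``$\omega_1$ and $\omega_2$ agree on all of $\mathcal{B}$'' into a statement about vectors, and a priori $\mathcal{B}$ need not act irreducibly or cyclically. The clean route is probably to observe that $B \mapsto \braket{A^{1/2}\psi | B A^{1/2}\psi}$ and $\lambda \braket{\psi|B\psi}$ being equal positive forms on $\mathcal{B}$ means the vectors $A^{1/2}\psi$ and $\sqrt{\lambda}\,\psi$ define the same state on $\mathcal{B}$ but — crucially — $A^{1/2} \in \mathcal{A} \subseteq \mathcal{B}'$, so $A^{1/2}\psi$ lies in the subspace on which $\mathcal{B}$ acts the same way as on $\psi$; using cyclicity of $\psi$ under $\mathcal{A}$ (which commutes with $\mathcal{B}$) to spread this out over all of $\mathcal{H}$ is the delicate point. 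I would look to mirror the argument in the proof of Theorem~6.2 in~\cite{VERCH_2005} and standard Reeh-Schlieder folklore (e.g. that a cyclic vector for $\mathcal{A}$ is separating for $\mathcal{A}'$, hence in particular for $\mathcal{B}$), since ``separating for $\mathcal{B}$'' is exactly what rules out $A^{1/2}\psi \notin \mathbb{C}\psi$ producing the same $\mathcal{B}$-state unless the operators coincide.
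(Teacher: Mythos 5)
Your convex decomposition is sound: for a positive contraction $A\in\mathcal{A}$ with $\lambda:=\braket{\psi|A\psi}\in(0,1)$, the functionals $\omega_1(B)=\lambda^{-1}\braket{A^{1/2}\psi|BA^{1/2}\psi}$ and $\omega_2(B)=(1-\lambda)^{-1}\braket{(\openone-A)^{1/2}\psi|B(\openone-A)^{1/2}\psi}$ are states on $\mathcal{B}$ with $\omega_\psi\restriction\mathcal{B}=\lambda\omega_1+(1-\lambda)\omega_2$, using only that $A^{1/2}\in\mathcal{A}$ commutes with $\mathcal{B}$; and some positive contraction $A\in\mathcal{A}$ indeed has $A\psi\not\propto\psi$. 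The genuine gap is the implication ``$\omega_1=\omega_2\Rightarrow A\psi=\lambda\psi$''. Unwinding the definitions, $\omega_1=\omega_2$ is equivalent to $\braket{\psi|B(A-\lambda\openone)\psi}=0$ for all $B\in\mathcal{B}$, i.e.\ to $(A-\lambda\openone)\psi\perp\overline{\mathcal{B}\psi}$. Since $\psi$ is not assumed cyclic for $\mathcal{B}$, the subspace $\overline{\mathcal{B}\psi}$ may be proper, so this does not force $(A-\lambda\openone)\psi=0$; and the separating property of $\psi$ for $\mathcal{B}$ (which, as you say, follows from cyclicity for $\mathcal{A}\subseteq\mathcal{B}'$) does not close the gap, because the obstruction is a \emph{vector} orthogonal to $\overline{\mathcal{B}\psi}$, not an \emph{operator} in $\mathcal{B}$ annihilating $\psi$. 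Concretely, take $\mathcal{H}=\mathbb{C}^4\otimes\mathbb{C}^2$, $\mathcal{A}=M_4\otimes\openone$, $\mathcal{B}=\openone\otimes M_2$, $\psi=\tfrac{1}{\sqrt2}(e_1\otimes f_1+e_2\otimes f_2)$ (cyclic for $\mathcal{A}$), and $A=\tfrac12(\openone+X)\otimes\openone$ with $X$ the self-adjoint unitary exchanging $e_1\leftrightarrow e_3$ and $e_2\leftrightarrow e_4$: then $\lambda=\tfrac12$, $(A-\tfrac12\openone)\psi\perp\mathcal{B}\psi$, so $\omega_1=\omega_2$, yet $A\psi\not\propto\psi$. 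Thus for an individual $A$ the decomposition can be trivial even though $A\psi\not\propto\psi$, and your argument never exhibits a single $A$ for which non-triviality is guaranteed.

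The strategy can be repaired by quantifying over all $A$: if the decomposition were trivial for \emph{every} positive contraction $A\in\mathcal{A}$, one would get $\omega_\psi(BA)=\omega_\psi(B)\,\omega_\psi(A)$ for all $A\in\mathcal{A}$, $B\in\mathcal{B}$ (first for positive contractions, then by linearity); density of $\mathcal{A}\psi$ then gives $B\psi=\omega_\psi(B)\psi$ for every $B\in\mathcal{B}$, hence $[B_1,B_2]\psi=0$, and \emph{now} the separating property of $\psi$ for $\mathcal{B}$ applies and contradicts non-Abelianness. Note that the paper argues differently: it first produces, from non-Abelianness alone and the identity $\|X\|=\sup_{\|y\|=1}|\braket{y|Xy}|$ for self-adjoint $X$, a unit vector $y$ with $\omega_y\restriction\mathcal{B}\neq\omega_\psi\restriction\mathcal{B}$; approximates $y$ by $A_n\psi$ using cyclicity; and then obtains the operator inequality $\omega_{A_n\psi}(B^\dagger B)=\omega_\psi(B^\dagger A_n^\dagger A_nB)\leq\|A_n\|^2\,\omega_\psi(B^\dagger B)$ from the commutation of $\mathcal{A}$ with $\mathcal{B}$, which yields an explicitly non-trivial convex decomposition of $\omega_\psi\restriction\mathcal{B}$ with one component a (normalised) vector state $\omega_{A_n\psi}\restriction\mathcal{B}$. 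Both routes use the same three ingredients (non-Abelianness of $\mathcal{B}$, commutation, cyclicity for $\mathcal{A}$), but the paper's avoids the quantification over all $A$ by locating the non-triviality in the choice of $y$ at the outset.
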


\begin{proof}
We follow an argument in~\cite{Clifton_2001}. 
The proof consists of three parts. 

Firstly, we show that there exists a unit vector $y \in \mathcal{H}$ such that $\omega_y \restriction \mathcal{B} \neq \omega_\psi \restriction\mathcal{B}$: {we proceed by contradiction. Suppose that for every unit vector $y \in \mathcal{H}$ it holds that $\omega_y \restriction \mathcal{B} = \omega_\psi \restriction\mathcal{B}$. For every self-adjoint $B \in \mathcal{B}$, let us set $X_B:= B - \braket{\psi|B\psi} \openone \in \mathcal{B}$, then for every unit $y \in \mathcal{H}:$ $\braket{y|X_By}= \omega_y(B) - \omega_\psi(B)=0$. From this and the fact that for self-adjoint $X_B$, $\|X_B\|= \sup_{\|y\|_\mathcal{H}=1} |\braket{y|X_By}|$ we get that $X_B=0$. In particular, every self-adjoint element in $\mathcal{B}$ is a multiple of the identity. Since every operator in $\mathcal{B}$ can be written as the sum of two self-adjoint operators, we have that every operator in $\mathcal{B}$ is a multiple of the identity, which contradicts the assumption that $\mathcal{B}$ is non-Abelian.}

Secondly, since $\psi$ is cyclic for $\mathcal{A}$, we find ${(A_n)_{n \in \mathbb{N}} \subseteq} \mathcal{A}$ such that $A_n \psi \to y$. Upon restricting to a subsequence and rescaling $A_n$, the sequence $y_n := A_n \psi$ consists of unit vectors and converges to $y$, in particular $\omega_{y_n} \to \omega_y$ pointwise, i.e., in the weak$^*$-topology. If $\omega_y \restriction \mathcal{B} \neq \omega_\psi \restriction\mathcal{B}$, this shows that there exists $n \in \mathbb{N}$ such that $\omega_{y_n} \restriction \mathcal{B} \neq \omega_\psi \restriction\mathcal{B}$.

Thirdly, we show that there exists $\lambda \in (0,1)$ and a state $\tau$ such that  $\omega_\psi \restriction \mathcal{B} = \lambda \omega_{y_n} \restriction \mathcal{B} + (1-\lambda) \tau$. We use that {for all $B \in \mathcal{B}$}
\begin{equation}
    \omega_{y_n}(B^\dagger B) = \omega_\psi(A_n^\dagger B^\dagger B A_n) = \omega_\psi(B^\dagger A_n^\dagger A_n B) \leq \|A_n\|^2 \omega_\psi(B^\dagger B).
\end{equation} 
We set $\lambda:= \frac{1}{ \|A_n\|^2}$, which (by setting $B=\openone$) is in $(0,1]$. To see that $\lambda \neq 1$, note that for two states $\omega_1$ and $\omega_2$ such that $\omega_1(B^\dagger B) \leq \omega_2(B^\dagger B)$, it follows by the Cauchy-Schwarz inequality (see Proposition~5 in~\cite{Baer2009}) applied to the positive functional $\omega_2 - \omega_1$ (which maps $\openone$ to $0$), that $\omega_1 = \omega_2$. As a result we see that $\tau:= \frac{1}{1-\lambda}\qty( \omega_\psi \restriction \mathcal{B} - \lambda \omega_{y_n} \restriction \mathcal{B})$ is a state, which finishes the proof.
\end{proof}

This has an immediate consequence for states restricted to local modes.

\begin{lem}
Let $\mathcal{A}$ be the CCR-$C^*$-theory of a linear scalar field on a globally hyperbolic spacetime $M$ and let $\mathcal{B}$ be the CCR-$C^*$-algebra generated by one mode localisable in a region $N$ such that there exists a region $L \subseteq N^\perp$. Then every state $\omega$ on $\mathcal{A}(M)$ that has the Reeh-Schlieder property with respect to $\mathcal{A}(L)$ restricts to a mixed state on $\mathcal{B}$.
\label{lem_restriction_mixed}
\end{lem}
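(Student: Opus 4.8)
The plan is to deduce the statement from Lemma~\ref{lem_general_restriction_mixed} by passing to the GNS representation of the global algebra. I would take the GNS triple $(\mathcal{H}_\omega, \pi_\omega, \Omega_\omega)$ of $\mathcal{A}(M)$ with respect to $\omega$, so that $\omega(\cdot) = \braket{\Omega_\omega| \pi_\omega(\cdot)\Omega_\omega}$ and $\pi_\omega\qty[\mathcal{A}(M)] \subseteq BL(\mathcal{H}_\omega)$. Inside $BL(\mathcal{H}_\omega)$ I then single out the two unital $C^*$-subalgebras $\pi_\omega\qty[\mathcal{A}(L)]$ and $\pi_\omega\qty[\mathcal{B}]$. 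These commute: $\mathcal{B} \subseteq \mathcal{A}(N)$, the regions $L$ and $N$ are spacelike separated because $L \subseteq N^\perp$, Einstein causality gives that $\mathcal{A}(L)$ and $\mathcal{A}(N)$ commute, and the $*$-homomorphism $\pi_\omega$ preserves this. Moreover the Reeh-Schlieder hypothesis on $\omega$ with respect to $\mathcal{A}(L)$ is, by Reeh-Schlieder property II, \emph{exactly} the statement that $\Omega_\omega$ is a cyclic unit vector for $\pi_\omega\qty[\mathcal{A}(L)]$. Hence, once I verify that both $\pi_\omega\qty[\mathcal{A}(L)]$ and $\pi_\omega\qty[\mathcal{B}]$ are non-Abelian, Lemma~\ref{lem_general_restriction_mixed} applies verbatim (with $\mathcal{A}$, $\mathcal{B}$, $\psi$ there taken to be $\pi_\omega\qty[\mathcal{A}(L)]$, $\pi_\omega\qty[\mathcal{B}]$, $\Omega_\omega$) and yields that $\omega_{\Omega_\omega}$ restricted to $\pi_\omega\qty[\mathcal{B}]$ is mixed.

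Next I would check the two non-Abelianness claims. For $\pi_\omega\qty[\mathcal{B}]$: since $E(f_1,f_2)=1$, the Weyl relations give $W(f_1)W(f_2) = \mathrm{e}^{-\mathrm{i}} W(f_2)W(f_1)$, and as $\mathrm{e}^{-\mathrm{i}} \neq 1$ while $\pi_\omega$ sends the unitaries $W(f_j)$ to non-zero (unitary) operators, $\pi_\omega(W(f_1))$ and $\pi_\omega(W(f_2))$ do not commute. For $\pi_\omega\qty[\mathcal{A}(L)]$: any region $L$ contains a pair $g_1, g_2 \in C_c^\infty(L;\mathbb{R})$ with $E(g_1,g_2)\neq 0$ — concretely, pick a small causal diamond inside $L$ and take $g_1, g_2$ supported near its past and future tips so that their supports are causally related; after rescaling we may assume $E(g_1,g_2)=1$ — and the identical Weyl-relation argument shows $\pi_\omega\qty[\mathcal{A}(L)]$ is non-Abelian. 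This uses only that $L$ is a (nonempty open) region.

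Finally I would transfer mixedness back from $\pi_\omega\qty[\mathcal{B}]$ to $\mathcal{B}$ itself. The Weyl CCR-$C^*$-algebra $\mathcal{B}$ of a single local mode is built over the two-dimensional symplectic space $\qty(\mathrm{span}_\mathbb{R}\{[f_1],[f_2]\}, E)$, whose symplectic form is non-degenerate, so $\mathcal{B}$ is simple; hence the unital $*$-homomorphism $\pi_\omega\restriction\mathcal{B} \colon \mathcal{B} \to \pi_\omega\qty[\mathcal{B}]$ is injective, i.e.\ a $*$-isomorphism onto its image. A non-trivial convex decomposition of $\omega_{\Omega_\omega}\restriction\pi_\omega\qty[\mathcal{B}]$ therefore pulls back along $\pi_\omega$ to a non-trivial convex decomposition of $\omega\restriction\mathcal{B}$, which is thus mixed. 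The step I expect to need the most care is precisely this passage through the GNS representation, in both directions: making sure that the spacelike-separation and Reeh-Schlieder structure survives $\pi_\omega$ (handled by Einstein causality and the rigidity of the Weyl relations) and that the mixedness produced on $\pi_\omega\qty[\mathcal{B}]$ is not an artefact of $\pi_\omega$ possibly being non-faithful on the probe algebra (handled by simplicity of the single-mode CCR-$C^*$-algebra).
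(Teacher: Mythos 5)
Your proof is correct and follows essentially the same route as the paper: pass to the GNS representation of $\mathcal{A}(M)$, identify the Reeh-Schlieder hypothesis with cyclicity of $\Omega_\omega$ for $\pi_\omega[\mathcal{A}(L)]$, and apply Lemma~\ref{lem_general_restriction_mixed} to the commuting, non-Abelian image algebras. The only divergence is technical: the paper obtains both the non-Abelianness of the images and the transfer of mixedness back to $\mathcal{B}$ in one stroke from simplicity of $\mathcal{A}(M)$ (Corollary~4.2.10 in~\cite{Baer2007wave}), whereas you argue these points separately via the rigidity of the Weyl relations and simplicity of the single-mode CCR algebra --- equally valid, and in fact the pull-back step needs no injectivity at all, since $\pi_\omega$ maps $\mathcal{B}$ onto $\pi_\omega[\mathcal{B}]$ and distinct states on the image pull back along this surjection to distinct states on $\mathcal{B}$.
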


Remark: The existence of a region $L \subseteq N^\perp$ is stronger than demanding that $N^\perp$ is non-empty {and equivalent to saying that the open interior of $N^\perp$ is non-empty}\footnote{Take for $N:=D(\Sigma\setminus \{p\})$ the domain of dependence of a Cauchy surface $\Sigma$ with one point $p$ removed. Then $N^\perp = \{p\}$ is non-empty but does not contain a region. {However, if the open interior of $N^\perp$ is non-empty, then it contains a region since $M$ is strongly causal~\cite{pfaeffle2009}.}}. For precompact $N$, $\qty(\overline{N})^\perp \subseteq N^\perp $ is a region if it is non-empty.

\begin{proof}
By Einstein causality, $\mathcal{A}(L)$ and $\mathcal{B}$ are two commuting and certainly non-Abelian subalgebras of $\mathcal{A}(M)$. For globally hyperbolic $M$ and normally hyperbolic equations of motion, $\mathcal{A}(M)$ is simple (see Corollary~4.2.10 in~\cite{Baer2007wave}), i.e. it has no non-trivial closed two-sided $*$-ideals, so in particular, $\pi_\omega$ is injective (as it is not the zero representation) and $\pi_\omega\qty[\mathcal{A}(L)]$ and $\pi_\omega\qty[\mathcal{B}]$ are two commuting, non-Abelian, {unital} $C^*$-subalgebras of $BL(\mathcal{H}_\omega)$ and $\Omega_\omega$ is cyclic for $\pi_\omega\qty[\mathcal{A}(L)]$. Hence according to Lemma~\ref{lem_general_restriction_mixed}, $\braket{\Omega_\omega| \cdot \Omega_\omega}$ is a mixed state on $\pi_\omega\qty[\mathcal{B}]$ and hence also a mixed state on $\mathcal{B}$.
\end{proof}

We see in particular, that a Reeh-Schlieder state $\omega$, i.e., one that has the Reeh-Schlieder property with respect to every region, restricts to a truly mixed state on every local mode\footnote{That states of the free scalar field restrict to mixed states on the \emph{full} algebra of precompact double cones of Minkowski spacetime holds more generally, see Corollary 3.3 in~\cite{fewster2013hadamard}.}. {An alternative to the two previous explicit proofs, which show how mixedness follows from the Reeh-Schlieder property directly, is to slightly generalise the statements and argue that restrictions of \emph{all} entangled states are mixed, see Lemma~11.3.6.~in~\cite{KR1997VolII}.}

The significance of this result is reinforced by the fact, that many \emph{physically reasonable} states are Reeh-Schlieder with respect to every region, {for example} quasi-free states of {Klein-Gordon fields} on real analytic spacetimes that fulfill the \emph{analytic microlocal spectrum condition} ($a\mu SC$)~\cite{Strohmaier_2002}; {see~\cite{Gerard2019} for a general existence result and~\cite{Wrochna2020} for the existence of such KMS states on stationary real analytic spacetimes. {It turns out that} all} quasi-free ground and also KMS states on stationary real analytic spacetimes are known to fulfill the $a \mu SC$~\cite{Strohmaier_2002} {and are hence Reeh-Schlieder states}. {In fact, quasi-free ground and KMS states on} general ultrastatic~\cite{Verch1993} and stationary spacetimes~\cite{Strohmaier2000} {are Reeh-Schlieder}. Additionally, the existence of physically reasonable states that have the Reeh-Schlieder property for possibly only a few regions $L$ (which would be sufficient for the above result) on globally hyperbolic spacetimes was analysed in~\cite{Sanders_2009}, see also Sec.~3.2 in~\cite{Fewster2015split}. {On} flat Minkowski spacetime we even have that every state with bounded energy has the Reeh-Schlieder property for every region, see Sec.~II.5.3 in~\cite{haag2012local}.

The result above shows in particular, that quasi-free Reeh-Schlieder states never restrict to pure states on the CCR-$C^*$-algebra of local modes and hence also not to a corresponding ``ground state''. If $a, a^\dagger$ are {some} local annihilation and creation operators of the local mode under consideration and $\sigma$ is a quasi-free Reeh-Schlieder state, then
\begin{equation}
    \begin{aligned}
    \sigma(a^\dagger a) \neq 0.
    \end{aligned}
    \label{eq_vev}
\end{equation}
We briefly mention that this fact is very well-known (and can be proven in a much more direct way from the Reeh-Schlieder property). In the algebraic approach to scattering theory on Minkowski spacetime in~\cite{Araki1967} for instance, one circumvents this by using ``quasi-local'' operators (which cannot be localised in a finite region but might have vanishing vacuum expectation value) instead to define the notion of an asymptotic particle state (see also the discussion of collision theory in Part VI.~of~\cite{haag2012local}). However, since we want to focus on \emph{local} observables, we interpret Eq.~\eqref{eq_vev} as the incarnation of local ``fluctuations'' of quantum fields in physically reasonable states, see also Corollary 27 in~\cite{fewster2019algebraic}.

In the next subsection we will show how we can view such a single local mode of a probe field as \emph{local} particle detector for a coupled system field of interest. In this scenario, the state $\sigma$ takes on the role of the initial state of the probe field and our previous discussion shows that it is impossible to prepare a quasi-free Reeh-Schlieder state $\sigma$ and a local mode of the probe field in such a way that $\hat{\sigma}$ restricts to a pure state. This is a fundamental difference to the singularly or non-locally coupled mode of a non-relativistic harmonic oscillator, which is usually considered to initially be in its pure ground state.

Let us now have a look at the class of models of~\cite{fewster2018quantum} and restrict our attention to a single local mode of the probe field.

\subsection{Bilinearly coupled scalar probe and system fields}

Let us consider a system given by a linear real scalar field $\psi$ with mass $m_S$ and normally hyperbolic eom-operator $P$ on a globally hyperbolic spacetime $M$ and let our local probe be described by another linear real scalar field $\varphi$ with mass $m$ and eom-operator $Q$. The bilinear coupling of the probe and the system we want to consider is the one introduced in Sec.~4 in~\cite{fewster2018quantum} and is expressed in terms of an interaction Lagrangian density by $- \lambda \rho \psi \varphi$, where $\lambda$ is the coupling constant and $\rho$ is a smooth{, real-valued} coupling function with support in some compact coupling zone $K$.

To give an explicit example, we could look at the following Lagrangian density of the coupled model
\begin{equation}
    \begin{aligned}
    \mathcal{L}= &\frac{1}{2} (\nabla_\mu \psi)(\nabla^\mu \psi) - \frac{m_S^2}{2} \psi^2+\frac{1}{2} (\nabla_\mu \varphi)(\nabla^\mu \varphi) - \frac{m^2}{2} \varphi^2 - \lambda \rho \psi \varphi.
    \end{aligned}
\end{equation}
In this case the free eom are given by the Klein-Gordon operators $P=\Box + m_S^2$ and $Q=\Box + m^2$ respectively. While $P=\Box + m_S^2$ might be considered reasonable to model the free dynamics of a system field of interest, one could argue that this is not necessarily the case for $Q=\Box + m^2$; a probe field in a laboratory is usually not a Klein Gordon field. However, we emphasise that the results of our paper are not tied to this specific choice. In particular, we could choose the probe field $\varphi$ to be a massive linear real scalar field under the influence of an \emph{external} field as long as the emerging eom is normally hyperbolic. Such an external field $\chi$ could potentially be used to model a cavity or trap for probe field degrees of freedom by adding a term $- \frac{\chi}{2} \varphi^2$ to $\mathcal{L}$, see also the discussion in Sec.~\ref{Sec_outlook}.

Regardless of the precise form of $P$ and $Q$, the explicit bilinear coupling term allows us to write the coupled eom-operator $T$ on $C^\infty(M;\mathbb{C}) \oplus C^\infty(M;\mathbb{C})$ as
\begin{equation}
    T= \mqty(P & R\\ R &Q),
\end{equation}
where $R$ is the operator of pointwise multiplication with $\rho$. This defines a linear, normally hyperbolic equation of motion and a coupled interacting theory in the sense of the FV framework~\cite{fewster2018quantum}. The associated scattering map $\Theta$ was derived in Appendix D of~\cite{fewster2018quantum}. It acts on the field $\Xi$ of the free combination of system and probe given by $\Xi {\footnotesize \mqty( f \\ h)} := \psi(f) \otimes \openone + \openone \otimes \varphi(h)$. The image of $\Xi {\footnotesize \mqty( f \\ h)}$ under $\Theta$ for $f,g$ with support in $M^+ = M\setminus J^-(K)$ can be written as $\Xi {\footnotesize \qty(\theta \mqty(f \\ h))}$, where
\begin{equation}
    \begin{aligned}
    \theta \mqty(f \\ h) = \mqty(f \\ h) - \mqty(0 & R \\ R& 0) E_T^- \mqty(f \\ h).
    \end{aligned}
\end{equation}
$E_T^-$ is the \emph{advanced} Green operator of $T$ (it fulfills $\mathrm{supp }\qty(E^-_T f) \subseteq J^-(\mathrm{supp}\qty(f))$ for $f \in C_c^\infty(M;\mathbb{C})$). Note that this is a fully non-perturbative expression that holds for arbitrary $\lambda$, however, it is certainly possible to expand $E_T^-$ into a Born-series for small couplings $\lambda$, see Sec.~5.3 in~\cite{fewster2018quantum} and it is even possible to prove stronger analyticity results~\cite{fewster-non-local}. One can also directly check (at least at zeroth and first order in $\lambda$ in renormalised perturbation theory) that
\begin{equation}
    \begin{aligned}
   \Theta\qty(\openone \otimes \varphi(h)) =\mathsf{S}^\dagger \star \qty(\openone \otimes \varphi(h)) \star \mathsf{S},
    \end{aligned}
\end{equation}
where ${\sf S}= \mathcal{T} \exp (\frac{ -\mathrm{i} \lambda}{\hbar} \int\limits_M \rho(x) \psi(x) \varphi(x) \; \mathrm{d}V_M(x))$ is the usual perturbative scattering operator given by a renormalised time-ordered exponential and $\star$ is the star-product \emph{\`a la} deformation quantisation, see for instance~\cite{rejzner2016}.

Let us now consider the situation where an agent couples the probe field $\varphi$ to the system in compact coupling zone $K$ and then analyses the probe in a ``processing region'' $N$, which we reasonably choose to lie outside of the causal past of $K$, i.e. $N \subseteq M^+$. Let us assume that the agent only accesses one single mode localisable in $N$, defined by $h_1,h_2 \in C_c^\infty(N;\mathbb{R})$. As before set $h:= \frac{1}{\sqrt{2}}(h_1 + \mathrm{i} h_2)$. Let us look at the system observable induced by the number operator $a^\dagger a = \varphi(\overline{h}) \varphi(h)$. If the probe is in initial quasi-free state $\sigma$ (with vanishing one-point function), then it is given by (see Eq.~(5.30) in~\cite{fewster2018quantum})
\begin{equation}
    \varepsilon\qty(\varphi\qty(\overline{h}) \varphi(h)) =  \psi\qty(\overline{f^-})\psi\qty(f^-) +  \sigma\qty(\varphi\qty(\overline{h^-}) \varphi\qty(h^-)) \openone,
\end{equation}
where ${\footnotesize\mqty(f^- \\ h^-)}:= \theta {\footnotesize\mqty(0\\h)}$. {Note that both $f^-$ and $h^-$ are smooth, real-valued functions with compact support.} It is tempting to interpret the first term on the right hand side as (possibly {up to an offset} a multiple of) the number operator associated to the {local} system mode\footnote{Generically $E_P(f_1^-,f_2^-) \neq 1 \neq E_Q(h_1^-,h_2^-)$, but also $E_P(f_1^-,f_2^-) \neq 0 \neq E_Q(h_1^-,h_2^-)$. In particular $\qty[\psi(f_1^-),\psi(f_2^-)]\neq \openone$, but they still generate a local mode.} {generated by} $\psi(f^-_1), \psi(f^-_2)$, where ${\footnotesize\mqty(f_j^- \\ h_j^-)}:= \theta {\footnotesize\mqty(0\\h_j)}$ for $j \in \{1,2\}$. 

Recall that for initial system state $\omega$, the result of the agents measurement of the probe observable $a^\dagger a$ is then due to the coupling given by
\begin{equation}
\begin{aligned}
 &\omega(\varepsilon(a^\dagger a)) = \omega \qty(\psi\qty(\overline{f^-})\psi\qty(f^-)) + \sigma\qty(\varphi\qty(\overline{h^-}) \varphi\qty(h^-)).
\end{aligned}
\end{equation}
According to our earlier discussion it follows immediately that the result of our agent's measurement of the probe observable $a^\dagger a$ is ({morally}) the expected number of ``particles'' in {the local} system mode {generated by} $\psi(f_1^-), \psi(f_2^-)$ in the system state $\omega$ \emph{plus} the expected number of ``particles'' in {the local} probe mode {generated by} $\varphi(h_1^-), \varphi(h_2^-)$ in the probe state $\sigma$. If the agent's objective is to learn something about the system, then the second summand is clearly a disturbance in the signal and may be ``attributed to fluctuations in the probe''~\cite{fewster2018quantum} in agreement with our previous discussion. There we argued that this additional fluctuation is {always present} for Reeh-Schlieder states $\sigma$.

Knowing that for such probe preparation states ``noise'' in our agent's signal can never be completely avoided, it is all the more important to ask how it can be reduced. We want to give an outlook for a possible answer to this question via a perturbative analysis in the small coupling regime, i.e., for small $\lambda$. It was shown in Sec.~5.3 in~\cite{fewster2018quantum} that $h^-= h + \lambda^2 R E^-_P R E_Q h +    \mathcal{O}(\lambda^4)$ and $f^-= -\lambda R E_Q^- h + \mathcal{O}(\lambda^3)$, where $E^-_P$, $E^-_Q$ are the advanced Green operators associated to the normally hyperbolic equations of motion of the system and the probe respectively. We then see that
\begin{equation}
\begin{aligned}
 \omega(\varepsilon(a^\dagger a)) =\sigma\qty(a^\dagger a) + \lambda^2& \Big( \sigma\qty(\varphi\qty(\overline{h}) \varphi\qty(R E_P^- R E_Q^- h))+ \sigma\qty(\varphi\qty(R E_P^- R E_Q^- \overline{h}) \varphi\qty(h))\\
 &+\omega \qty(\psi\qty(R E_Q^-\overline{f})\psi\qty(R E_Q^-f)) \Big) + \mathcal{O}(\lambda^4),
\end{aligned}
\end{equation}
so the $\lambda$-dominant contribution to the signal in the detector comes from the detector itself given by $\sigma\qty(a^\dagger a)=\sigma\qty(\varphi\qty(\overline{h}) \varphi\qty(h))$. The signal from the system is only visible in the subleading term. It is hence desirable to make the constant term as small as possible even though one cannot make it vanish for Reeh-Schlieder states $\sigma$. One could attempt this by means of tuning the initial probe state $\sigma$ and the local mode {defined by} $h_1, h_2$. We interpret this as the fact that any measurement requires careful preparation of the probe.

Before we continue let us also quickly comment on the question {which local} system modes (up to normalisation) {determined by} $f_1, f_2 \in C_c^\infty(M;\mathbb{R})$ can be measured with our local detector model. It is shown by the author and collaborators in~\cite{fewster-range} that for every $f_1 \in C_c^\infty(M;\mathbb{R})$ one can find $h_1$ and $\rho$ such that $f_1^-$ is as close to $f_1$ (or an equivalent test function) as desired. However, it is not clear if one can find $h_1, h_2$ and $\rho$ such that $f_1^-$ approximates $f_1$ \emph{and} $f_2^-$ approximates $f_2$.

\section{Entanglement harvesting of the linear scalar field with local modes}
\label{Sec_E_H_local}

After the general, model-independent investigation in Sec.~\ref{Sec_Model_indep_E_H} and the discussion of local particle detectors in the previous section, we now consider entanglement harvesting with two local modes. The probes as well as the system will be given by linear real scalar fields. We will denote the probe fields by $\varphi_\mathsf{A}, \varphi_\mathsf{B}$. Before we discuss the coupled theory of the system and the two probe fields, let us first discuss entanglement.

\subsection{Entanglement of quasi-free states of two modes}

We define a mode for each probe field $\varphi_\mathsf{I}$ by the procedure above: let $f^\mathsf{A}_1, f^\mathsf{A}_2 \in C_c^\infty(N_\mathsf{A};\mathbb{R})$ define agent $\mathsf{A}$'s local mode $\varphi_\mathsf{A}(f^\mathsf{A}_1), \varphi_\mathsf{A}(f^\mathsf{A}_2)$ and similarly for $\varphi_\mathsf{B}$ and $f^\mathsf{B}_1,f^\mathsf{B}_2 \in C_c^\infty(N_\mathsf{B};\mathbb{R})$ {for regions $N_\mathsf{A}, N_\mathsf{B}$, whose causal complements each contain some region}. So we get a single local mode for each probe. This is very convenient, because the entanglement of quasi-free states on the combination of two modes is very well understood as we shall recall in this subsection. Let us investigate a quasi-free state $\sigma'$ on the tensor product of the two modes.

We start by defining matrices
\begin{equation}
    \begin{aligned}
    A_{jk}&:= \sigma'(\qty{\varphi_\mathsf{A}(f^\mathsf{A}_j), \varphi_\mathsf{A}(f^\mathsf{A}_k)}) - 2 \sigma'(\varphi_\mathsf{A}(f^\mathsf{A}_j)) \sigma'( \varphi_\mathsf{A}(f^\mathsf{A}_k)),\\ 
    B_{jk}&:= \sigma'(\qty{\varphi_\mathsf{B}(f^\mathsf{B}_j), \varphi_\mathsf{B}(f^\mathsf{B}_k)}) -  2 \sigma'(\varphi_\mathsf{B}(f^\mathsf{B}_j)) \sigma'( \varphi_\mathsf{B}(f^\mathsf{B}_k)),\\
    C_{jk}&:=\sigma'(\qty{\varphi_\mathsf{A}(f^\mathsf{A}_j) ,\varphi_\mathsf{B}(f^\mathsf{B}_k)}) - 2 \sigma'(\varphi_\mathsf{A}(f^\mathsf{A}_j)) \sigma'( \varphi_\mathsf{B}(f^\mathsf{B}_k)) \\
    &= 2 \sigma'(\varphi_\mathsf{A}(f^\mathsf{A}_j)\varphi_\mathsf{B}(f^\mathsf{B}_k))- 2 \sigma'(\varphi_\mathsf{A}(f^\mathsf{A}_j)) \sigma'( \varphi_\mathsf{B}(f^\mathsf{B}_k)),
    \end{aligned}
\end{equation}
and combine them into the \emph{covariance matrix} $\gamma$, where
\begin{equation}
    \begin{aligned}
    \gamma_{jk}:= \mqty(A&C\\ C^T & B)_{jk}.
    \end{aligned}
\end{equation}
We can then define
\begin{equation}
    \begin{aligned}
    \Omega_{jk}:&={\footnotesize\mqty(\mqty{E_\mathsf{A}(f^\mathsf{A}_1,f^\mathsf{A}_1) & E_\mathsf{A}(f^\mathsf{A}_1,f^\mathsf{A}_2) \\ E_\mathsf{A}(f^\mathsf{A}_2,f^\mathsf{A}_1)& E_\mathsf{A}(f^\mathsf{A}_2,f^\mathsf{A}_2)} & \mbox{\normalfont\Large\bfseries 0} \\ \mbox{\normalfont\Large\bfseries 0}& \mqty{E_\mathsf{B}(f^\mathsf{B}_1,f^\mathsf{B}_1) & E_\mathsf{B}(f^\mathsf{B}_1,f^\mathsf{B}_2) \\ E_\mathsf{B}(f^\mathsf{B}_2,f^\mathsf{B}_1)& E_\mathsf{B}(f^\mathsf{B}_2,f^\mathsf{B}_2)})_{jk}=\mqty(\mqty{0 & 1 \\ -1& 0} & \mbox{\normalfont\Large\bfseries 0} \\ \mbox{\normalfont\Large\bfseries 0}& \mqty{0 & 1 \\ -1& 0})_{jk}}
    \end{aligned}
\end{equation}
so $\Omega_{jk}= \qty(s \oplus s)_{jk}$ where $s_{lm}:={\footnotesize\mqty(0 & 1 \\ -1& 0)_{lm}}$. Note that $s^T = - s$, i.e., the transposition is equivalent to the multiplication by $-1$.
The positivity of the restricted state is equivalent to
\begin{equation}
    \begin{aligned}
    \gamma + \mathrm{i} \Omega \geq 0,
    \end{aligned}
    \label{eq_uncertainty_two_modes}
\end{equation}
the uncertainty relation for the restriction of $\sigma'$, which \emph{always} holds. The following condition, however, only holds for classically correlated quasi-free states:
\begin{equation}
    \begin{aligned}
    \gamma + \mathrm{i} \qty(s\oplus (-s)) \geq 0.
    \end{aligned}
    \label{eq_peres_horodecki}
\end{equation}
This is the Peres-Horodecki ppt condition for the second moments and was investigated in~\cite{Simon_2000}. The following lemma formalises this statement and also sheds light on the relationship of Eq.~\eqref{eq_peres_horodecki} with the Verch-Werner ppt property. The proof can be found in~\ref{sec_appendix_proof_lem_entanglement}.
\begin{lem}
Let $\omega$ be a quasi-free state on two modes with covariance matrix $\gamma$, then the following are equivalent
\begin{enumerate}
    \item $\hat{\omega}$ is \emph{not} Verch-Werner ppt,
    \item Eq.~\eqref{eq_peres_horodecki} does \emph{not} hold,
    \item $\hat{\omega}$ is entangled,
    \item $\omega$ is entangled,
\end{enumerate}
where $\hat{\omega}$ is the quasi-free state on the CCR-$C^*$-algebra of the two modes associated to $\omega$\footnote{The CCR-$C^*$-algebra of the two modes is the (up to isomorphism) \emph{unique} $C^*$-tensor product of the CCR-$C^*$-algebras of the two modes, see for instance~\cite{fewster-range} for details.}.
\label{lem_entangled_bipartite_states}
\end{lem}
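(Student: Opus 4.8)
The strategy is to close the logical cycle (i)$\Rightarrow$(iii)$\Rightarrow$(ii)$\Rightarrow$(i), together with the two further implications (ii)$\Rightarrow$(iv) and (iv)$\Rightarrow$(ii), which makes all four statements equivalent. Two external inputs carry the argument: the Verch--Werner lemma recalled above (Lemma~3.3 of~\cite{VERCH_2005}), that every classically correlated state has the Verch--Werner ppt property; and Simon's theorem~\cite{Simon_2000}, that a two-mode Gaussian state is classically correlated precisely when its partially transposed second moments still form a legitimate covariance matrix. As a preliminary reduction I would use that entanglement is invariant under the localisation-preserving field-translation automorphisms $\varphi_\mathsf{J}(g)\mapsto\varphi_\mathsf{J}(g)+\ell(g)\openone$ (genuine automorphisms since a quasi-free one-point function annihilates $PC_c^\infty$) and their Weyl analogues on the CCR-$C^*$-algebra, so that without loss of generality the one-point function vanishes; all of (i)--(iv) then become statements about the covariance matrix $\gamma$ alone, with~\eqref{eq_uncertainty_two_modes} holding automatically and~\eqref{eq_peres_horodecki} being its partial transpose on the $\mathsf{B}$-mode.

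The implication (i)$\Rightarrow$(iii) is immediate from the contrapositive of the Verch--Werner lemma: a state that is not Verch--Werner ppt cannot be classically correlated. The equivalence (ii)$\Leftrightarrow$(iii) is exactly Simon's theorem applied to $\hat\omega$ with covariance $\gamma$. For (iv)$\Rightarrow$(ii) I would argue contrapositively: if~\eqref{eq_peres_horodecki} holds, Simon's proof presents the Gaussian state $\hat\omega$ as a (generally continuous) convex mixture of \emph{Gaussian product} states; via the identities~\eqref{eq_n-pt-f_from_exp} and~\eqref{eq_two-point_func} each such product state restricts to a quasi-free product state on the polynomial field algebra of the two modes, and on any fixed polynomial in the fields the mixing integrand is polynomial in the mixing parameter against a Gaussian weight, so Riemann sums converge pointwise. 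Hence $\omega$ is a pointwise limit of finite convex combinations of product states, i.e.\ classically correlated; the very same decomposition also exhibits $\hat\omega$ as classically correlated, so this step uses nothing beyond what Simon makes explicit.

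It remains to turn a failure of~\eqref{eq_peres_horodecki} into a violation of the Verch--Werner inequalities~\eqref{eq_Verch-Werner_ppt}, both for $\omega$ on the polynomial algebra (which yields (ii)$\Rightarrow$(iv) once combined with the Verch--Werner lemma) and for $\hat\omega$ on the CCR-$C^*$-algebra (which yields (ii)$\Rightarrow$(i)). For $\omega$ this is a direct computation: take $x_j = c_j^0\openone + c_j^1\varphi_\mathsf{A}(f^\mathsf{A}_1) + c_j^2\varphi_\mathsf{A}(f^\mathsf{A}_2)$ and $y_j$ the analogous affine combinations of the $\mathsf{B}$-fields, expand $\sum_{j,k}\omega(x_k x_j^\dagger y_j^\dagger y_k)$ using~\eqref{eq_two-point_func} and the commutativity of the $\mathsf{A}$- and $\mathsf{B}$-fields, and recognise, after the standard reorganisation, the quadratic form $v^\dagger(\gamma + \mathrm{i}(s\oplus(-s)))v$ evaluated on vectors assembled from the coefficients; if~\eqref{eq_peres_horodecki} fails there is a choice making the sum negative. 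For $\hat\omega$ the smeared fields are not algebra elements, so instead I would insert Weyl operators $x_j = W_\mathsf{A}(t g_j)$, $y_j = W_\mathsf{B}(t g'_j)$ into~\eqref{eq_Verch-Werner_ppt}, differentiate the resulting smooth inequality in $t$ at $t = 0$, and read the same matrix condition off the $\mathcal O(t^2)$ term; this is the adaptation to the two-mode CCR algebra of the finite-dimensional identification of Verch--Werner ppt with Peres ppt (Proposition~3.2 of~\cite{VERCH_2005}).

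The part I expect to demand the most care is precisely this last reduction on the CCR-$C^*$-algebra: Proposition~3.2 of~\cite{VERCH_2005} is stated for finite-dimensional Hilbert spaces, whereas the two-mode CCR algebra is not of that form, so the ``differentiate the Weyl operators'' step must be made rigorous --- for instance by passing to the GNS representation of $\hat\omega$, where the smeared fields are self-adjoint operators affiliated to the generated von Neumann algebra, and truncating appropriately. A secondary subtlety worth emphasising is that the polynomial field algebra $\mathcal F$ and the CCR-$C^*$-algebra are inequivalent quantisations, so one must resist transporting an \emph{arbitrary} classical decomposition from one to the other; the argument above sidesteps this by only ever descending from Simon's concrete Gaussian-product decomposition, which restricts compatibly to both. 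Once these two points are handled, the cycle closes and the four conditions are equivalent.
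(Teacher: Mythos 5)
Your proposal is correct in substance, but it routes the ``ppt'' half of the equivalence differently from the paper. The paper's appendix proves (i) $\Leftrightarrow$ (ii) structurally: the partial transpose is implemented as a unit-preserving, product-reversing map $\Pi(W_\mathsf{B}(\vec{y}))=W_\mathsf{B}(p\vec{y})$ on Weyl generators, the Verch--Werner sum \eqref{eq_Verch-Werner_ppt} is rewritten as $\underline{\hat{\omega}}(X^\dagger X)$ for $\underline{\hat{\omega}}=\hat{\omega}\circ(\mathrm{id}\otimes\Pi)$, and positivity of this quasi-free functional is read off its covariance matrix $\Lambda^T\gamma\Lambda$, which gives \eqref{eq_peres_horodecki} in both directions at once and with no special treatment of the one-point function; the remaining implication (ii) $\Rightarrow$ (iv) is then obtained by showing directly that every classically correlated state on two modes satisfies \eqref{eq_peres_horodecki} (the partially transposed covariance matrices of the component product states are again admissible, and positivity survives the pointwise limit). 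You instead need only the single direction ``failure of \eqref{eq_peres_horodecki} implies failure of \eqref{eq_Verch-Werner_ppt}'', which you exhibit by hand with concrete test families (affine field elements on the polynomial algebra, scaled Weyl operators on the CCR algebra) and then feed into the Verch--Werner lemma to get entanglement of both $\omega$ and $\hat{\omega}$; the ``separable'' direction is, exactly as in the paper, Simon's local reduction~\cite{Simon_2000} plus the Glauber--Sudarshan $P$-representation restricted compatibly to both quantisations. The paper's $\Pi$-lemma buys an equivalence with no violating family to construct; your route buys completely explicit witnesses and dispenses with the paper's closing limit argument for ``classically correlated $\Rightarrow$ \eqref{eq_peres_horodecki}''.

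Two details in your violation step need tightening, though neither threatens the argument. First, the raw Verch--Werner sums are \emph{not} quadratic forms: at zeroth order in your scaling parameter they contain the manifestly nonnegative contribution $\left|\sum_j c_j d_j\right|^2$ (and, in the polynomial case, quartic Wick terms at higher order), so you must dress the test elements with scalar coefficients constrained by $\sum_j c_j d_j=0$ (allowing one test vector to be zero, i.e.\ one element proportional to $\openone$, so that an arbitrary vector of $\mathbb{C}^4$ can still be reached); only then is the $\mathcal{O}(t^2)$ coefficient the sesquilinear form of $\gamma+\mathrm{i}\,(s\oplus(-s))$ up to an irrelevant complex conjugation, and smallness of $t$ converts its negativity into a violation of \eqref{eq_Verch-Werner_ppt}. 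Second, your worry about differentiating operators in a GNS representation is unnecessary: for quasi-free $\hat{\omega}$ the expectation value of any Weyl word is an explicit Gaussian function of $t$, so ``differentiating the inequality'' is a Taylor expansion of scalar analytic functions, with no affiliated operators or truncations involved. With these adjustments your proof closes and is equivalent in strength to the paper's.
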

Equation~\eqref{eq_peres_horodecki} can be cast in the form~\cite{Simon_2000}
\begin{equation}
\begin{aligned}
       \det(A) + \det(B) - \det (A) \det(B) +\tr (A s C s B s C^T s) -\qty(1 +\det(C))^2 \leq 0,
    \end{aligned}
\end{equation}
which we will call the \emph{Simon} condition (after the author of~\cite{Simon_2000}).
It is worth noting that the uncertainty relation Eq.~\eqref{eq_uncertainty_two_modes} is equivalent to
\begin{equation}
\begin{aligned}
       \det(A) + \det(B) - \det (A) \det(B) +\tr (A s C s B s C^T s) -\qty(1 -\det(C))^2 \leq 0,
    \end{aligned}
\end{equation}

The idea is now the following: We couple the two probe fields to the system. We choose processing regions $N_\mathsf{A}, N_\mathsf{B}$ and local modes. The updated state of the two probes restricted to the combination of these local modes is given by $\sigma'$. This restriction is entangled if and only if it fails to have the ppt property, so we can fully characterise possible entanglement harvesting by two local modes of the two probe fields. We remark that if $\sigma'$ is \emph{not} entangled on the two local modes, this does \emph{not} imply that the full updated state of the two probes is classically correlated (or uncorrelated). It only means that any possible entanglement cannot be detected by the chosen local modes.

\subsection{Interacting structure}

The bilinear coupling between the full probes and the system we want to consider is again the one described in~\cite{fewster2018quantum}, see also Appendix D therein. Let us denote the system field by $\psi$ and the full probe fields as before by $\varphi_\mathsf{A}$ and $\varphi_\mathsf{B}$ respectively. Then the coupling of the probes to the system may be given in terms of the interaction Lagrangian density $-\lambda \rho_\mathsf{A} \psi \varphi_\mathsf{A} -\lambda \rho_\mathsf{B} \psi \varphi_\mathsf{B}$, where $\lambda$ is a common coupling constant and $\rho_\mathsf{J}$ are smooth, {real-valued,} compactly supported coupling functions with support in $K_\mathsf{A}$ and $K_\mathsf{B}$ respectively. The associated scattering map is denoted by $\Theta$.

Let us take a quasi-free state $\omega$ of the full system theory and let $\sigma_\mathsf{A}, \sigma_\mathsf{B}$ be quasi-free states on the full probe fields. Let the symmetric parts of the truncated two-point functions be given by $\frac{1}{2}\beta_\mathsf{S}, \frac{1}{2}\beta_\mathsf{A}, \frac{1}{2}\beta_\mathsf{B}$ respectively. (The terminology of this section is summarised in Table~\ref{tbl_overview}.)

\begin{table}[h!]
\centering
\renewcommand{\arraystretch}{1.}
\begin{tabular}{ |c||c|c|c|  }
 \hline
 & System & Agent $\mathsf{A}$ & Agent $\mathsf{B}$ \\ [0.5ex]
 \hline\hline
 field                      & $\psi$    & $\varphi_\mathsf{A}$  & $\varphi_\mathsf{B}$\\ \hline
 normally hyperbolic free eom & $P$     & $Q_\mathsf{A}$        & $Q_\mathsf{B}$\\ \hline
 advanced Green operator & $E^-$     & $E^-_\mathsf{A}$        & $E^-_\mathsf{B}$\\ \hline
 initial quasi-free state   & $\omega$  & $\sigma_\mathsf{A}$    & $\sigma_\mathsf{B}$\\ \hline
 $\mqty{\text{symmetric part of} \\ \text{truncated 2-point function}}$ & $\frac{1}{2} \beta_\mathsf{S}$ & $\frac{1}{2} \beta_\mathsf{A}$& $\frac{1}{2} \beta_\mathsf{B}$\\ \hline
 compact coupling zone          &           & $K_\mathsf{A}$        & $K_\mathsf{B}$\\\hline
 interaction Lagrangian density &     & $- \lambda \rho_\mathsf{A} \varphi_\mathsf{A} \psi$        & $- \lambda \rho_\mathsf{B} \varphi_\mathsf{B} \psi$\\ \hline
 final quasi-free state        & & \multicolumn{2}{|c|}{$(\omega \otimes \sigma_\mathsf{A} \otimes \sigma_\mathsf{B})(\Theta (\openone \otimes \cdot))$}\\ \hline
 processing region          &           & $N_\mathsf{A}$        & $N_\mathsf{B}$\\ \hline
 local mode         &           & $\varphi_\mathsf{A}(f^\mathsf{A}_1)$, $\varphi_\mathsf{A}(f^\mathsf{A}_2)$  & $\varphi_\mathsf{B}(f^\mathsf{B}_1)$, $\varphi_\mathsf{B}(f^\mathsf{B}_2)$\\ \hline
 final quasi-free state of the two modes       & & \multicolumn{2}{|c|}{$\sigma'$}\\ \hline
 \hline
\end{tabular}
\caption{Summary of terminology and notation.}
\label{tbl_overview}
\end{table}

The symmetric part of the truncated two-point function of $\omega \otimes \sigma_\mathsf{A} \otimes \sigma_\mathsf{B}$ is given by the direct sum $\frac{1}{2} \beta := \frac{1}{2} (\beta_\mathsf{S} \oplus \beta_\mathsf{A} \oplus \beta_\mathsf{B})$. Let us define $\sigma'$ to be the restriction of $(\omega \otimes \sigma)(\Theta(1\!\!\!\!1 \otimes \cdot))$ to the combination of the two local modes under consideration. We see that $\sigma'$ is quasi-free as well, in particular for $i,j,k,l \in \{1,2\}$
\begin{equation}
    \begin{aligned}
    &\sigma'\qty(\qty{\varphi_{\mathsf{A},\mathsf{B}} {\footnotesize\mqty(f^\mathsf{A}_i \\ f^\mathsf{B}_j )}, \varphi_{\mathsf{A},\mathsf{B}}  {\footnotesize \mqty(f^\mathsf{A}_k \\ f^\mathsf{B}_l )}}) - 2 \sigma'\qty(\varphi_{\mathsf{A},\mathsf{B}} {\footnotesize\mqty(f^\mathsf{A}_i \\ f^\mathsf{B}_j )}) \sigma'\qty( \varphi_{\mathsf{A},\mathsf{B}}  {\footnotesize\mqty(f^\mathsf{A}_k \\ f^\mathsf{B}_l )})= \beta \qty(\theta {\footnotesize\mqty(0\\f^\mathsf{A}_i\\f^\mathsf{B}_j)}, \theta{\footnotesize\mqty(0\\f^\mathsf{A}_k\\f^\mathsf{B}_l)}),
    \end{aligned}
\end{equation}
where $\varphi_{\mathsf{A},\mathsf{B}} {\footnotesize\mqty(f^\mathsf{A}_i \\ f^\mathsf{B}_j )}:= \varphi_\mathsf{A}(f^\mathsf{A}_i) \otimes \openone + \openone \otimes \varphi_\mathsf{B}(f^\mathsf{B}_j)$, see also {the last line of} Eq.~\eqref{eq_two-point_func}. As before, $\theta$ is defined such that 
\begin{equation}
    \begin{aligned}
    \Theta \qty(\Xi \qty(\vec{g}) ) =  \Xi \qty(\theta\vec{g}),
    \end{aligned}
\end{equation}
for $\Xi {\footnotesize\qty(\vec{g})} := \psi (g^0) \otimes \openone + \openone \otimes \varphi_{\mathsf{A},\mathsf{B}} {\footnotesize\mqty(g^1 \\ g^2 )}$, with $\vec{g}$ supported in $M \setminus J^-(K_\mathsf{A} \cup K_\mathsf{B})$. Let us introduce the following notation

\begin{equation}
    \begin{aligned}
    \mqty(F_j^{\mathsf{A},{\mathsf S}}\\ F_j^{\mathsf{A},{\mathsf A}} \\F_j^{\mathsf{A},{\mathsf B}}) &:=\theta \mqty(0\\ f^\mathsf{A}_j \\ 0) , \qquad \mqty(F_j^{\mathsf{B},{\mathsf S}}\\ F_j^{\mathsf{B},{\mathsf A}} \\F_j^{\mathsf{B},{\mathsf B}}) &:=\theta \mqty(0\\ 0 \\ f^\mathsf{B}_j).
    \end{aligned}
\end{equation}
We then have $A_{jk} =  \sum_l \beta_l(F_j^{\mathsf{A},l},F_k^{\mathsf{A},l})$, $B_{jk} =  \sum_l \beta_l(F_j^{\mathsf{B},l},F_k^{\mathsf{B},l})$ and $C_{jk} =  \sum_l \beta_l(F_j^{\mathsf{A},l},F_k^{\mathsf{B},l})$, where the sums run over $l \in \{{\mathsf S, \mathsf A, \mathsf B}\}$.

Let us now assume that $K_\mathsf{A}$ is spacelike separated from $K_\mathsf{B}$, from which, together with causal factorisation (see Corollary D.2.~in~\cite{fewster2018quantum}), it follows that $\theta=\hat{\theta}_1 \circ \hat{\theta}_2 = \hat{\theta}_2 \circ \hat{\theta}_1$. Then one can see that 
\begin{equation}
    \begin{aligned}
    \theta \mqty(0\\ f^\mathsf{A} \\ 0) &= \hat{\theta}_1 \mqty(0\\ f^\mathsf{A} \\ 0)= \mqty(F^{\mathsf{A},{\mathsf S}}\\ F^{\mathsf{A},{\mathsf A}} \\F^{\mathsf{A},{\mathsf B}}), \qquad
    \theta \mqty(0\\ 0 \\ f^\mathsf{B}) = \hat{\theta}_2 \mqty(0\\ 0 \\ f^\mathsf{B}) = \mqty(F^{\mathsf{B},{\mathsf S}}\\ F^{\mathsf{B},{\mathsf A}} \\F^{\mathsf{B},{\mathsf B}}),
    \end{aligned}
\end{equation}
which means in particular that $F^{\mathsf{A},{\mathsf B}} \equiv 0$ and $F^{\mathsf{B},{\mathsf A}} \equiv 0$, thereby simplifying the above expressions to: 
\begin{equation}
    \begin{aligned}
    A_{jk} &=  \sum\limits_{l\in \{{\mathsf S, \mathsf A}\}} \beta_l(F_j^{\mathsf{A},l},F_k^{\mathsf{A},l}), \quad B_{jk}=  \sum\limits_{l\in \{{\mathsf S, \mathsf B}\}} \beta_l(F_j^{\mathsf{B},l},F_k^{\mathsf{B},l}), \quad
    C_{jk} = \beta_{\mathsf S}(F_j^{\mathsf{A},{\mathsf S}},F_k^{\mathsf{B},{\mathsf S}}).
    \end{aligned}
\end{equation}
Given these expressions for the matrices $A, B, C$ we can now investigate the entanglement of the quasi-free state $\sigma'$ on the two local modes. 

\subsection{Non-perturbative analysis: The critical coupling}

The first observation that we can make at this stage is that for sufficiently regular $\beta$, we see that $A, B, C$ (or rather their entries) are continuous functions of the common coupling strength $\lambda$, so $A(\lambda), B(\lambda), C(\lambda)$, following from the Born-series expansion in Sec.~5.3 in~\cite{fewster2018quantum} at least for small $\lambda$. This means that for
\begin{equation}
\begin{aligned}
    p_S(\lambda):= \det(A) + \det(B)- \det (A) \det(B) +\tr (A s C s B s C^T s)- &\qty(1 +\det(C))^2,
    \end{aligned}
\end{equation}
(where we suppressed the $\lambda$-dependence of $A, B, C$), the \emph{Simon} condition can now be cast in the form
\begin{equation}
    \sigma' \text{ is entangled for coupling strength } \lambda \iff p_S(\lambda) >0.
\end{equation}
for the function $p_S:\mathbb{R} \to \mathbb{R}$, which is continuous at least in a neighbourhood around $\lambda =0$. 

The Simon condition is related to another measure of entanglement. Let us set $\Delta:= \det(A) + \det(B) - 2 \det(C)$ and $I_4:= \det(A) \det(B) + \det(C)^2 - \tr\qty(AsCsBsC^Ts)$, and also
\begin{equation}
    \begin{aligned}
    \nu_-(\lambda) := \sqrt{\frac{\Delta - \sqrt{\Delta^2 - 4I_4}}{2}},
    \end{aligned}
\end{equation}
which is the smallest symplectic eigenvalue of $\tilde{\gamma} = \Lambda{^T} \gamma \Lambda$ (see~\ref{sec_appendix_proof_lem_entanglement}), the covariance matrix associated to the partial transpose {of the} state {$\sigma'$}~\cite{Vidal_2002, Adesso_2004}. Based on $\nu_-$ we can define the \emph{negativity} of $\sigma'$~\cite{Vidal_2002, Adesso_2004} given by
\begin{equation}
    \begin{aligned}
    N(\lambda):= \max \qty{0, \frac{1-\nu_-(\lambda)}{2\nu_-(\lambda)}},
    \end{aligned}
\end{equation}
which is an \emph{entanglement monotone}, it does not increase under \emph{local operations and classical communication} (LOCC)~\cite{Vidal_2002}. The negativity is a very common entanglement measure in the literature about entanglement harvesting, however, we will continue using the \emph{Simon} condition, which is no restriction since~\cite{Vidal_2002, Adesso_2004}
\begin{equation}
    \begin{aligned}
    &\sigma' \text{ is entangled for coupling strength } \lambda\\
    &\iff p_S(\lambda) >0 \iff \nu_-(\lambda) < 1 \iff N(\lambda) > 0.
    \end{aligned}
\end{equation}

Let us investigate the situation $\lambda=0$, i.e., the case of no coupling between system and probes. It is easy to see that $C(0)=0$. We set $A(0)=:A_0$, $B(0)=:B_0$ and get
\begin{equation}
\begin{aligned}
    p_S(0)&= \det(A_0) + \det(B_0)-\det (A_0) \det(B_0)  - 1\\
    &= -(\det (A_0) - 1)(\det (B_0) - 1).
    \end{aligned}
\end{equation}
$A_0$ is obviously the covariance matrix associated to $\sigma_\mathsf{A}$ (when restricted to the single mode of interest) and $B_0$ is the covariance matrix associated to $\sigma_\mathsf{B}$ (also restricted), i.e., $(A_0)_{jk}= \beta_{\mathsf A}(f^\mathsf{A}_j,f^\mathsf{A}_k)$ and $(B_0)_{jk}= \beta_{\mathsf B}(f^\mathsf{B}_j,f^\mathsf{B}_k)$. {According to Eq.~\eqref{eq_positivity_condition_mode} and the comments below it, $\det(A_0) \geq 0$ and $\det(B_0)\geq0$, hence $p_S(0) \leq 0$.}

We can now make the following striking observation.

\begin{theo}
Suppose that $\det (A_0)\neq 1$ and $\det(B_0)\neq 1$, then there exists $\lambda_{\mathrm{min}} >0$ such that $\sigma'$ is \emph{not} entangled for all coupling strengths $\lambda$ with $|\lambda| \leq \lambda_{\mathrm{min}}$.
\end{theo}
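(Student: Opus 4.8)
The plan is to combine three ingredients already established in the preceding discussion: the characterisation that $\sigma'$ is entangled for coupling strength $\lambda$ if and only if $p_S(\lambda) > 0$ (Lemma~\ref{lem_entangled_bipartite_states} in the \emph{Simon} form), the continuity of $\lambda \mapsto p_S(\lambda)$ in a neighbourhood of the origin, and the explicit value $p_S(0) = -(\det(A_0) - 1)(\det(B_0) - 1)$.

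First I would pin down the sign of $p_S(0)$. The uncertainty relation Eq.~\eqref{eq_positivity_condition_mode}, applied to the single mode of probe $\mathsf{A}$ and to that of probe $\mathsf{B}$ separately, gives $\det(A_0) \geq 1$ and $\det(B_0) \geq 1$, so both factors $\det(A_0) - 1$ and $\det(B_0) - 1$ are non-negative. Under the hypotheses $\det(A_0) \neq 1$ and $\det(B_0) \neq 1$ they are in fact \emph{strictly} positive (equivalently, by Lemma~\ref{lem_pure_det}, neither restricted initial probe state is pure), whence $p_S(0) < 0$ strictly.

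Next I would invoke continuity: the entries of $A(\lambda), B(\lambda), C(\lambda)$ are values of the bilinear forms $\beta_\mathsf{S}, \beta_\mathsf{A}, \beta_\mathsf{B}$ on the test functions obtained by applying $\theta$ to $f^\mathsf{A}_j$ and $f^\mathsf{B}_j$, and for sufficiently regular $\beta$ these depend continuously (indeed analytically, via the Born series for the advanced Green operator $E_T^-$ discussed in Sec.~5.3 of~\cite{fewster2018quantum} and~\cite{fewster-non-local}) on $\lambda$ for $\lambda$ near $0$; since $p_S$ is a fixed polynomial in these entries, $\lambda \mapsto p_S(\lambda)$ is continuous near $\lambda = 0$. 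From $p_S(0) < 0$ and this continuity there is $\lambda_\mathrm{min} > 0$ with $p_S(\lambda) < 0$ for all $|\lambda| \leq \lambda_\mathrm{min}$; in particular the \emph{Simon} condition is never satisfied there, so $\sigma'$ is not entangled, which is the claim.

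I do not expect a serious obstacle: the whole content sits in the \emph{strict} inequality $p_S(0) < 0$, which is precisely where the hypothesis enters — i.e.\ the genuine mixedness of the restricted initial probe states, automatic for quasi-free Reeh-Schlieder states by Lemma~\ref{lem_restriction_mixed} — together with enough regularity of the $\beta$'s to make $\lambda \mapsto p_S(\lambda)$ honestly continuous at the origin rather than merely a formal power series. If one wanted a quantitative $\lambda_\mathrm{min}$ one would instead have to control the first few Born-series coefficients, which is the subject of the perturbative remarks that follow.
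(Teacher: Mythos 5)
Your proposal is correct and follows the paper's own argument exactly: the paper likewise computes $p_S(0) = -(\det(A_0)-1)(\det(B_0)-1)$, notes it is strictly negative under the hypotheses, and concludes by continuity of $p_S$ at $\lambda=0$ together with the Simon characterisation of entanglement. Your version is in fact slightly more careful, since you correctly cite $\det(A_0)\geq 1$, $\det(B_0)\geq 1$ from the single-mode uncertainty relation (the paper's text just before the theorem states ``$\geq 0$'', an apparent typo), and you spell out why $p_S$ is continuous near the origin.
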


\begin{proof}
Under the assumption we have that $p_S(0) <0$. As $p_S$ is continuous at $\lambda=0$, the statement follows. 
\end{proof}

This means that unless one of the terms $\det(A_0), \det(B_0)$ equals $1$, there is always a ``coupling strength threshold for entanglement harvesting'' in the $\lambda$-parameter space, i.e., the coupling needs to reach a certain critical strength before it can possibly induce any entanglement at all. Lemma~\ref{lem_pure_det} shows that $\det (A_0)\neq 1$ is equivalent to $\hat{\sigma}_\mathsf{A}$ {(restricted to the local mode)} being a mixed state and as shown in Lemma~\ref{lem_restriction_mixed}, this is \emph{always} the case for a quasi-free Reeh-Schlieder state $\sigma_\mathsf{A}$.

\begin{corol}
If both probes are prepared in quasi-free Reeh-Schlieder states and if the system is prepared in a quasi-free state, then there exists a critical minimal coupling for entanglement harvesting.
\end{corol}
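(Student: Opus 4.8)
The plan is to chain together the three results already established above --- the preceding theorem, Lemma~\ref{lem_pure_det}, and Lemma~\ref{lem_restriction_mixed} --- since the content of the corollary is just that the hypothesis $\det(A_0) \neq 1 \neq \det(B_0)$ of the theorem is automatically met once the probe preparation states are quasi-free Reeh-Schlieder states, while the quasi-freeness of the system state is what guarantees that $\sigma'$ is quasi-free so that the \emph{Simon} condition (and hence the theorem) is applicable in the first place.

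Concretely, I would first recall from the interacting-structure discussion that $A_0$ and $B_0$ are precisely the covariance matrices of the initial probe states restricted to the respective local modes, i.e.\ $(A_0)_{jk} = \beta_\mathsf{A}(f^\mathsf{A}_j, f^\mathsf{A}_k)$ and $(B_0)_{jk} = \beta_\mathsf{B}(f^\mathsf{B}_j, f^\mathsf{B}_k)$; equivalently, these are the covariance matrices of the quasi-free states $\hat{\sigma}_\mathsf{A}$, $\hat{\sigma}_\mathsf{B}$ on the CCR-$C^*$-algebras of $\mathsf{A}$'s and $\mathsf{B}$'s local modes. Next, since these local modes are localisable in processing regions $N_\mathsf{A}, N_\mathsf{B}$ whose causal complements each contain a region (a standing assumption of the harvesting setup), and since a Reeh-Schlieder state is by definition Reeh-Schlieder with respect to \emph{every} region --- in particular with respect to some region $L_\mathsf{J} \subseteq N_\mathsf{J}^\perp$ --- Lemma~\ref{lem_restriction_mixed} applies and shows that $\hat{\sigma}_\mathsf{A}$ restricts to a mixed state on the CCR-$C^*$-algebra of $\mathsf{A}$'s local mode, and likewise for $\mathsf{B}$. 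By Lemma~\ref{lem_pure_det}, mixedness of such a restricted quasi-free state is equivalent to the determinant of its covariance matrix being different from (in fact strictly greater than) one, so $\det(A_0) \neq 1$ and $\det(B_0) \neq 1$. The hypotheses of the preceding theorem are thereby verified, and it produces the desired $\lambda_{\mathrm{min}} > 0$ below which $\sigma'$ fails to be entangled --- that is, the critical minimal coupling for entanglement harvesting with these two local modes.

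The only delicate point is bookkeeping rather than genuine mathematics: one must ensure that the region condition required by Lemma~\ref{lem_restriction_mixed} (existence of a region inside $N_\mathsf{J}^\perp$) has been carried through the whole setup --- it has, as the local modes were introduced exactly for such $N_\mathsf{A}, N_\mathsf{B}$ --- and that ``Reeh-Schlieder state'' is read in the strong sense, so that the auxiliary region $L_\mathsf{J} \subseteq N_\mathsf{J}^\perp$ can be chosen. No condition on the system state beyond quasi-freeness is needed, and the possible non-vanishing of the probe one-point functions is harmless, because both the purity criterion of Lemma~\ref{lem_pure_det} and the matrices $A_0, B_0$ depend only on the symmetric parts $\beta_\mathsf{A}, \beta_\mathsf{B}$ of the truncated two-point functions; the continuity of $p_S$ near $\lambda = 0$ that the theorem relies on has already been recorded from the Born-series expansion.
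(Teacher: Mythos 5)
Your proposal is correct and follows essentially the same route as the paper: the corollary is obtained exactly by combining the preceding theorem with Lemma~\ref{lem_restriction_mixed} (Reeh-Schlieder probe states restrict to mixed states on the local modes, using that the processing regions' causal complements contain regions) and Lemma~\ref{lem_pure_det} (mixedness of the restricted quasi-free state is equivalent to $\det(A_0)\neq 1$, resp.\ $\det(B_0)\neq 1$). Your additional bookkeeping remarks --- quasi-freeness of the system state ensuring $\sigma'$ is quasi-free so the Simon condition applies, and the irrelevance of one-point functions since $A_0$, $B_0$ depend only on $\beta_\mathsf{A}$, $\beta_\mathsf{B}$ --- are consistent with the paper's discussion preceding the corollary.
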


\subsection{Perturbative analysis}

The existence of the critical coupling forms a potentially serious obstacle for any realistic attempt to harvest entanglement, however, the hope is that the threshold can be made small enough by sufficiently tuning the initial probe states {and by choosing appropriate local modes}. In this section we want to start a perturbative analysis and speculate about conditions for successful entanglement harvesting in the perturbative regime.

Again following~\cite{fewster2018quantum} we have
\begin{equation}
    \begin{aligned}
    F_j^{\mathsf{A},{\mathsf A}}&= f^\mathsf{A}_j + \lambda^2 {U_\mathsf{A}} f^\mathsf{A}_j + {\lambda^4 U_\mathsf{A}^2 f^\mathsf{A}_j +}   \mathcal{O}(\lambda^{6}), F_j^{\mathsf{A},{\mathsf S}}\!= -\lambda R_\mathsf{A} E_{\mathsf{A}}^- f^\mathsf{A}_j {- \lambda^3 R_\mathsf{A} E_{\mathsf{A}}^- U_\mathsf{A} f^\mathsf{A}_j}\!+\! \mathcal{O}(\lambda^{5}),\\
    F_k^{\mathsf{B},{\mathsf B}}&= {f_k^\mathsf{B}} + \lambda^2  {U_\mathsf{B}} f^\mathsf{B}_k + {\lambda^4 U_\mathsf{B}^2 f_k^\mathsf{B}+} \mathcal{O}(\lambda^{6}),
    F_k^{\mathsf{B},{\mathsf S}}\!= -\lambda R_\mathsf{B} E_{\mathsf{B}}^- f^\mathsf{B}_k {- \lambda^3 R_\mathsf{B} E_{\mathsf{B}}^- U_\mathsf{B} f^\mathsf{B}_k}\!+\! \mathcal{O}(\lambda^{5}),
    \end{aligned}
\end{equation}
where, as before, $E^-$, $E^-_\mathsf{A}$ and $E^-_\mathsf{B}$ are the advanced Green operators associated to the normally hyperbolic equations of motion of the system and the two probes respectively, $R_\mathsf{A}$ is the operator that multiplies pointwise with the coupling function $\rho_\mathsf{A}$, ${ U_\mathsf{A}:=R_\mathsf{A} E^- R_\mathsf{A} E_\mathsf{A}^-}$ {and} similar for $R_\mathsf{B}$ {and $U_\mathsf{B}$}. With this we get the expansions

\begin{equation}
    \begin{aligned}
    A_{jk} &= \beta_\mathsf{A}(f^\mathsf{A}_j,f^\mathsf{A}_k) + \lambda^2 \Big(\beta_\mathsf{S}( R_\mathsf{A} E_{\mathsf{A}}^- f^\mathsf{A}_j,R_\mathsf{A} E_{\mathsf{A}}^- f^\mathsf{A}_k)+\beta_\mathsf{A}(f^\mathsf{A}_j,{U_\mathsf{A}} f^\mathsf{A}_k)+\beta_\mathsf{A}({U_\mathsf{A}} f^\mathsf{A}_j,f^\mathsf{A}_k)\Big) \\
    &{ +\lambda^4 \Big( \beta_\mathsf{S}(R_\mathsf{A}E_\mathsf{A}^- f_j^\mathsf{A}, R_\mathsf{A}E_\mathsf{A}^- U_\mathsf{A}f_k^\mathsf{A}) + \beta_\mathsf{S}(R_\mathsf{A}E_\mathsf{A}^- U_\mathsf{A}f_j^\mathsf{A}, R_\mathsf{A}E_\mathsf{A}^- f_k^\mathsf{A})}\\
    &{+\beta_\mathsf{A}(f_j^\mathsf{A}, U_\mathsf{A}^2 f_k^\mathsf{A}) + \beta_\mathsf{A}(U_\mathsf{A}f_j^\mathsf{A}, U_\mathsf{A} f_k^\mathsf{A}) + \beta_\mathsf{A}(U_\mathsf{A}^2f_j^\mathsf{A},  f_k^\mathsf{A})\Big)}+\mathcal{O}(\lambda^{6}),\\
    B_{jk} &= \beta_\mathsf{B}(f^\mathsf{B}_j,f^\mathsf{B}_k) + \lambda^2 \Big(\beta_\mathsf{S}( R_\mathsf{B} E_{\mathsf{B}}^- f^\mathsf{B}_j,R_\mathsf{B} E_{\mathsf{B}}^- f^\mathsf{B}_k)+\beta_\mathsf{B}(f^\mathsf{B}_j,{U_\mathsf{B}} f^\mathsf{B}_k)+\beta_\mathsf{B}({U_\mathsf{B}} f^\mathsf{B}_j,f^\mathsf{B}_k)\Big) \\
    &{ +\lambda^4 \Big( \beta_\mathsf{S}(R_\mathsf{B}E_\mathsf{B}^- f_j^\mathsf{B}, R_\mathsf{B}E_\mathsf{B}^- U_\mathsf{B}f_k^\mathsf{B}) + \beta_\mathsf{S}(R_\mathsf{B}E_\mathsf{b}^- U_\mathsf{B}f_j^\mathsf{B}, R_\mathsf{B}E_\mathsf{B}^- f_k^\mathsf{B})}\\
    &{+\beta_\mathsf{B}(f_j^\mathsf{B}, U_\mathsf{B}^2 f_k^\mathsf{B}) + \beta_\mathsf{B}(U_\mathsf{B}f_j^\mathsf{B}, U_\mathsf{B} f_k^\mathsf{B}) + \beta_\mathsf{B}(U_\mathsf{B}^2f_j^\mathsf{B},  f_k^\mathsf{B})\Big)}+\mathcal{O}(\lambda^{6}),\\
    C_{jk} &= \lambda^2 \beta_\mathsf{S}(R_\mathsf{A} E_{\mathsf{A}}^- f^\mathsf{A}_j,R_\mathsf{B} E_{\mathsf{B}}^- f^\mathsf{B}_k) + \mathcal{O}(\lambda^4).
    \end{aligned}
    \label{eq_perturbative_expansion_ABC}
\end{equation}
Note that the initial probe states $\sigma_\mathsf{A},\sigma_\mathsf{B}$ contribute to $A$ and $B$ respectively through their values { on the local modes in the processing regions $N_\mathsf{A}, N_\mathsf{B}$ themselves, i.e.,} $\sigma_\mathsf{A}(\qty{\varphi_\mathsf{A}(f^\mathsf{A}_j), \varphi_\mathsf{A}(f^\mathsf{A}_k)})$ and $\sigma_\mathsf{B}(\qty{\varphi_\mathsf{B}(f^\mathsf{B}_j), \varphi_\mathsf{B}(f^\mathsf{B}_k)})$, \emph{but also} through { their correlations between the degrees of freedom of the processing region and (a region containing) the coupling zone and within the latter itself via, e.g.,}
\begin{equation}
    \begin{aligned}
    &\sigma_\mathsf{A}(\qty{\varphi_\mathsf{A}( f_j^\mathsf{A}),  \varphi_\mathsf{A}({U_\mathsf{A}} f^\mathsf{A}_k)}),\; \sigma_\mathsf{A}(\qty{\varphi_\mathsf{A}({U_\mathsf{A}}f^\mathsf{A}_j), \varphi_\mathsf{A}({U_\mathsf{A}} f^\mathsf{A}_k)}),\\
    &\sigma_\mathsf{B}(\qty{\varphi_\mathsf{B}( f_j^\mathsf{B}),  \varphi_\mathsf{B}({U_\mathsf{B}} f^\mathsf{B}_k)}),\; \sigma_\mathsf{B}(\qty{\varphi_\mathsf{B}({U_\mathsf{B}}f^\mathsf{B}_j), \varphi_\mathsf{B}({U_\mathsf{B}} f^\mathsf{B}_k)}),
    \end{aligned}
\end{equation} 
in addition to possible one-point function terms. We see that $p_S(\lambda)$ can then be written as a convergent power series in a neighbourhood around $\lambda=0$, so we write $p_S(\lambda) = p_S(0) + \sum_{n=1}^N \lambda^n p_n + \mathcal{O}(\lambda^{N+1})$. It holds that $p_n=0$ for odd $n$.

Let us now compute the coefficients $p_n$. For that we look at the {expansions} of the matrices $A, B, C$ in Eq.~\eqref{eq_perturbative_expansion_ABC} above and define $A_n, B_n, C_n$ to be the coefficients of $\lambda^n$ in the respective {expansions}. We have
\begin{equation}
    \begin{aligned}
    \det(A) =& \det(A_0) + \lambda^2 {\det(A_0) \Tr(A_0^{-1}A_2)} \\
    &+ \lambda^4 {(}\det(A_2) {+ \det(A_0) \Tr(A_0^{-1} A_4))}+ \mathcal{O}(\lambda^{6}),\\
    \det(B) =& \det(B_0) + \lambda^2 {\det(B_0) \Tr(B_0^{-1}B_2)} \\
    &+ \lambda^4 {(}\det(B_2) {+ \det(B_0) \Tr(B_0^{-1} B_4))}+ \mathcal{O}(\lambda^{6}),\\
    \det(C) =& \lambda^4 \det(C_2) + \mathcal{O}(\lambda^{6}),\\
    \Tr(AsCsBsC^Ts) =& \lambda^4 \Tr(A_0sC_2sB_0sC_2^Ts) + \mathcal{O}(\lambda^{6}).
    \label{eq_expansions_dets_tr}
    \end{aligned}
\end{equation}
We {then} find that 
\begin{equation}
    \begin{aligned}
    p_S(\lambda) = &-(\det(A_0)-1)(\det (B_0)-1)\\
    &+ \lambda^2 \Big((1-\det (A_0)) {\det(B_0) \Tr(B_0^{-1} B_2)} +  (1-\det (B_0)) {\det(A_0) \Tr(A_0^{-1} A_2)} \Big)\\
    &+ \lambda^4 \Big({\Tr} (A_0 s C_2 s B_0 s C_{2}^T s) - 2\det(C_2) \\
    &- {\det(A_0)\det(B_0)\Tr(A_0^{-1}A_2)\Tr(B_0^{-1}B_2)}\\ 
    &+ (1-\det(B_0))(\det(A_2) {+\det(A_0) \Tr(A_0^{-1} A_4)}) \\
    &+ (1-\det(A_0))(\det(B_2){+ \det(B_0) \Tr(B_0^{-1} B_4)})\Big)+ \mathcal{O}(\lambda^{6}).
    \end{aligned}
\end{equation}
Let us also recall that the uncertainty relation Eq.~\eqref{eq_uncertainty_two_modes} for $\sigma'$, which \emph{always} has to hold, implies $\forall \lambda: \tilde{p}_S(\lambda) \leq 0$, where
\begin{equation}
\begin{aligned}
    \tilde{p}_S(\lambda):= -\det (A) \det(B)  - &\qty(1 -\det(C))^2 +\tr (A s C s B s C^T s)+ \det(A) + \det(B).
    \end{aligned}
\end{equation}
Comparing the expansions of $p_S$ and $\tilde{p}_S$ shows that $p_0 = \tilde{p}_0$, $p_2 = \tilde{p}_2$ and $p_4 = \tilde{p}_4 - 4 \det(C_2)$. We again see the common theme of the present paper: the dominant contribution $p_0$ comes from the initial states of the probes, which is also responsible for the formation of the coupling strength threshold for entanglement harvesting. Moreover, entanglement harvesting is in general neither a leading nor a subleading order effect in perturbation theory but rather appears at sub-subleading order. This is different from the results obtained using non-local detectors in pure initial states (see for instance~\cite{Pozas-Kerstjens_2015}), where entanglement harvesting is found to be a leading order effect. The discrepancy is explained by noting that for \emph{pure} initial states{, i.e., for $\det(A_0) = \det(B_0) =1$}, both $p_0$ and $p_2$ vanish, rendering $\lambda^4 p_4$ the leading term {(which then also takes on a considerably simpler form)}. This also reveals a tension: a perturbative analysis requires the coupling to be small, yet entanglement harvesting does not occur below the critical coupling. Nevertheless, we hope a numerical investigation of the values of $p_0, p_2, p_4$ could shed some light on the magnitude of the critical coupling and thereby also on the plausibility of entanglement harvesting by local particle detectors in initially mixed quasi-free states.

\section{Discussion and outlook}
\label{Sec_outlook}

A pair of non-relativistic particle detectors (such as the quantum mechanical harmonic oscillator Unruh detector), coupled to a linear real scalar quantum field either singularly (on a worldline) or non-locally (around a worldline) provides a simple mathematical tool to investigate correlations and entanglement of states of the latter via the entanglement harvesting protocol. However, concerns may be raised whether it is valid to interpret such non-local particle detectors as \emph{realistic} physical systems (or approximations thereof) in \emph{every} regime, simply because the underlying classical equations of motion of the combined interacting structure are either singular or non-local and therefore somewhat unphysical from a perspective of fundamental interactions\footnote{It is undoubted that non-local particle detector models are good approximations in a non-relativistic regime; they are simplifications of the non-relativistic light-matter interaction of non-relativistic quantum electrodynamics (Pauli-Fierz Hamiltonian), see for instance Sec.~II.~in~\cite{Martinez2013_UdW_QED} and Sec.~II.~in~\cite{Pozas2016_UdW_hydrogen}.}. In particular, it is unclear if the results on entanglement harvesting by  non-local particle detectors carry over to more realistic situations.

This was our motivation to implement a fully covariant and local model-independent version of the entanglement harvesting protocol using local probes whose consistency we demonstrated. In a specific model, we introduced the notion of a \emph{local} particle detector given by a local mode of a linear real scalar probe field (possibly under the influence of external fields) that is bilinearly coupled to a system field of interest. We thereby avoid any singular or non-local behaviour. Our interpretation is that such a local probe is a  proxy for a physically realistic and in particular \emph{local} measurement device and that a corresponding local particle detector is a choice of a local degree of freedom of said device that is accessible to an experimenter. To support this point let us sketch a possible way of modelling a scalar field in a finitely extended cavity or trap by a linear normally hyperbolic equation of motion involving external fields, to which our results are directly applicable. A cavity may be modelled as a massive cylindrical shell of a certain thickness at every instance of a certain finite period of time. Let the spacetime-dependent external field $\chi$ be proportional to the (smoothed) characteristic function of this cylindrical shell. Its influence on a probe field $\varphi$ can now be described by the normally hyperbolic equation of motion $(\Box +m^2) \varphi + \chi \varphi =0$. For ``large $\chi$'' this may be interpreted as an effective implementation of vanishing Dirichlet boundary conditions of the field posed by the cavity. A local particle detector would then correspond to a local degree of freedom of $\varphi$ localisable in a region that is ``enclosed'' by the cavity. While the cavity or trap itself is described by an effective non-dynamical external field, this formulation allows to maintain the local and relativistic nature of the probe degree of freedom. It would be interesting to formulate and investigate this sketch in more detail.

The motivation for choosing the algebra of a local mode as this local degree of freedom (and not a different algebra spanned by different local observables) is obviously its simplicity and also the consequent analogy with the (harmonic oscillator) Unruh detector. Whether this means that local modes are a \emph{good} choice is a variant of the notoriously difficult question of ``the correspondence between physical apparatus and mathematical objects'', quickly touched upon in~\cite{Araki1967} (see also Part VI. in~\cite{haag2012local}), which unfortunately remains open. 

Continuing with our choice of an accessible local degree of freedom we see, that (by construction) the initial preparation state of such a local particle detector is the restriction of a physically reasonable state of the whole probe \emph{field}. Typical examples of such states are quasi-free, have the Reeh-Schlieder property, and consequently always restrict to \emph{mixed} quasi-free states on local modes.

This is a general fact and has interesting consequences for entanglement harvesting, in particular we found that if the system is in a quasi-free state and if both local modes are initially in an uncorrelated mixed quasi-free state, then there exists a critical coupling strength below which they cannot become entangled. This result is again general and we expect it to likewise apply to Unruh detectors: if they are prepared in a mixed quasi-free state instead of their pure ground state, then they also have a coupling strength threshold for entanglement harvesting. We hence believe that our results motivate an investigation of entanglement harvesting with Unruh detectors that are \emph{not} prepared in their ground states. 

An important question left open is whether reasonably prepared local particle detectors can harvest (not just correlation but) \emph{actual} entanglement. Heuristically, the reason for the mixedness of preparation states of local modes is their entanglement with the other local modes of the probe field. One would then expect that switching on an external field (our theoretical model for installing a cavity) serves not only the purpose of approximately confining certain local modes but also of isolating said modes from the rest of the field outside of the cavity. Furthermore, heuristics lead us to speculate that this allows one to prepare a local mode inside the cavity \emph{very} close to its ground state and thereby also to \emph{vastly} reduce the critical coupling. In theory, it is very easy to add the external field only to the two agents' probe fields, however, in practice, the system field might also ``feel'' the presence of the cavity. In particular, cavities might also isolate the system, which would result in a severe reduction of entanglement of the system observables enclosed by the cavities. Hence there would be practically no entanglement left to be harvested by the confined local particle detectors. This could be avoided by engineering cavities that are effectively transparent for the system field but capable of confining and isolating the probes. Anyway, the first interesting step would be to investigate the magnitude of the critical coupling in very simple cases. Despite the fact that the existence of the critical coupling could be an obstruction to the validity of a perturbative analysis (as hinted at in Sec.~\ref{Sec_E_H_local}), we believe that a numerical investigation could provide some indications in the case of massless Klein-Gordon fields in $1+1$ dimensional Minkowski spacetime.

Finally, it is certainly interesting to go beyond entanglement harvesting and to explore more possible applications of local particle detectors at the interface of relativistic quantum field theory and quantum information and to further assess their physical significance. An application to the Hawking effect is currently work in progress~\cite{Passegger_Hawking}.

\ack
It is a great pleasure to express my gratitude to Henning Bostelmann and Christopher~J.~Fewster for their support, supervision and invaluable input on the content and presentation of this paper, {to Leonardo Garc\'ia Heveling for helpful discussions,} to Yoobin Jeong for helpful comments on the text and to Albert Georg Passegger for useful correspondence. {Furthermore, I owe thanks to one of the anonymous referees for many valuable remarks, in particular, for their suggestion to simplify the first part of the proof of Lemma~\ref{lem_general_restriction_mixed} and for spotting a crucial error in a previous version of Eq.~\eqref{eq_expansions_dets_tr}.} Parts of this work were presented at the ``First Virtual Meeting 2021'' of the fpUK virtual centre as well as at ``RQI-Online 2020/21'' organised by the International Society for Relativistic Quantum Information and I thank the organisers and participants for stimulating discussions and constructive comments. This work was supported by a Mathematics Excellence Programme Studentship awarded by the Department of Mathematics at the University of York.

\appendix

\section{Proof of Theorem~\ref{theo_regions_of_separability}}
\label{sec_appendix_proof_theo_regions_of_sep}

We consider a bipartite probe theory $\mathcal{P}= \mathcal{P}_\mathsf{A} \otimes \mathcal{P}_\mathsf{B}$ coupled to a system theory $\mathcal{S}$ in coupling zone $K = K_\mathsf{A} \cup K_\mathsf{B}$ such that $K_\mathsf{A} \cap J^+(K_\mathsf{B}) = \emptyset$. We have individual scattering maps $\Theta_\mathsf{A}: \mathcal{S} \otimes \mathcal{P}_\mathsf{A} \to \mathcal{S} \otimes \mathcal{P}_\mathsf{A}$ and $\Theta_\mathsf{B}: \mathcal{S} \otimes \mathcal{P}_\mathsf{B} \to \mathcal{S} \otimes \mathcal{P}_\mathsf{B}$ and initial system state $\omega$ and probe state $\sigma = \sigma_\mathsf{A} \otimes \sigma_\mathsf{B}$. Moreover we consider regions $N_\mathsf{A}, N_\mathsf{B}$.

Theorem~\ref{theo_regions_of_separability} follows from the following lemmas.

\begin{lem}[Probe-induced probe observable]
There exists a linear map $\pi_{\omega, \Theta}: \mathcal{P} \to \mathcal{P}$ such that for all states $\sigma$ on $\mathcal{P}:$
\begin{equation}
    \begin{aligned}
    \sigma(\pi_{\omega, \Theta}(\cdot))=(\omega \otimes \sigma)(\Theta (1\!\!\!\!1 \otimes \cdot)).
    \end{aligned}
\end{equation}
Moreover, $\sigma_{\omega, \Theta}:=\sigma \circ \pi_{\omega, \Theta}$ is a state on $\mathcal{P}$.
\end{lem}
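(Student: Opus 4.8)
The plan is to construct $\pi_{\omega,\Theta}$ explicitly as a composition of linear maps, so that its independence of $\sigma$ is manifest, and then to read off both the defining identity and the state property by short computations. First I would introduce the linear inclusion $\iota\colon \mathcal{P}\to \mathcal{S}\otimes\mathcal{P}$, $C\mapsto \openone\otimes C$, together with the ``partial evaluation'' map $\omega\otimes\mathrm{id}_\mathcal{P}\colon \mathcal{S}\otimes\mathcal{P}\to\mathcal{P}$. The latter is the unique linear map with $S\otimes P\mapsto \omega(S)\,P$; its existence and well-definedness (i.e.\ independence of the chosen representation $\sum_j S_j\otimes P_j$ of an element of the algebraic tensor product) is precisely the universal property of the algebraic tensor product applied to the bilinear map $(S,P)\mapsto \omega(S)\,P$. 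I would then define $\pi_{\omega,\Theta}:=(\omega\otimes\mathrm{id}_\mathcal{P})\circ\Theta\circ\iota$, which is linear as a composition of linear maps and depends only on $\omega$ and $\Theta$.

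Next I would verify the defining identity. The key observation is that for any state $\sigma$ on $\mathcal{P}$ one has $\sigma\circ(\omega\otimes\mathrm{id}_\mathcal{P})=\omega\otimes\sigma$ as functionals on $\mathcal{S}\otimes\mathcal{P}$: both sides agree on elementary tensors, since $\sigma(\omega(S)\,P)=\omega(S)\,\sigma(P)=(\omega\otimes\sigma)(S\otimes P)$, hence on all of $\mathcal{S}\otimes\mathcal{P}$ by linearity. Applying this with argument $\Theta(\openone\otimes C)$ yields $\sigma(\pi_{\omega,\Theta}(C))=(\omega\otimes\sigma)(\Theta(\openone\otimes C))$ for all $C\in\mathcal{P}$, as required.

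Finally, for the state property of $\sigma_{\omega,\Theta}:=\sigma\circ\pi_{\omega,\Theta}$, I would use that $\Theta$ is a unital $*$-automorphism of $\mathcal{S}\otimes\mathcal{P}$ and that $\omega\otimes\sigma$ is a state on $\mathcal{S}\otimes\mathcal{P}$ by the product-state lemma above. Linearity is immediate. For normalisation, $\sigma_{\omega,\Theta}(\openone)=(\omega\otimes\sigma)(\Theta(\openone))=(\omega\otimes\sigma)(\openone)=1$. For positivity, fix $C\in\mathcal{P}$ and set $D:=\Theta(\openone\otimes C)$; since $\Theta$ preserves adjoints and products, $\Theta(\openone\otimes C^\dagger C)=\Theta(\openone\otimes C)^\dagger\,\Theta(\openone\otimes C)=D^\dagger D$, so $\sigma_{\omega,\Theta}(C^\dagger C)=(\omega\otimes\sigma)(D^\dagger D)\geq 0$ by positivity of the product state.

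I do not anticipate a genuine obstacle; the only point requiring mild care is the appeal to the universal property that makes $\omega\otimes\mathrm{id}_\mathcal{P}$ well defined on the algebraic tensor product, after which everything reduces to bookkeeping with the $*$-automorphism property of $\Theta$ and the positivity of $\omega\otimes\sigma$.
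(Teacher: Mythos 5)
Your construction is exactly the paper's: your $\omega\otimes\mathrm{id}_{\mathcal{P}}$ is the slice map $\tilde{\eta}_\omega$ defined there, and $\pi_{\omega,\Theta}=(\omega\otimes\mathrm{id}_{\mathcal{P}})\circ\Theta\circ\iota$ coincides with $\tilde{\eta}_\omega(\Theta(\openone\otimes\cdot))$. The remaining verifications (normalisation and positivity via the unital $*$-automorphism property of $\Theta$ and positivity of the product state) are precisely the steps the paper delegates to Sec.~3.2 of~\cite{fewster2018quantum}, so your proof is correct and follows essentially the same route, merely written out in full.
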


\begin{proof}
Let us look at $\eta_\omega: \mathcal{S} \otimes \mathcal{P} \to \mathcal{P}$ defined by the linear extension of $\tilde{\eta}_\omega(A \otimes B) = \omega(A) B$. Then set $\pi_{\omega, \Theta}(\cdot) := \tilde{\eta}_\omega( \Theta(1\!\!\!\!1 \otimes \cdot))$. The rest of the proof follows the similar discussion in Sec.~3.2 in~\cite{fewster2018quantum}.
\end{proof}

Let us make the following observation:
\begin{equation}
\begin{aligned}
\hat{\Theta}_{\sf A} \circ \hat{\Theta}_{\sf B} (\openone \otimes O_{\sf A} \otimes O_{\sf B}) &=\hat{\Theta}_{\sf A} \circ \hat{\Theta}_{\sf B} (\openone \otimes O_{\sf A} \otimes \openone \cdot  \openone \otimes \openone \otimes O_{\sf B})\\
&= \qty(\hat{\Theta}_{\sf A} \circ \hat{\Theta}_{\sf B} (\openone \otimes O_{\sf A} \otimes \openone)) \cdot \qty(  \hat{\Theta}_{\sf A} \circ \hat{\Theta}_{\sf B} (\openone \otimes \openone \otimes O_{\sf B}))\\
&= \qty(\hat{\Theta}_{\sf A} (\openone \otimes O_{\sf A} \otimes \openone)) \cdot \qty(  \hat{\Theta}_{\sf A} \circ \hat{\Theta}_{\sf B} (\openone \otimes \openone \otimes O_{\sf B}))\\
&= \qty( \Theta_{\sf A} (\openone \otimes O_{\sf A}) \otimes \openone) \cdot \qty(  \hat{\Theta}_{\sf A} \circ \hat{\Theta}_{\sf B} (\openone \otimes \openone \otimes O_{\sf B})).
\end{aligned}
\end{equation}

With this we can show the following.

\begin{lem}
For an arbitrary region $N_\mathsf{A}$ and $C \in \mathcal{P}_\mathsf{A}(N_\mathsf{A}) \otimes \mathcal{P}_\mathsf{B}(K_\mathsf{B}^\perp)$ we have that 
\begin{equation}
    \begin{aligned}
    &(\omega \otimes \sigma_\mathsf{A} \otimes \sigma_\mathsf{B})(\hat{\Theta}_{\mathsf{A}} \circ \hat{\Theta}_{\mathsf{B}} (1\!\!\!\!1 \otimes C))= (\sigma_{\mathsf{A}, \omega, \Theta_{\mathsf{A}}} \otimes \sigma_\mathsf{B})(C).
    \end{aligned}
\end{equation}
\end{lem}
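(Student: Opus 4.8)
The plan is to reduce to elementary tensors and then use the algebraic identity displayed immediately above the lemma together with the triviality of $\Theta_\mathsf{B}$ on observables localised in $K_\mathsf{B}^\perp$. First I would dispose of the degenerate case: if $K_\mathsf{B}^\perp$ is empty there is nothing to prove, so I assume it is a region. Any $C \in \mathcal{P}_\mathsf{A}(N_\mathsf{A}) \otimes \mathcal{P}_\mathsf{B}(K_\mathsf{B}^\perp)$ is a finite sum $C = \sum_j O_\mathsf{A}^{(j)} \otimes O_\mathsf{B}^{(j)}$ with $O_\mathsf{A}^{(j)} \in \mathcal{P}_\mathsf{A}(N_\mathsf{A})$ and $O_\mathsf{B}^{(j)} \in \mathcal{P}_\mathsf{B}(K_\mathsf{B}^\perp)$, and since $\hat{\Theta}_\mathsf{A}$, $\hat{\Theta}_\mathsf{B}$ and all the states in play are linear, it suffices to establish the identity for a single term $C = O_\mathsf{A} \otimes O_\mathsf{B}$ of this form.

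For such a $C$ I would invoke the identity displayed above the lemma,
\begin{equation*}
\hat{\Theta}_\mathsf{A} \circ \hat{\Theta}_\mathsf{B}(\openone \otimes O_\mathsf{A} \otimes O_\mathsf{B}) = \big(\Theta_\mathsf{A}(\openone \otimes O_\mathsf{A}) \otimes \openone\big) \cdot \big(\hat{\Theta}_\mathsf{A} \circ \hat{\Theta}_\mathsf{B}(\openone \otimes \openone \otimes O_\mathsf{B})\big),
\end{equation*}
and then argue that the last factor equals $\openone \otimes \openone \otimes O_\mathsf{B}$. Since $O_\mathsf{B}$ is localisable in $K_\mathsf{B}^\perp$ we have $\openone_\mathcal{S} \otimes O_\mathsf{B} \in (\mathcal{S} \otimes \mathcal{P}_\mathsf{B})(K_\mathsf{B}^\perp)$, so Lemma~\ref{lem_theta_spacelike_loc_change} applied to the probe $\mathcal{P}_\mathsf{B}$ gives $\Theta_\mathsf{B}(\openone_\mathcal{S} \otimes O_\mathsf{B}) = \openone_\mathcal{S} \otimes O_\mathsf{B}$; consequently $\hat{\Theta}_\mathsf{B}(\openone \otimes \openone \otimes O_\mathsf{B}) = \openone \otimes \openone \otimes O_\mathsf{B}$ because $\hat{\Theta}_\mathsf{B} = \Theta_\mathsf{B} \otimes_2 \openone$ acts as the identity on the $\mathcal{P}_\mathsf{A}$ slot, and then $\hat{\Theta}_\mathsf{A} = \Theta_\mathsf{A} \otimes_3 \openone$ acts as the identity on the $\mathcal{P}_\mathsf{B}$ slot and fixes the unit in the remaining two slots, so $\hat{\Theta}_\mathsf{A}(\openone \otimes \openone \otimes O_\mathsf{B}) = \openone \otimes \openone \otimes O_\mathsf{B}$. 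This slot-tracking, together with the verification that $\openone_\mathcal{S} \otimes O_\mathsf{B}$ genuinely lies in the local algebra of $K_\mathsf{B}^\perp$, is the only non-routine point of the argument.

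To finish I would apply the product state $\omega \otimes \sigma_\mathsf{A} \otimes \sigma_\mathsf{B}$. Writing $\Theta_\mathsf{A}(\openone \otimes O_\mathsf{A}) = \sum_k S_k \otimes P_k \in \mathcal{S} \otimes \mathcal{P}_\mathsf{A}$, the previous step gives $\hat{\Theta}_\mathsf{A} \circ \hat{\Theta}_\mathsf{B}(\openone \otimes O_\mathsf{A} \otimes O_\mathsf{B}) = \sum_k S_k \otimes P_k \otimes O_\mathsf{B}$, whence
\begin{equation*}
(\omega \otimes \sigma_\mathsf{A} \otimes \sigma_\mathsf{B})\big(\hat{\Theta}_\mathsf{A} \circ \hat{\Theta}_\mathsf{B}(\openone \otimes O_\mathsf{A} \otimes O_\mathsf{B})\big) = \Big(\sum_k \omega(S_k)\,\sigma_\mathsf{A}(P_k)\Big)\,\sigma_\mathsf{B}(O_\mathsf{B}) = (\omega \otimes \sigma_\mathsf{A})\big(\Theta_\mathsf{A}(\openone \otimes O_\mathsf{A})\big)\,\sigma_\mathsf{B}(O_\mathsf{B}).
\end{equation*}
By the Probe-induced probe observable lemma, $(\omega \otimes \sigma_\mathsf{A})(\Theta_\mathsf{A}(\openone \otimes O_\mathsf{A})) = \sigma_\mathsf{A}(\pi_{\omega,\Theta_\mathsf{A}}(O_\mathsf{A})) = \sigma_{\mathsf{A},\omega,\Theta_\mathsf{A}}(O_\mathsf{A})$, so the right-hand side is exactly $(\sigma_{\mathsf{A},\omega,\Theta_\mathsf{A}} \otimes \sigma_\mathsf{B})(O_\mathsf{A} \otimes O_\mathsf{B})$; summing over $j$ recovers the general case. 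The main obstacle is therefore not conceptual but bookkeeping: keeping straight the tensor slots on which $\hat{\Theta}_\mathsf{A}$ and $\hat{\Theta}_\mathsf{B}$ act, and making the localisation statement $\openone_\mathcal{S} \otimes O_\mathsf{B} \in (\mathcal{S}\otimes\mathcal{P}_\mathsf{B})(K_\mathsf{B}^\perp)$ precise enough to invoke Lemma~\ref{lem_theta_spacelike_loc_change}.
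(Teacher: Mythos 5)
Your proposal is correct and follows essentially the same route as the paper's proof: decompose $C$ into elementary tensors, apply the displayed observation to factor out $\Theta_\mathsf{A}(\openone\otimes A_j)\otimes B_j$, use that $\hat{\Theta}_\mathsf{A}\circ\hat{\Theta}_\mathsf{B}$ fixes $\openone\otimes\openone\otimes B_j$ for $B_j$ localisable in $K_\mathsf{B}^\perp$, and evaluate the product state via the probe-induced probe observable lemma. The only difference is that you spell out, via Lemma~\ref{lem_theta_spacelike_loc_change} and the slot conventions for $\hat{\Theta}_\mathsf{A},\hat{\Theta}_\mathsf{B}$, the triviality step that the paper merely asserts with ``we notice that'', which is a welcome but inessential elaboration.
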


Remark: We do not need to assume any relationship between $\Theta$ and $\hat{\Theta}_{\mathsf{A}} \circ \hat{\Theta}_{\mathsf{B}}$, nor any causal relation between $K_\mathsf{A}$ and $K_\mathsf{B}$ in this lemma.

\begin{proof}
We write $C= \sum_j A_j \otimes B_j$ for $j$ in a finite index set and for $A_j \in \mathcal{P}_\mathsf{A}(N_\mathsf{A})$ and $B_j \in \mathcal{P}_\mathsf{B}(K_\mathsf{B}^\perp)$. For $B \in \mathcal{P}_\mathsf{B}(K_\mathsf{B}^\perp)$ we notice that $\qty(  \hat{\Theta}_{\sf A} \circ \hat{\Theta}_{\sf B} (\openone \otimes \openone \otimes O_{\sf B})) = \openone \otimes \openone \otimes O_{\sf B}$. Then, keeping in mind the observation above, we have
\begin{equation}
    \begin{aligned}
    (\omega \otimes \sigma_\mathsf{A} \otimes \sigma_\mathsf{B})(\hat{\Theta}_{\mathsf{A}} \circ \hat{\Theta}_{\mathsf{B}} (1\!\!\!\!1 \otimes C))&= \sum_j (\omega \otimes \sigma_\mathsf{A} \otimes \sigma_\mathsf{B})(\hat{\Theta}_\mathsf{A}\circ \hat{\Theta}_{\mathsf{B}} (1\!\!\!\!1 \otimes A_j \otimes B_j))\\
    &= \sum_j (\omega \otimes \sigma_\mathsf{A} \otimes \sigma_\mathsf{B})\qty( \Theta_{\sf A} (\openone \otimes A_j) \otimes B_j)\\
    &= \sum_j (\omega \otimes \sigma_\mathsf{A})(\Theta_{\mathsf{A}} (1\!\!\!\!1 \otimes A_j)) \sigma_\mathsf{B}(B_j)\\
    &= \sum_j \sigma_{\mathsf{A}, \omega, \Theta_{\mathsf{A}}}(A_j) \sigma_\mathsf{B}(B_j),
    \end{aligned}
\end{equation}
which shows the desired result.
\end{proof}

\begin{lem}
For $C\in \mathcal{P}_\mathsf{A}({K_\mathsf{A}^\perp}) \otimes \mathcal{P}_\mathsf{B}(N_\mathsf{B})$ where $N_\mathsf{B} \subseteq K_\mathsf{A}^\perp \cap M_\mathsf{B}^+$ precompact, we have that
\begin{equation}
    \begin{aligned}
    &(\omega \otimes \sigma_\mathsf{A} \otimes \sigma_\mathsf{B})(\hat{\Theta}_{\mathsf{A}} \circ \hat{\Theta}_{\mathsf{B}} (1\!\!\!\!1 \otimes C))= (\sigma_{\mathsf{A}} \otimes \sigma_{\mathsf{B}, \omega, \Theta_{\mathsf{B}}})(C).
    \end{aligned}
\end{equation}
\end{lem}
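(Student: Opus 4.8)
The plan is to repeat the reduction used in the two preceding lemmas. Write $C=\sum_j A_j\otimes B_j$ with $A_j\in\mathcal{P}_\mathsf{A}(N_\mathsf{A})$ for a region $N_\mathsf{A}\subseteq K_\mathsf{A}^\perp$ and $B_j\in\mathcal{P}_\mathsf{B}(N_\mathsf{B})$, factorise $\openone\otimes A_j\otimes B_j=(\openone\otimes A_j\otimes\openone)(\openone\otimes\openone\otimes B_j)$, and use that $\hat\Theta_\mathsf{A}\circ\hat\Theta_\mathsf{B}$ is an automorphism together with the multiplicativity identity $\hat\Theta_\mathsf{A}\circ\hat\Theta_\mathsf{B}(\openone\otimes O_\mathsf{A}\otimes O_\mathsf{B})=\big(\Theta_\mathsf{A}(\openone\otimes O_\mathsf{A})\otimes\openone\big)\cdot\big(\hat\Theta_\mathsf{A}\circ\hat\Theta_\mathsf{B}(\openone\otimes\openone\otimes O_\mathsf{B})\big)$ derived above. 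Since $\hat\Theta_\mathsf{B}$ acts trivially on the $\mathcal{P}_\mathsf{A}$-slot and, by Lemma~\ref{lem_theta_spacelike_loc_change}, $\Theta_\mathsf{A}(\openone\otimes A_j)=\openone\otimes A_j$ (as $A_j$ is localisable inside $K_\mathsf{A}^\perp$), the first factor reduces to $\openone\otimes A_j\otimes\openone$, and everything comes down to computing $\hat\Theta_\mathsf{A}\circ\hat\Theta_\mathsf{B}(\openone\otimes\openone\otimes B_j)$.

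Applying $\hat\Theta_\mathsf{B}$ first produces the element of $\mathcal{S}\otimes\openone\otimes\mathcal{P}_\mathsf{B}$ obtained by inserting $\Theta_\mathsf{B}(\openone\otimes B_j)\in\mathcal{S}\otimes\mathcal{P}_\mathsf{B}$ into the system and $\mathcal{P}_\mathsf{B}$ slots; unlike in the previous lemma this need not be trivial, since $N_\mathsf{B}$ need not lie in $K_\mathsf{B}^\perp$. The key step is to show that $\hat\Theta_\mathsf{A}$ fixes it, for which it suffices that $\Theta_\mathsf{B}(\openone\otimes B_j)$ be localisable in a connected region contained in $K_\mathsf{A}^\perp$ (then $\Theta_\mathsf{A}$, hence $\hat\Theta_\mathsf{A}$, acts trivially on its embedding, by Lemma~\ref{lem_theta_spacelike_loc_change} again). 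To localise $\Theta_\mathsf{B}(\openone\otimes B_j)$ I would use the causal localisation of the scattering map: $\Theta_\mathsf{B}(\openone\otimes B_j)$ commutes with every element of $\mathcal{S}\otimes\mathcal{P}_\mathsf{B}$ localisable in the causal complement of a compact set, and, because $N_\mathsf{B}$ lies in the out-region $M_\mathsf{B}^+$ and is precompact, this compact set can be taken to be $\mathcal{K}:=\overline{N_\mathsf{B}}\cup\big(\ch(K_\mathsf{B})\cap J^-(\overline{N_\mathsf{B}})\big)$ — intuitively, the observable couples to the interaction only along its backward light cone, which reaches the interaction only inside $\ch(K_\mathsf{B})$ — so that by the Haag property $\Theta_\mathsf{B}(\openone\otimes B_j)\in(\mathcal{S}\otimes\mathcal{P}_\mathsf{B})(L_\mathsf{B})$ for every connected region $L_\mathsf{B}\supseteq\ch(\mathcal{K})$. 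It then remains to verify the purely geometric statement $\ch(\mathcal{K})\subseteq K_\mathsf{A}^\perp$: the hypothesis $K_\mathsf{A}\cap J^+(K_\mathsf{B})=\emptyset$, i.e. $K_\mathsf{B}\cap J^-(K_\mathsf{A})=\emptyset$, gives $\ch(K_\mathsf{B})\cap J^-(K_\mathsf{A})=\emptyset$ (if $b\preceq q\preceq a$ with $b\in K_\mathsf{B}$, $a\in K_\mathsf{A}$, then $b\in K_\mathsf{B}\cap J^-(K_\mathsf{A})$), and $\overline{N_\mathsf{B}}\subseteq K_\mathsf{A}^\perp$ (which may be arranged using precompactness) gives $J^-(\overline{N_\mathsf{B}})\cap J^+(K_\mathsf{A})=\emptyset$ (if $a\preceq q\preceq n$ with $a\in K_\mathsf{A}$, $n\in\overline{N_\mathsf{B}}$, then $n\in J^+(K_\mathsf{A})$); hence both $\overline{N_\mathsf{B}}$ and $\ch(K_\mathsf{B})\cap J^-(\overline{N_\mathsf{B}})$ lie in $K_\mathsf{A}^\perp$, and since $K_\mathsf{A}^\perp$ is causally convex (if $p\preceq q\preceq r$ with $p,r\in K_\mathsf{A}^\perp$ then $q\in K_\mathsf{A}^\perp$), so does $\ch(\mathcal{K})$.

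Putting the pieces together, $\hat\Theta_\mathsf{A}\circ\hat\Theta_\mathsf{B}(\openone\otimes A_j\otimes B_j)$ equals the product of $\openone\otimes A_j\otimes\openone$ with the embedding of $\Theta_\mathsf{B}(\openone\otimes B_j)$ into the system and $\mathcal{P}_\mathsf{B}$ slots; writing $\Theta_\mathsf{B}(\openone\otimes B_j)=\sum_i S_{ij}\otimes P_{ij}$, evaluating in $\omega\otimes\sigma_\mathsf{A}\otimes\sigma_\mathsf{B}$ and using factorisation of product states gives $\sigma_\mathsf{A}(A_j)\sum_i\omega(S_{ij})\sigma_\mathsf{B}(P_{ij})=\sigma_\mathsf{A}(A_j)\,(\omega\otimes\sigma_\mathsf{B})(\Theta_\mathsf{B}(\openone\otimes B_j))=\sigma_\mathsf{A}(A_j)\,\sigma_{\mathsf{B},\omega,\Theta_\mathsf{B}}(B_j)$ by the probe-induced probe observable lemma; summing over $j$ yields $(\sigma_\mathsf{A}\otimes\sigma_{\mathsf{B},\omega,\Theta_\mathsf{B}})(C)$. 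The main obstacle is the localisation claim in the second paragraph: the cruder bound by a connected region containing $K_\mathsf{B}\cup\overline{N_\mathsf{B}}$ need not fit inside $K_\mathsf{A}^\perp$ when $K_\mathsf{B}$ is partly to the future of $K_\mathsf{A}$, so one genuinely has to exploit that $N_\mathsf{B}$ sits in the out-region $M_\mathsf{B}^+$ to trim the relevant support down to $\mathcal{K}$; the precompactness hypothesis enters precisely to make $\mathcal{K}$ compact and to keep $\overline{N_\mathsf{B}}$ inside $K_\mathsf{A}^\perp$.
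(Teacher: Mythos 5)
Your reduction is structurally identical to the paper's: decompose $C=\sum_j A_j\otimes B_j$, use the multiplicativity observation to peel off the factor $\openone\otimes A_j\otimes\openone$ (noting $\Theta_\mathsf{A}(\openone\otimes A_j)=\openone\otimes A_j$ by Lemma~\ref{lem_theta_spacelike_loc_change}, since $K_\mathsf{A}^\perp$ is a region), and reduce everything to the single identity $\hat\Theta_\mathsf{A}\circ\hat\Theta_\mathsf{B}(\openone\otimes\openone\otimes B_j)=(\Theta_\mathsf{B}(\openone\otimes B_j))\otimes_2\openone$, after which the evaluation against $\omega\otimes\sigma_\mathsf{A}\otimes\sigma_\mathsf{B}$ and the appeal to the probe-induced probe observable lemma are exactly as in the paper. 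The difference lies in how that identity is obtained. The paper simply invokes Theorem~2 of~\cite{bostelmann2020impossible}, whose hypotheses ($N_\mathsf{B}$ precompact and contained in $M_\mathsf{B}^+\cap K_\mathsf{A}^\perp$) are precisely the assumptions of the lemma; you instead attempt to rederive its content. Your causal-geometry bookkeeping is correct ($\ch(K_\mathsf{B})\cap J^-(K_\mathsf{A})=\emptyset$, $J^-(\overline{N_\mathsf{B}})\cap J^+(K_\mathsf{A})=\emptyset$, causal convexity of $K_\mathsf{A}^\perp$), and you have identified the right compact set $\mathcal{K}$. However, the step you introduce with ``I would use the causal localisation of the scattering map'' --- the claim that $\Theta_\mathsf{B}(\openone\otimes B_j)$ commutes with every element localisable in $\mathcal{K}^\perp$ --- is exactly the nontrivial content of the cited theorem, and asserting it from the intuition that the observable couples to the interaction only along its backward light cone is not a proof; it has to be extracted from Einstein causality and the time-slice property of the coupled theory $\mathcal{C}$ together with the identification maps, before the Haag property can be applied. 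There is also a smaller loose end: to place a connected region containing $\mathcal{K}$ inside $K_\mathsf{A}^\perp$ you effectively need $\overline{N_\mathsf{B}}\subseteq K_\mathsf{A}^\perp$, which does not follow from $N_\mathsf{B}\subseteq K_\mathsf{A}^\perp$ and precompactness alone, since $K_\mathsf{A}^\perp$ is open; your parenthetical ``which may be arranged using precompactness'' glosses over this. If you are willing to cite Theorem~2 of~\cite{bostelmann2020impossible} as the paper does, your argument is complete and coincides with the paper's; as a self-contained proof it has a genuine gap at precisely that commutation/localisation claim.
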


\begin{proof}
We proceed similar {to} the proof of the previous lemma. {Upon writing} $C=\sum_j A_j \otimes B_j$, {the assumption on $N_\mathsf{B}$ allows us to apply} Theorem 2 in~\cite{bostelmann2020impossible}{, which} guarantees that 
\begin{equation}
    \begin{aligned}
    \qty(  \hat{\Theta}_{\sf A} \circ \hat{\Theta}_{\sf B} (\openone \otimes \openone \otimes B_j)) = (\Theta_\mathsf{B}(\openone \otimes B_j))\otimes_2 \openone.
    \end{aligned}
\end{equation}
Hence according to the observation
\begin{equation}
    \begin{aligned}
    (\omega \otimes \sigma_\mathsf{A} \otimes \sigma_\mathsf{B})(\hat{\Theta}_{\mathsf{A}} \circ \hat{\Theta}_{\mathsf{B}} (1\!\!\!\!1 \otimes C))&= \sum_j (\omega \otimes \sigma_\mathsf{A} \otimes \sigma_\mathsf{B})(\hat{\Theta}_\mathsf{A}\circ \hat{\Theta}_{\mathsf{B}} (1\!\!\!\!1 \otimes A_j \otimes B_j))\\
    &= \sum_j (\omega \otimes \sigma_\mathsf{A} \otimes \sigma_\mathsf{B})\qty((\Theta_\mathsf{B}(\openone \otimes B_j))\otimes_2 A_j)\\
    &= \sum_j \sigma_\mathsf{A}(A_j) \; (\omega \otimes \sigma_\mathsf{B})(\Theta_{\mathsf{B}} (1\!\!\!\!1 \otimes B_j))\\
    &= \sum_j  \sigma_{\mathsf{A}}(A_j) \sigma_{\mathsf{B}, \omega, \Theta_{\mathsf{B}}}(B_j),
    \end{aligned}
\end{equation}
which finishes the proof.
\end{proof}

\begin{proof}[Proof of Theorem~\ref{theo_regions_of_separability}.]
The first claim follows immediately by noting that for every $A \in \mathcal{P}((K_\mathsf{A} \cup K_\mathsf{B})^\perp)$ we have that $\Theta(1\!\!\!\!1 \otimes A) = 1\!\!\!\!1 \otimes A$, so on this subalgebra $\sigma' = \sigma$. The remaining two claims follow from the two previous lemmas
\end{proof}

\section{Proof of Lemma~\ref{lem_pure_det}}
\label{sec_appendix_pure_quasi-free_det}

We follow~\cite{petz1990invitation}. Let $\hat{\omega}$ be a pure state on the CCR-$C^*$-algebra of a single local mode. We can find a $2\times2$ matrix $D$ such that $s = D A$.

$D$ has a polar decomposition given by $D=J |D|$, where $J^2 = - \openone$, so $s = J |D| A$. It is a well-established fact that the corresponding state is pure if and only if $|D|=\openone${, see~\cite{petz1990invitation}}. We immediately see that $\det(A) = \det(|D|)^{-1}$, so for pure states we have that $\det(A) = 1$. Conversely, the positivity of the state implies that $\| |D| \| \leq 1$. Using that the norm of $|D|$ is given by the largest eigenvalue and the fact that the determinant is the product of all the eigenvalues, we can deduce that $\det(A){^{-1}}=\det(|D|)=1$ implies that $|D|=\openone$.

\section{Proof of Lemma~\ref{lem_entangled_bipartite_states}}
\label{sec_appendix_proof_lem_entanglement}

Let $W_\mathsf{A}$ and $W_\mathsf{B}$ be the Weyl operators associated to the two probes such that the associated CCR-$C^*$-algebras of the local modes are generated by $\{W_\mathsf{A}(x_1 f_1^\mathsf{A} + x_2 f_2^\mathsf{A}) | \vec{x} \in \mathbb{R}^2\}$ and $\{W_\mathsf{B}(y_1 f_1^\mathsf{B} + y_2 f_2^\mathsf{B}) | \vec{y} \in \mathbb{R}^2\}$ respectively. Let us abuse notation and write $W_\mathsf{A}(\vec{x}) \equiv W_\mathsf{A}(x_1 f_1^\mathsf{A} + x_2 f_2^\mathsf{A})$ and $W_\mathsf{B}(\vec{y})\equiv W_\mathsf{B}(y_1 f_1^\mathsf{B} + y_2 f_2^\mathsf{B})$. Similarly, write $\varphi_{\mathsf{A},\mathsf{B}}{\footnotesize \mqty(\vec{x}\\\vec{y})} \equiv \varphi_\mathsf{A}(x_1 f_1^\mathsf{A} + x_2 f_2^\mathsf{A}) \otimes \openone + \openone \otimes \varphi_\mathsf{B}(y_1 f_1^\mathsf{B} + y_2 f_2^\mathsf{B})$. Let $\omega$ be a quasi-free state on the two modes with covariance matrix $\gamma$ and one-point function $\chi = (\vec{\chi}_\mathsf{A},\vec{\chi}_\mathsf{B})^T \in \mathbb{R}^4${, where we identified $\mathbb{R}^4$ with its dual space via the inner product ``$\cdot$''.} Finally let $\hat{w}$ be the associated quasi-free state on the CCR-$C^*$-theory of the two modes. Then we show that the following are equivalent:
\emph{
\begin{enumerate}
    \item $\hat{\omega}$ is \emph{not} Verch-Werner ppt,
    \item Eq.~\eqref{eq_peres_horodecki} does \emph{not} hold,
    \item $\hat{\omega}$ is entangled,
    \item $\omega$ is entangled.
\end{enumerate}}

\paragraph{$(i) \iff (ii)$:} Let us first look at the Verch-Werner ppt condition. Let $p : \mathbb{R}^2 \to \mathbb{R}^2$ be an $\mathbb{R}$-linear, involutive operator with matrix representation also denoted by $p$ such that $p^T s p= -s$. It gives rise to a $\mathbb{C}$-linear map $\Pi$ from the $\mathsf{B}$-CCR-$C^*$-algebra into itself defined on the Weyl generators as $\Pi (W_\mathsf{B}(\vec{y})) := W_\mathsf{B}(p\vec{y})$. In particular, $\Pi$ preserves the unit. Then the following lemma holds (see also Proposition 3.2 in~\cite{VERCH_2005}).

\begin{lem}
A quasi-free state $\hat{\omega}$ on the CCR-$C^*$-algebra of two local modes fulfills the Verch-Werner ppt condition if and only if
$\underline{\hat{\omega}}(A \otimes B):= \hat{\omega}(A \otimes \Pi (B))$ defines a state {by continuous linear extension}.
\end{lem}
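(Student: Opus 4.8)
\emph{Proof proposal.} The plan is to recognise the two conditions as literal reformulations of one another, the only genuinely analytic point being that a positive functional on the Weyl $*$-algebra of the two modes automatically extends to a state.

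First I would record the algebraic nature of $\Pi$. Fixing the $\mathbb{R}$-linear involution $p$ of $\mathbb{R}^2$ with $p^T s p=-s$ (concretely $p=\mathrm{diag}(1,-1)$, so $p^T=p=p^{-1}$), the Weyl relations on the $\mathsf{B}$-mode together with $E_\mathsf{B}(p\vec y,p\vec y')=-E_\mathsf{B}(\vec y,\vec y')$ show that $\Pi$, defined by $\Pi(W_\mathsf{B}(\vec y))=W_\mathsf{B}(p\vec y)$ and extended $\mathbb{C}$-linearly, is a unital, $\mathbb{C}$-linear $*$-\emph{anti}-automorphism of the $\mathsf{B}$-mode CCR-$C^*$-algebra $\mathcal{B}$, with $\Pi^2=\mathrm{id}$ (from $p^2=\mathrm{id}$). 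Consequently $\underline{\hat\omega}$ is manifestly $\mathbb{C}$-linear and normalised on the Weyl $*$-algebra $\Delta$ of the two modes, so the whole content of the lemma is positivity of $\underline{\hat\omega}$ together with continuity.

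Next I would carry out the core computation. For an arbitrary element $C=\sum_{j=1}^N a_j\otimes b_j$ of the algebraic tensor product of the two mode algebras, expanding $C^\dagger C=\sum_{j,k}a_j^\dagger a_k\otimes b_j^\dagger b_k$, applying $\mathrm{id}\otimes\Pi$, using $\Pi(b_j^\dagger b_k)=\Pi(b_k)\Pi(b_j)^\dagger$ and the commutativity of $\mathcal{A}\otimes\openone$ with $\openone\otimes\mathcal{B}$, and finally relabelling $j\leftrightarrow k$, I obtain
\[
\begin{aligned}
\underline{\hat\omega}(C^\dagger C)&=\sum_{j,k=1}^N\hat\omega\big((a_j^\dagger a_k)(\Pi(b_k)\Pi(b_j)^\dagger)\big)\\
&=\sum_{j,k=1}^N\hat\omega\big((a_k^\dagger a_j)(\Pi(b_j)\Pi(b_k)^\dagger)\big).
\end{aligned}
\]
Setting $x_j:=a_j^\dagger\in\mathcal{A}$ and $y_j:=\Pi(b_j)^\dagger\in\mathcal{B}$ turns the right-hand side into exactly the Verch-Werner sum $\sum_{j,k}\hat\omega(x_kx_j^\dagger y_j^\dagger y_k)$; and since $a\mapsto a^\dagger$ on $\mathcal{A}$ and $b\mapsto\Pi(b)^\dagger$ on $\mathcal{B}$ are bijections (using $\Pi^{-1}=\Pi$), every finite family $(x_j)\subseteq\mathcal{A}$, $(y_j)\subseteq\mathcal{B}$ arises this way, from $C=\sum_j x_j^\dagger\otimes\Pi(y_j^\dagger)$. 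Hence $\underline{\hat\omega}$ is positive on $\Delta$ if and only if $\hat\omega$ is Verch-Werner ppt, which already settles the ``if'' direction of the lemma.

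For the ``only if'' direction it remains to upgrade positivity on $\Delta$ to a state on the CCR-$C^*$-algebra of the two modes, and I expect this to be the main obstacle, since positivity on a dense $*$-subalgebra does not in general force boundedness. Here it does: a positive linear functional $\phi$ on the unital Weyl $*$-algebra $\Delta$ has a $*$-algebraic GNS representation in which each Weyl generator goes to a unitary (because $W(-f)W(f)=\openone$), hence to a bounded operator; this representation therefore extends to the Hilbert-space completion, and since the CCR-$C^*$-norm on $\Delta$ is the supremum over all such representations (uniqueness of the Weyl $C^*$-norm, Sec.~1.6 in~\cite{Baer2009}), one gets $|\phi(C)|\le\phi(\openone)\,\|C\|$. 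Thus $\phi=\underline{\hat\omega}$ extends continuously and, being normalised, is a state. (An alternative, which however really belongs to the subsequent step of the appendix, is to note that $\underline{\hat\omega}$ is itself of quasi-free form with the twisted covariance matrix $(\openone\oplus p)^T\gamma(\openone\oplus p)$ and to invoke the uncertainty characterisation of quasi-free states.)
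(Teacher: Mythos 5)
Your proposal is correct and follows essentially the same route as the paper's proof: establish that $\Pi$ is a unital $*$-anti-automorphism, identify the Verch-Werner sums with $\underline{\hat{\omega}}(C^\dagger C)$ for $C$ in the algebraic tensor product via the bijection $(x_j,y_j)\mapsto(a_j^\dagger,\Pi(b_j)^\dagger)$, and finish with a density/continuity argument. The only difference is that you spell out the continuity step (boundedness of positive functionals on the Weyl $*$-algebra via unitarity of the generators in the GNS representation), which the paper leaves as an unelaborated ``density argument''.
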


\begin{proof}
As $\underline{\hat{\omega}}$ is clearly linear and normalised for every state $\hat{\omega}$, we prove the equivalence of the ppt property of $\hat{\omega}$ and the positivity of $\underline{\hat{\omega}}$. First we note that $\Pi(W_\mathsf{B}(\vec{y}_1) W_\mathsf{B}(\vec{y}_2)) = e^{-\frac{\mathrm{i}}{2} \vec{y} \cdot s \vec{y}} \Pi(W_\mathsf{B}(\vec{y}_1+\vec{y}_2)) = e^{\frac{\mathrm{i}}{2} p\vec{y} \cdot s p\vec{y}} W_\mathsf{B}(p \vec{y}_1 + p \vec{y}_2) = W_\mathsf{B}(p \vec{y}_2) W_\mathsf{B}(p \vec{y}_2)= \Pi(W_\mathsf{B}(\vec{y}_2)) \Pi( W_\mathsf{B}(\vec{y}_1))$. It is also immediate that $\Pi$ commutes with $\cdot^\dagger$. Then (cf.~Eq.~\eqref{eq_Verch-Werner_ppt})
\begin{equation}
    \begin{aligned}
    &\sum\limits_{\alpha, \beta} \hat{\omega} (A_\beta A_\alpha^\dagger \otimes \Pi(B_\alpha)^\dagger \Pi(B_\beta)) = \sum\limits_{\alpha, \beta} \hat{\omega} (A_\beta A_\alpha^\dagger \otimes \Pi(B_\beta B_\alpha^\dagger))\\
    &=\sum\limits_{\alpha, \beta} \underline{\hat{\omega}}(A_\beta A_\alpha^\dagger \otimes B_\beta B_\alpha^\dagger)= \underline{\hat{\omega}}(X^\dagger X),    
    \end{aligned}
\end{equation}
where $X := \sum\limits_{\alpha} A_\alpha^\dagger \otimes B_\alpha^\dagger$ and where $\alpha, \beta$ each run through the same \emph{finite} index set. The fact that every $X$ in the algebraic tensor product can be written in this form together with a density argument finishes the proof. 
\end{proof}

An example for $p$ is the map derived from $\tilde{p}(y_1 g_1 + y_2 g_2):=y_1 g_1 - y_2 g_2$, i.e, $p= \mathrm{diag}(1, -1)$. In this case, the condition that $\hat{\omega}$ has the Verch-Werner ppt property is equivalent to
\begin{equation}
    \begin{aligned}
    \Lambda^T \gamma \Lambda + \mathrm{i} \qty(s\oplus s) \geq 0,
    \end{aligned}
    \label{eq_lambda_gamma_lambda}
\end{equation}
where $\Lambda = \openone \oplus p =\mathrm{diag}(1,1,1,-1)$, or likewise
\begin{equation}
    \begin{aligned}
    \gamma + \mathrm{i} \qty(s\oplus (-s)) \geq 0,
    \end{aligned}
\end{equation}
which is Eq.~\eqref{eq_peres_horodecki}. Note in particular that $p^T=p$ and $\Lambda^T=\Lambda$.

\paragraph{$(ii) \iff (iii)$ \& $(iv) \implies (ii)$:} Simon showed in~\cite{Simon_2000}, that Eq.~\eqref{eq_peres_horodecki} is equivalent to $\hat{\omega}$ being classically correlated{, which establishes $(ii) \iff (iii)$}. The essence of Simon's proof is that if Eq.~\eqref{eq_peres_horodecki} holds, then the quasi-free state $\omega$ is ``locally related'' to a quasi-free state $\omega_0$ with covariance matrix $\gamma_0$ which fulfills 
\begin{equation}
    \begin{aligned}
    \gamma_0 - \openone \geq 0.
    \end{aligned}
    \label{eq_covariance_separable}
\end{equation}
By ``locally related'' we mean that there exist $*$-automorphisms $\Gamma_\mathsf{A}$ and $\Gamma_\mathsf{B}$ acting on the $\mathsf{A}$ and $\mathsf{B}$ mode respectively such that $\omega_0 := \omega \circ (\Gamma_\mathsf{A} \otimes \Gamma_\mathsf{B})$. In particular $\omega$ is entangled if and only if $\omega_0$ is. A similar statement holds for $\hat{\omega}_0 := \hat{\omega} \circ (\hat{\Gamma}_\mathsf{A} \otimes \hat{\Gamma}_\mathsf{B})$ for $C^*$-automorphisms $\hat{\Gamma}_\mathsf{A}, \hat{\Gamma}_\mathsf{B}$. {Based on this, $(iv) \implies (ii)$ follows from the subsequent lemma, which} shows that $\omega_0$ is classically correlated.

\begin{lem}
Let $\hat{\omega}_0$ be a quasi-free state on the combination of the CCR-$C^*$-theory of two modes with covariance matrix $\gamma_0$ that fulfills Eq.~\eqref{eq_covariance_separable} and one-point function $\chi_0=(\vec{\chi}_\mathsf{A}, \vec{\chi}_\mathsf{B})^T$, then $\omega_0$ is classically correlated (and so is $\hat{\omega}_0$).
\end{lem}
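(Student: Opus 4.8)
The plan is to realise $\hat{\omega}_0$ explicitly as a Gaussian mixture of product coherent states. First I would recall that a quasi-free state on the CCR-$C^*$-algebra of the two modes is completely determined by its values on Weyl generators, $\hat{\omega}_0(W(\vec{u})) = \exp(\mathrm{i}\, \chi_0 \cdot \vec{u} - \frac{1}{4} \vec{u} \cdot \gamma_0 \vec{u})$ for $\vec{u} \in \mathbb{R}^4$, and that for any $\vec{\xi} = (\vec{\xi}_\mathsf{A}, \vec{\xi}_\mathsf{B})^T \in \mathbb{R}^4$ the quasi-free state $\omega_{\vec{\xi}}$ with one-point function $\vec{\xi}$ and covariance matrix $\openone = \openone_2 \oplus \openone_2$ is a \emph{product} coherent state: it factorises as $\omega_{\vec{\xi}_\mathsf{A}} \otimes \omega_{\vec{\xi}_\mathsf{B}}$ in the sense of Definition~\ref{def_generalised_bipartite_product_state}, and each factor is pure by Lemma~\ref{lem_pure_det} (since $\det \openone_2 = 1$). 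Since $\gamma_0 - \openone \geq 0$ by Eq.~\eqref{eq_covariance_separable}, the matrix $\delta := \frac{1}{2}(\gamma_0 - \openone)$ is positive semidefinite, so there is a (possibly degenerate) centred Gaussian probability measure on $\mathbb{R}^4$ with covariance $\delta$; let $\mu$ denote its translate by $\chi_0$.

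Second, I would check the identity $\hat{\omega}_0 = \int_{\mathbb{R}^4} \omega_{\vec{\xi}} \, \mathrm{d}\mu(\vec{\xi})$ directly on Weyl generators via the elementary Gaussian integral $\int_{\mathbb{R}^4} e^{\mathrm{i} \vec{\xi} \cdot \vec{u}} \, \mathrm{d}\mu(\vec{\xi}) = e^{\mathrm{i} \chi_0 \cdot \vec{u} - \frac{1}{2} \vec{u} \cdot \delta \vec{u}}$, which together with $\omega_{\vec{\xi}}(W(\vec{u})) = e^{\mathrm{i} \vec{\xi} \cdot \vec{u} - \frac{1}{4} \vec{u} \cdot \vec{u}}$ yields precisely $e^{\mathrm{i} \chi_0 \cdot \vec{u} - \frac{1}{4} \vec{u} \cdot \gamma_0 \vec{u}}$, because $\frac{1}{2}\delta + \frac{1}{4}\openone = \frac{1}{4}\gamma_0$. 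As the span of Weyl generators is norm-dense in the CCR-$C^*$-algebra of the two modes and $\vec{\xi} \mapsto \omega_{\vec{\xi}}$ is weak-$*$ continuous and uniformly norm-bounded, the integral defines a genuine state, namely the barycentre of the pushforward of $\mu$ to the (weak-$*$ compact, convex) state space. By the standard barycentre argument this state lies in the weak-$*$ closed convex hull of $\{\omega_{\vec{\xi}} : \vec{\xi} \in \mathbb{R}^4\}$, hence is a pointwise limit of finite convex combinations of product states, i.e.\ classically correlated. Finally, for the field-algebra version $\omega_0$ on the polynomial field-$*$-algebra $\mathcal{F}$ of the two modes I would run the same computation at the level of the generating identity~\eqref{eq_n-pt-f_from_exp}, obtaining $\omega_0 = \int \omega_{\vec{\xi}} \, \mathrm{d}\mu(\vec{\xi})$ with $\omega_{\vec{\xi}}$ the corresponding coherent product states on $\mathcal{F}$, and concluding classical correlation in the same way, with pointwise convergence now checked on each fixed monomial in the smeared fields.

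The main obstacle I anticipate is the degenerate case $\det \delta = 0$, where the Gaussian measure is supported on a proper affine subspace of $\mathbb{R}^4$: one either carries the degenerate Gaussian through directly (all displayed formulas remain valid verbatim) or, if one insists on non-degenerate measures, replaces $\gamma_0$ by $\gamma_0 + \varepsilon \openone$, applies the non-degenerate case, and passes to the limit $\varepsilon \downarrow 0$ using the weak-$*$ continuity of quasi-free states in their covariance data and the fact that the classically correlated states form a weak-$*$ closed set. A secondary, purely bookkeeping point is to make the ``integral of states is a weak-$*$ limit of convex combinations'' step precise, but this is exactly the barycentre statement and requires no substantial work; compare the analogous reasoning in~\cite{petz1990invitation, Simon_2000}.
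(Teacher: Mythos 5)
Your proposal is correct and follows essentially the same route as the paper: both realise $\hat{\omega}_0$ via the Glauber--Sudarshan $P$-representation as a Gaussian average of product coherent states with covariance $\openone$, verified by the same Fourier--Gaussian identity on Weyl generators, and then pass to the field-$*$-algebra through the generating function~\eqref{eq_n-pt-f_from_exp}. The only (immaterial) differences are that you absorb the displacement $\chi_0$ into the measure rather than into the coherent states, invoke a barycentre argument where the paper uses explicit Riemann sums, and offer an $\varepsilon$-regularisation for the degenerate case where the paper uses a lower-dimensional integral.
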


Remark: The proof of this lemma is based on the \emph{Glauber–Sudarshan P-representation}, see their original works~\cite{Glauber1963,Sudarshan1963} and also~Sec.~V.~in~\cite{Englert2002}.

\begin{proof}
Let us define the \emph{coherent} states
\begin{equation}
    \begin{aligned}
    \hat{\eta}^\mathsf{A}_{\vec{\alpha}}(W_\mathsf{A}(\vec{x}))&:= e^{-\frac{1}{4} \vec{x} \cdot \vec{x} + \mathrm{i} \vec{\alpha} \cdot \vec{x} + \mathrm{i} \vec{\chi}_\mathsf{A} \cdot \vec{x}}, \qquad \hat{\eta}^\mathsf{B}_{\vec{\beta}}(W_\mathsf{B}(\vec{y}))&:= e^{-\frac{1}{4} \vec{y} \cdot \vec{y} + \mathrm{i} \vec{\beta} \cdot \vec{y} +  \mathrm{i} \vec{\chi}_\mathsf{B} \cdot \vec{y}},
    \end{aligned}
\end{equation}
which are pure quasi-free states with non-vanishing one-point function.

We assume that Eq.~\eqref{eq_covariance_separable} holds, and discuss the case where $\gamma_0 - \openone$ is positive definite. Then we can define the non-negative Gaussian function $P(\vec{\alpha}, \vec{\beta})$ for the symmetric, positive definite matrix $(\frac{1}{2}\gamma_0 - \frac{1}{2}\openone)^{-1}$ via
\begin{equation}
    \begin{aligned}
    P(\vec{\alpha}, \vec{\beta})&:= \frac{1}{\pi^2 \sqrt{\det(\gamma_0 - \openone)}} e^{- \frac{1}{2} {\tiny \mqty(\vec{\alpha} \\ \vec{\beta})} \cdot  \qty(\frac{1}{2}\gamma_0 -\frac{1}{2} \openone)^{-1} {\tiny\mqty(\vec{\alpha} \\ \vec{\beta})}}. 
    \end{aligned}
\end{equation}
It is then easy to see that for all $\vec{x}, \vec{y} \in \mathbb{R}^2$
\begin{equation}
    \begin{aligned}
    &\int\limits_{\mathbb{R}^2} \int\limits_{\mathbb{R}^2} P(\vec{\alpha}, \vec{\beta}) \hat{\eta}^\mathsf{A}_{\vec{\alpha}}(W_\mathsf{A}(\vec{x})) \; \hat{\eta}^\mathsf{B}_{\vec{\beta}}(W_\mathsf{B}(\vec{y})) \mathrm{d} \vec{\alpha} \;  \mathrm{d} \vec{\beta}\\
    &=\frac{1}{\pi^2 \sqrt{\det(\gamma_0 - \openone)}} e^{\mathrm{i} \chi_0 \cdot {\tiny \mqty(\vec{x} \\ \vec{y})} - \frac{1}{4} {\tiny \mqty(\vec{x} \\ \vec{y})} \cdot {\tiny\mqty(\vec{x} \\ \vec{y})}} \int\limits_{\mathbb{R}^4} e^{- \frac{1}{2} \xi \cdot  \qty(\frac{1}{2}\gamma_0 - \frac{1}{2} \openone)^{-1} \xi+ \mathrm{i} \xi \cdot {\tiny \mqty(\vec{x} \\ \vec{y})}} \; \mathrm{d} \xi\\
    &=\frac{1}{\pi^2 \sqrt{\det(\gamma_0 - \openone)}}
    \frac{(2 \pi)^2}{\sqrt{\det((\frac{1}{2}\gamma_0 - \frac{1}{2}\openone)^{-1})}} e^{\mathrm{i} \chi_0 \cdot {\tiny \mqty(\vec{x} \\ \vec{y})}- \frac{1}{4} {\tiny \mqty(\vec{x} \\ \vec{y})} \cdot {\tiny\mqty(\vec{x} \\ \vec{y})}} e^{- \frac{1}{2} {\tiny\mqty(\vec{x} \\ \vec{y})} \cdot  \qty(\frac{1}{2}\gamma_0 - \frac{1}{2}\openone) {\tiny\mqty(\vec{x} \\ \vec{y})}}\\
    &=  e^{\mathrm{i} \chi_0 \cdot {\tiny \mqty(\vec{x} \\ \vec{y})} - \frac{1}{4} {\tiny\mqty(\vec{x} \\ \vec{y})} \cdot  \gamma_0 {\tiny\mqty(\vec{x} \\ \vec{y})}}=\hat{\omega}_0(W_\mathsf{A}(\vec{x}) \otimes W_\mathsf{B}(\vec{y})),
    \end{aligned}
\end{equation}
where we used $\xi := (\vec{\alpha},\vec{\beta})^T$.
Since the integrand is continuous, we can write the integral as a limit of Riemann sums, each of which corresponds to a convex combination (as $P\geq 0$) of coherent states evaluated on a tensor product of Weyl generators. This shows that $\hat{\omega}_0$ is a pointwise limit of convex combinations of products of coherent states and hence classically correlated. 

By using the corresponding  coherent states on the {polynomial} field-$*$-algebra, the statement holds for $\omega_0$ as well. To see this, use that (according to Eq.~\eqref{eq_n-pt-f_from_exp})
\begin{equation}
\begin{aligned}
    \omega_0\qty(\varphi_{\mathsf{A},\mathsf{B}}{\footnotesize \mqty(\vec{x}\\\vec{y})}^n) &= (- \mathrm{i})^n \qty[\dv[n]{a} e^{\mathrm{i} a \chi_0 \cdot {\tiny \mqty(\vec{x} \\ \vec{y})} - a^2 \frac{1}{4} {\tiny\mqty(\vec{x} \\ \vec{y})} \cdot  \gamma_0 {\tiny\mqty(\vec{x} \\ \vec{y})}}]_{a=0}\\
    &= (- \mathrm{i})^n \qty[\dv[n]{a} \int\limits_{\mathbb{R}^2} \int\limits_{\mathbb{R}^2} P(\vec{\alpha}, \vec{\beta}) e^{-a^2 \frac{1}{4} {\tiny \mqty(\vec{x} \\ \vec{y})} \cdot {\tiny \mqty(\vec{x} \\ \vec{y})} + \mathrm{i} a \qty({\tiny \mqty(\vec{\alpha} \\ \vec{\beta})}  + {\tiny \mqty(\vec{\chi}_\mathsf{A} \\ \vec{\chi}_\mathsf{B})}) \cdot {\tiny \mqty(\vec{x} \\ \vec{y})}} \mathrm{d} \vec{\alpha} \;  \mathrm{d} \vec{\beta}]_{a=0},
\end{aligned}
\end{equation}
change the order of integration and differentiation and approximate the integral in a similar fashion as before.

The case of merely positive \emph{semi}-definite matrix $\gamma_0 - \openone$, i.e., of rank $< 4$, can be treated similarly by using a lower-dimensional integral (cf.~Sec.~V.~in~\cite{Englert2002}).
\end{proof}

\paragraph{$(ii) \implies (iv)$:} The proof of Lemma~\ref{lem_entangled_bipartite_states} is completed by noting that every classically correlated state $\omega$ on the combination of two modes fulfills Eq.~\eqref{eq_peres_horodecki}. 

To see this let us assume that $\omega$ is the pointwise limit of $\omega_n$ of the form $\omega_n = \sum_j \lambda_j \omega_{\mathsf{A},j} \otimes \omega_{\mathsf{B},j}$, where $j$ runs over some finite index set. Let $\gamma$ and $\gamma_n$ be the covariance matrices of $\omega$ and $\omega_n$ respectively and let $\vec{\chi}_{\mathsf{B},j}$ and $B_j$ be the one-point function and the covariance matrix of (the not necessarily quasi-free) state $\omega_{\mathsf{B},j}$ respectively. Then we note that $p \vec{\chi}_{\mathsf{B},j}$ and $p^T B_jp$ define a quasi-free state $\tilde{\omega}_{\mathsf{B},j}$. Positivity of $\tilde{\omega}_{\mathsf{B},j}$ can be seen as follows: the covariance matrix $B_j$ of \emph{every} (not necessarily quasi-free) state $\omega_\mathsf{B}$ fulfills $B_j + \mathrm{i} s \geq 0$. By explicit computation this is found to be equivalent to $B_j - \mathrm{i} s \geq 0$, which is in turn equivalent to $p^T B_jp + \mathrm{i} s \geq 0$. In particular,  $\tilde{\omega}_n := \sum_j \lambda_j \omega_{\mathsf{A},j} \otimes \tilde{\omega}_{\mathsf{B},j}$ is a state with covariance matrix $\tilde{\gamma}_n$ that hence fulfills $\tilde{\gamma}_n + \mathrm{i} \Omega \geq 0$. We note that $\tilde{\gamma}_n = \Lambda^T \gamma_n \Lambda$ and, since $\omega_n \to \omega$ shows that $\gamma_n \to \gamma$, also $\tilde{\gamma}_n \to \Lambda^T \gamma \Lambda$. Finally, as the limit of a convergent sequence of positive semi-definite matrices is positive semi-definite, Eq.~\eqref{eq_lambda_gamma_lambda} holds, which is equivalent to Eq.~\eqref{eq_peres_horodecki}. 

\newpage
\newcommand{\newblock}{}
\providecommand{\noopsort}[1]{}\providecommand{\singleletter}[1]{#1}%
\end{document}